\newcommand{\COMMENTED}[1]{{}}
\newcommand{\junk}[1]{\COMMENTED{#1}}
\newcommand{\hide}[1]{{}}
\title{\textbf{Graph Sparsification for Derandomizing Massively Parallel Computation with Low Space}}
\author{\textbf{Artur Czumaj}
	\thanks{Department of Computer Science and Centre for Discrete Mathematics and its Applications (DIMAP), University of Warwick. Email: A.Czumaj@warwick.ac.uk. Research partially supported by the Centre for Discrete Mathematics and its Applications (DIMAP), by a Weizmann-UK Making Connections Grant, by IBM Faculty Award, and by EPSRC award EP/N011163/1.}
	\and
	\textbf{Peter Davies}
	\thanks{Institute of Science and Technology Austria (IST Austria). Email: peter.davies@ist.ac.at. Work partially done at the Department of Computer Science and Centre for Discrete Mathematics and its Applications (DIMAP), University of Warwick.  Research partially supported by the European Union's Horizon 2020 research and innovation programme under the Marie Sk\l odowska-Curie grant agreement No 754411, the Centre for Discrete Mathematics and its Applications, a Weizmann-UK Making Connections Grant, and EPSRC award EP/N011163/1.}
	\and
	\textbf{Merav Parter}
	\thanks{Weizmann Institute of Science, Rehovot, Israel. Email: merav.parter@weizmann.ac.il. Research partially supported by a Weizmann-UK Making Connections Grant.}
}
\newtheorem{theorem}{Theorem}
\newtheorem{corollary}[theorem]{Corollary}
\newtheorem{lemma}[theorem]{Lemma}
\newtheorem{definition}[theorem]{Definition}
\newcommand{\mydriver}{hypertex}
\renewcommand{\mydriver}{pdftex}
\newcommand{\congest}{\textsf{CONGEST}\xspace}
\newcommand{\congc}{\textsf{CONGESTED CLIQUE}\xspace}
\newcommand{\local}{\textsf{LOCAL}\xspace}
\newcommand{\MPC}{\textsf{MPC}\xspace}
\newcommand{\cj}{\ensuremath{c}\xspace}
\newcommand{\Exp}[1]{\mathbf{E}\left[#1\right]}
\newcommand{\Prob}[1]{\mathbf{Pr}\left[#1\right]}	
\newcommand{\nat}{\ensuremath{\mathbb{N}}}
\def\epsilon{\ensuremath{\varepsilon}}
\newcommand{\eps}{\ensuremath{\epsilon}}
\newcommand{\machines}{\ensuremath{M}}
\newcommand{\spac}{\ensuremath{S}}
\newcommand{\CE}{\ensuremath{\mathcal{E^*}}}
\newcommand{\E}{\ensuremath{\mathcal{E}}}
\newcommand{\A}{\ensuremath{\mathcal{A}}}
\newcommand{\B}{\ensuremath{\mathcal{B}}}
\newcommand{\J}{\ensuremath{\mathcal{Q}}}
\newcommand{\IS}{\ensuremath{\mathcal{I}}}
\newcommand{\MIS}{\ensuremath{\textsf{MIS}}}
	\author{\textbf{Artur Czumaj}
		\thanks{Department of Computer Science and Centre for Discrete Mathematics and its Applications (DIMAP), University of Warwick. Email: A.Czumaj@warwick.ac.uk. Research partially supported by the Centre for Discrete Mathematics and its Applications (DIMAP), by a Weizmann-UK Making Connections Grant, by IBM Faculty Award, and by EPSRC award EP/N011163/1.}
		\and
		\textbf{Peter Davies}
		\thanks{Institute of Science and Technology Austria (IST Austria). Email: peter.davies@ist.ac.at. Work partially done at the Department of Computer Science and Centre for Discrete Mathematics and its Applications (DIMAP), University of Warwick.  Research partially supported by the European Union's Horizon 2020 research and innovation programme under the Marie Sk\l odowska-Curie grant agreement No 754411, the Centre for Discrete Mathematics and its Applications, a Weizmann-UK Making Connections Grant, and EPSRC award EP/N011163/1.}
		\and
		\textbf{Merav Parter}
		\thanks{Weizmann Institute of Science, Rehovot, Israel. Email: merav.parter@weizmann.ac.il. Research partially supported by a Weizmann-UK Making Connections Grant.}
}}
\begin{document}

\begin{titlepage}

\maketitle
\begin{abstract}
The Massively Parallel Computation (\MPC) model is an emerging model which distills core  aspects of distributed and parallel computation. It has been developed as a tool to solve (typically graph) problems in systems where the input is distributed over many machines with limited space.

Recent work has focused on the regime in which machines have sublinear (in $n$, the number of nodes in the input graph) memory, with randomized algorithms presented for fundamental graph problems of Maximal Matching and Maximal Independent Set. However, there have been no prior corresponding \emph{deterministic} algorithms.

A major challenge underlying the sublinear space setting is that the local space of each machine might be too small to store all the edges incident to a single node. This poses a considerable obstacle compared to the classical models in which each node is assumed to know and have easy access to its incident edges. To overcome this barrier we introduce a new \emph{graph sparsification technique} that \emph{deterministically} computes a low-degree subgraph with additional desired properties. The degree of the nodes in this subgraph is small in the sense that the edges of each node can be now stored on a single machine. This low-degree subgraph also has the property that solving the problem on this subgraph provides \emph{significant} global progress, i.e., progress towards solving the problem for the original input graph.

Using this framework to derandomize the well-known randomized algorithm of Luby [SICOMP'86], we obtain $O(\log \Delta+\log\log n)$-round \emph{deterministic} MPC algorithms for solving the fundamental problems of \emph{Maximal Matching} and \emph{Maximal Independent Set} with $O(n^{\epsilon})$ space on each machine for any constant $\epsilon > 0$. Based on the recent work of Ghaffari et al. [FOCS'18], this additive $O(\log\log n)$ factor is \emph{conditionally} essential. These algorithms can also be shown to run in $O(\log \Delta)$ rounds in the closely related model of \congc, improving upon the state-of-the-art bound of $O(\log^2 \Delta)$ rounds by Censor-Hillel et al. [DISC'17].
\end{abstract}

\end{titlepage}

\section{Introduction}
\label{sec:Intro}


The last few years have seen an increasing interest in the design of parallel algorithms. This has been largely caused by the successes of a number of massively parallel computation frameworks, such as MapReduce~\cite{DG04,DG08}, Hadoop~\cite{White12}, Dryad~\cite{IBYBF07}, or Spark~\cite{ZCFSS10}, which resulted in the need of active research for understanding the computational power of such systems. The \emph{Massively Parallel Computations (\MPC)} model, first introduced by Karloff et al.\ \cite{KSV10} have became the standard theoretical model of algorithmic study, as it provides a clean abstraction of these frameworks. 

The \MPC model shares many similarities to earlier models of parallel computation, for example with the PRAM model; indeed, it was quickly observed that it is easy to simulate a single step of PRAM in a constant number of rounds on \MPC~\cite{GSZ11,KSV10}, implying that a vast body of work on PRAM algorithms naturally translates to the \MPC model.
However, the fact that the \MPC model allows for a lot of local computation (in principle, unbounded) enabled it to capture a more ``coarse-grained'' and meaningful aspect of parallelism. Recent works have brought a number of new algorithms for fundamental graph combinatorial and optimization problems that demonstrated that in many situations the \MPC model can be significantly more powerful than PRAM. And so, for example, in a sequence of papers we have seen that the fundamental problems of connectivity (see, e.g., \cite{ASSWZ18,ASW19}), matching, maximal independent set, vertex cover, coloring, etc (see, e.g., \cite{ABBMS17,ASZ20,BBDFHKU19,BDELM19,BDELMS19,BHH19,chang2019complexity,CLMMOS18,GGKMR18,GU19,LMOS20,LMSV11}) can be solved in the \MPC model significantly faster than on the PRAM. However, the common feature of most of these results is that they were relying on randomized algorithms, and very little research has been done to study deterministic algorithms.

The main theme of this paper is to \emph{explore the power of the \MPC model in the context of deterministic algorithms}. In particular, we want to understand whether the \MPC model allows faster deterministic algorithms than in the PRAM-like models, in a similar way as it has demonstrated to do in the setting of randomized computations.

We consider two corner-stone problems of local computation:  \emph{maximal matching} and \emph{maximal independent set (MIS)}. These problems are arguably among the most fundamental graph problems in parallel and distributed computing with numerous 
applications. The study of these problems can be traced back to PRAM algorithms of the 1980s \cite{ABI86,II86,KW85,Luby86} and they have been considered as benchmark problems in various computational models since then. In particular, these problems have been central in our understanding of derandomization techniques. Luby \cite{Luby86}, and independently Alon et al.\ \cite{ABI86}, have been the first to present a generic transformation of parallel algorithms for maximal matching and MIS, to obtain efficient deterministic algorithms for these problems in the PRAM model. For example, Luby \cite{Luby86} showed that his randomized MIS $O(\log n)$-time algorithm can be derandomized on PRAM in $O(\log^3 n \log\log n)$ time. The bound was later improved to $O(\log^3 n)$ time \cite{GS89}, $O(\log^{2.5} n)$ time \cite{Han96}, and then $\widetilde{O}(\log^2 n)$ time \cite{Har19}.

\paragraph{The \MPC model.}

The \emph{Massively Parallel Computations (\MPC)} model was first introduced by  Karloff et al.\ \cite{KSV10} and later refined in \cite{ANOY14,BKS13,GSZ11}. In the \MPC model, there are $\machines$ machines and each of them has $\spac$ words of space. Initially, each machine receives its share of the input. In our case, the input is a collection $V$ of nodes and $E$ of edges and each machine receives approximately $\frac{n+m}{\machines}$ of them (divided arbitrarily), where $|V| = n$ and $|E| = m$.

The computation proceeds in synchronous \emph{rounds} in which each machine processes its local data and performs an arbitrary local computation on its data without communicating with other machines. At the end of each round, machines exchange messages. Each message is sent only to a single machine specified by the machine that is sending the message. All messages sent and received by each machine in each round have to fit into the machine's local memory. Hence, their total length is bounded by $\spac$. This, in particular, implies that the total communication of the \MPC model is bounded by $\machines \cdot \spac$ in each round. The messages are processed by recipients in the next round.
At the end of the computation, machines collectively output the solution. The data output by each machine has to fit in its local memory. Hence again, each machine can output at most $\spac$ words.


\paragraph{Local memory, range of values for $\spac$ and $\machines$, and fully scalable algorithms.}
If the input is of size $N$, one usually wants $\spac$ to be sublinear in $N$, and the total space across all the machines to be at least $N$ (in order for the input to fit onto the machines) and ideally not much larger. Formally, one usually considers $\spac \in \Theta\left(N^{\eps}\right)$, for some $\eps > 0$. Optimally, one would want to design \emph{fully scalable algorithms}, which work for \emph{any} positive value of $\eps$ (see, e.g., \cite{ASSWZ18,GU19,Onak18}), though most of the earlier works have been focusing on graph algorithms whose space $\spac$ has been close to the number of edges of the graph (see, e.g., \cite{LMSV11}, where $\spac = \Theta\left(n^{1+\eps}\right)$), or close to the number of nodes of the graph (see, e.g., \cite{ABBMS17,CLMMOS18,GGKMR18}, where $\spac = \widetilde{\Theta}(n)$).

In this paper, the focus is on graph algorithms. If $n$ is the number of nodes in the graph, the input size can be as large as $\Theta\left(n^2\right)$. Our deterministic parallel algorithms are fully scalable, i.e., for any constant $\eps > 0$ require $\spac = \Theta\left(n^{\eps}\right)$ space per machine, which is polynomially less than the size of the input. 

\paragraph{Known bounds.}
For many graph problems, including MIS and maximal matching, fully scalable \emph{randomized} $O(\log n)$ round $n^{\Omega(1)}$ space \MPC algorithms can be achieved by simulating PRAM algorithms \cite{ABI86,II86,Luby86}. These bounds (still in the randomized case) have been improved only very recently and only in some settings. For fully scalable algorithms, we know only of a \emph{randomized} algorithm due to Ghaffari and Uitto \cite{GU19} working in $\widetilde{O}(\sqrt{\log\Delta})$ rounds for maximal matching and MIS, where $\Delta$ is the maximum degree. Better bounds are known for maximal matching algorithms using significantly more memory: Lattanzi et al.\ \cite{LMSV11} gave an $O(1/\eps)$ rounds randomized algorithm using $O(n^{1+\eps})$ space per machine, and Behnezhad et al.\ \cite{BHH19} presented an $O(\log\log\Delta + \log\log\log n)$ rounds randomized algorithm in $n/2^{\Omega(\sqrt{\log n})}$ space.
Further, Ghaffari et al.\ \cite{GKU19} gave conditional evidence that no $o(\log\log n)$ round fully scalable \MPC algorithm can find a maximal matching, or MIS (or a constant approximation of minimum vertex cover or of maximum matching).

Unfortunately, much less is known about \emph{deterministic} \MPC algorithms for maximal matching and MIS. Except some parts of the early work in \cite{GSZ11} (cf. Lemma \ref{lem:comm}), we are not aware of any previous algorithms designed specifically for the \MPC model. One can use a simulation of PRAM algorithms (so long as they use $\tilde O(m)$ total space) to obtain fully scalable deterministic algorithms for maximal matching and MIS on \MPC, and their number of rounds would be asymptotically the same; to our knowledge, the fastest deterministic PRAM algorithms require $O(\log^{2.5}n)$ \cite{Han96} rounds for maximal matching, and $\widetilde{O}(\log^2 n)$ rounds for MIS \cite{Har19}. If one can afford to use linear memory per machine, $\spac = O(n)$, then the recent deterministic \congc algorithms for MIS by Censor-Hillel et al.\ \cite{CPS17}, directly give an $O(\log n \log \Delta)$-round deterministic \MPC algorithm for MIS \junk{(see also Appendix~\ref{subsubsec:comp-Censor-Hillel-et-al})}.

There have been some related works on derandomization in the \local, \congest, and \congc distributed models (cf. \cite{BKM19,DKM19,GHK18,GK18,Harris18,Parter18,PY18}). The \MPC setting in the low space regime brings along crucial challenges that distinguishes it from the all previous computational models in which derandomization has been studied. The inability of a machine to view the entire neighborhood a node requires novel derandomization paradigms, which is why we developed deterministic graph sparsification. We note that graph sparsification has been shown to be useful before in the context of low-space \emph{randomized} \MPC algorithms (e.g.,~\cite{GU19}).



\subsection{New results}

We demonstrate the power of the deterministic algorithms in the \MPC model on the example of two fundamental optimization problems: finding a maximal matching and finding an MIS.


\begin{theorem}[Maximal Matching and MIS]
\label{thm:MM-MIS-Delta}
For any constant $\eps>0$, \textbf{maximal matching} and \textbf{MIS} can be found deterministically in the \MPC model in $O(\log\Delta + \log\log n)$ rounds, using $O(n^{\eps})$ space per machine and $O(m+n^{1+\eps})$ total space.
\end{theorem}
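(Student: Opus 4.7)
The plan is to derandomize Luby's classical randomized algorithm for maximal matching and MIS, which runs in $O(\log \Delta)$ phases each of which removes a constant fraction of the remaining edges in expectation. Recall that a Luby phase marks every vertex $v$ independently with probability $\Theta(1/d_v)$ and selects the marked vertices without any marked higher-priority neighbor; in expectation a constant fraction of the edges of $G$ become incident to a removed vertex. To port this to the low-space \MPC setting, I would implement each phase by (i) deterministically building a low-degree edge-sparsifier of $G$ that preserves the expected Luby progress, (ii) simulating the Luby choice on this sparsifier where incidences now fit on a single machine, and (iii) derandomizing the random marks by the method of conditional expectations over a $k$-wise independent family.

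\textbf{Sparsification.} For each phase I would deterministically compute a subgraph $H \subseteq G$ of maximum degree $O(n^\epsilon)$ such that, for any assignment of marks drawn from our $k$-wise independent family, the expected number of $G$-edges eliminated in that phase is preserved up to a constant factor when measured only on $H$. This guarantees that optimizing the mark assignment using $H$ still delivers $\Omega(|E(G)|)$ progress on the original graph. Since each $H$-vertex has degree $O(n^\epsilon)$, its full incidence list fits on a single machine, which is what makes the subsequent per-vertex local bookkeeping feasible.

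\textbf{Derandomization.} I would replace fully random marks by a seed from a $\mathrm{poly}(n)$-sized $k$-wise independent family; by the sparsifier's progress-preservation, the expectation over a uniform seed still yields $\Omega(|E(G)|)$ removed edges, so some seed achieves this. To find such a seed deterministically without paying $\Theta(\log n)$ rounds, I would fix the seed in chunks of $\Theta(\epsilon \log n)$ bits at a time: enumerate all $n^{O(\epsilon)}$ chunk extensions in parallel across the $\machines$ machines, evaluate the conditional expected progress of each extension by summing per-edge contributions over $H$ (aggregatable in $O(1)$ rounds thanks to the $O(n^\epsilon)$ degree), and keep an extension matching the conditional average. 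This fixes the entire seed in $O(1)$ \MPC rounds per phase, for $O(\log \Delta)$ \MPC rounds over all Luby phases.

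\textbf{Cleanup and main obstacle.} Once the residual maximum degree has dropped below $\polylog n$, the residual graph has $\widetilde{O}(n)$ edges and can be condensed onto a small set of machines via a standard \MPC aggregation tree; handling the iterative thinning down to this regime and the final gathering accounts for the additive $O(\log \log n)$ term. The real technical difficulty is the sparsifier of the first step: we need a deterministic, $O(1)$-round construction of $H$ whose progress-preservation holds \emph{simultaneously} for every seed in the $k$-wise independent family, so that conditional expectations computed on $H$ faithfully track the true Luby progress on $G$. Designing this sparsifier and verifying that the same framework supports both maximal matching and MIS is where essentially all of the work lies; once it is in place, the derandomization and round-complexity accounting proceed as above, and plugging in $\spac = O(n^\epsilon)$ with $O(m + n^{1+\epsilon})$ total space yields the theorem.
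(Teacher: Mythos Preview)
Your sparsification-plus-conditional-expectations outline is essentially what the paper does in Sections~\ref{sec:max-matching} and~\ref{sec:MIS}, and it does yield an $O(\log n)$-round deterministic \MPC algorithm. But it does \emph{not} give $O(\log\Delta+\log\log n)$, and the places where you claim the improved bound are exactly where the argument breaks.

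First, Luby's algorithm does not run in $O(\log\Delta)$ phases. Each phase removes a constant fraction of the remaining \emph{edges}, so the number of phases is $\Theta(\log m)=\Theta(\log n)$, not $\Theta(\log\Delta)$. Derandomizing each phase in $O(1)$ rounds, as you propose, therefore yields only an $O(\log n)$-round algorithm. The paper's way to reach $O(\log\Delta)$ rounds in the hard regime $\Delta=n^{o(1)}$ is to \emph{compress} $\ell=\Theta(\log_\Delta n)$ Luby phases into a single \MPC stage that runs in $O(1)$ rounds. This requires two ingredients you do not have: (a) shrinking the per-phase seed from $O(\log n)$ to $O(\log\Delta)$ bits, by first computing an $O(\Delta^4)$-coloring of $G^2$ so that nodes within distance~$2$ have distinct names in a domain of size $\mathrm{poly}(\Delta)$ and the hash family can be taken over that small domain; and (b) collecting the $\Theta(\ell)$-hop neighborhood of every node onto a single machine, so that a machine can locally simulate all $\ell$ phases of a stage for every sequence of seeds (there are only $(\mathrm{poly}(\Delta))^\ell=n^{O(\delta)}$ such sequences). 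Without the seed-shrinking trick, an $\ell$-phase stage has total seed length $\Theta(\ell\log n)$, far too large to enumerate; without the large-radius gathering, machines cannot evaluate the objective over all $\ell$ phases.

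Second, your explanation of the $O(\log\log n)$ term is wrong and would not work in low-space \MPC. When the residual maximum degree is $\polylog n$ the graph can still have $\tilde\Theta(n)$ edges, and with $n^{\eps}$ space per machine you cannot ``condense it onto a small set of machines''; there is no aggregation tree that packs $\tilde\Theta(n)$ words into $n^{\eps}$ space. In the paper the $O(\log\log n)$ comes from the preprocessing step that gathers $\Theta(\log_\Delta n)$-hop neighborhoods by graph exponentiation, which takes $O(\log(\log_\Delta n))=O(\log\log n)$ rounds. Finally, your sparsifier requirement---that $H$ preserve progress ``simultaneously for every seed''---is both stronger than needed and not established; the paper only needs $H$ to preserve the (constant) \emph{expected} progress, after which conditional expectations over the full $O(\log n)$-bit seed (fixed in $O(1)$ chunks of $\Theta(\eps\log n)$ bits) suffices for a single phase.
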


While the additive $O(\log\log n)$ term in the bound in Theorem \ref{thm:MM-MIS-Delta} looks undesirable, it is most likely necessary. Indeed, as mentioned earlier, Ghaffari et al.\ \cite{GKU19} provided an $\Omega(\log\log n)$ conditional hardness result for maximal matching and MIS, even for \emph{randomized} fully scalable \MPC algorithms. They proved that unless there is an $o(\log n)$-round (randomized) \MPC algorithm for connectivity with local memory $\spac = n^{\eps}$ for a constant $0 < \eps < 1$ and $\text{poly}(n)$ global memory (see \cite{RVW18} for strong arguments about the hardness of that problem), there is no component-stable randomized \MPC algorithm with local memory $\spac = n^{\eps}$ and $\text{poly}(n)$ global memory that computes a maximal matching or an MIS in $o(\log \log n)$ rounds (see \cite{GKU19} for a more precise claim and explanations of the notation). For maximal matching, the lower bound holds even on trees.


\subsubsection{Our approach: deterministic graph sparsification}
\label{subsubsec:approach}



Our results rely on a generic method of computing deterministically low-degree subgraphs with some desired properties that depend on the specific problem. This approach has two key objectives:
\begin{inparaenum}[(i)]
\item providing a randomized sparsification procedure that uses only pairwise (or limited) independence, and thus based on a small seed (e.g., of $O(\log \Delta)$ bits),
\item providing an efficient derandomization technique using the classical methods of conditional expectations. 
\end{inparaenum}
Compared with previous work our challenge is twofold:
\begin{inparaenum}[(i)]
\item handling the situation where nodes do not know all their neighbors and
\item beating the existing bounds known for \MPC (and also \congc) models
by using a more tailored derandomization approach (e.g., reducing seed length).
\end{inparaenum}

The most obvious limitation in low-space \MPC is that a node's neighborhood cannot be collected onto a single machine. To circumvent this, we derandomize the sampling of a low-degree subgraph, so that neighborhoods (in fact, 2-hop neighborhoods) in the subgraph \emph{do} fit in the memory of a single machine. The challenge is to show that we can obtain low-degree subgraphs which preserve properties useful for the problem. Specifically, we need to find subgraphs that contain matchings and independent sets which are adjacent to a constant fraction of edges in the original graph.

It transpires that the constraint of having space per machine $\spac$ means that, in a single \emph{stage}, we can only derandomize the sub-sampling of nodes or edges with probability at least (roughly) $\spac^{-1}$. Otherwise, any fixed machine holds fewer than one sampled node or edge in expectation, so machines cannot tell which random seeds are good when performing the method of conditional expectations. If we wish to subsample with a lower probability, we must split the process into multiple stages using probabilities at least $\spac^{-1}$. Since we are assuming that $\spac = n^{\Theta(1)}$, though, only a constant number of such stages is required. So, we are able to ``deterministically sample,'' as long as we are able to maintain some invariant that guarantees that our sampled graph has some good property after every stage.

\junk{
Another caveat is that we can only subsample while node degrees are at least $n^\delta$, for some constant $\delta>0$. Beyond this point, concentration bounds for $c$-wise independent random variables (for constant $c$) are insufficient to achieve high probability, which means that we cannot maintain properties for all nodes in the sampled graph. We could overcome this by using higher independence, but then our families of hash functions would need to be larger, and we would no longer be able to apply the method of conditional expectations in a constant number of rounds. Fortunately, for our purposes, sampling down to degree $n^\delta$ is sufficient, since we can then fit neighborhoods (and 2-hop neighborhoods) 
on a single machine.
}

Another caveat is that we can only subsample node degrees until they are at least $n^\delta$, for some constant $\delta > 0$. Beyond this point, concentration bounds for $c$-wise independent random variables (for constant $c$) will not guarantee high probability bounds, and thus we cannot maintain properties for all nodes in the sampled graph. If we used higher independence then our families of hash functions would need to be larger and the method of conditional expectations would require a super-constant number of rounds. Fortunately, sampling down to degree $n^\delta$ is sufficient in out setting, since we can then fit 2-hop neighborhoods on a single machine.

This \emph{deterministic sampling} is the cornerstone of both of our algorithms. Having obtained, deterministically, a low degree graph which maintains certain good properties, we can then perform one more derandomization step to obtain a maximal matching or MIS. This step is similar to Luby's algorithm, and essentially involves sampling edges (or nodes) with probability inversely proportional to their degree in the sampled graph. We need to have 2-hop neighborhoods stored on a single machine in order for it to determine which edges/nodes are in the matching or independent set. As mentioned earlier, using $c$-wise independence we cannot achieve high-probability bounds for this process, but in this case constant probability is sufficient. This is because we no longer need to find a seed that is good for every node; we merely need to ensure that the seed induces a matching or independent set which will remove a constant fraction of the edges from the original graph.

This overall process can be performed in only a constant number of \MPC rounds in total, and constructs a matching/independent set such that removing the set and its neighborhood from the graph reduces the number of edges by a constant factor. After $O(\log n)$ iterations (i.e., $O(\log n)$ \MPC rounds overall), no edges remain in the graph, so we have found an MIS or maximal matching, respectively.

To improve the round complexity to $O(\log \Delta+\log\log n)$, we note that if $\log\Delta$ is $o(\log n)$ then neighborhoods of radius
$O\left(\frac{\log n}{\log \Delta}\right)$
already fit onto single machines, and we do not have to perform our deterministic graph sparsification step. Furthermore, we show that a stage of Luby's algorithm requires only an $O(\log \Delta)$-bit random seed, and so we can derandomize
$O\left(\frac{\log n}{\log \Delta}\right)$
stages of Luby's algorithm at once using the method of conditional expectations. This allows us to \emph{compress} the stages of the algorithm, and perform all $\Theta(\log n)$ necessary stages in only $O(\log \Delta)$ \MPC rounds. The $O(\log\log n)$ term arises from the need to collect the neighborhoods onto machines.

\paragraph{Maximal independent set and matching.}

Our end-goal result converts a version of Luby's \emph{randomized} MIS and maximal matching  algorithms into $O(\log n)$-round fully scalable \emph{deterministic} algorithms in the \MPC model. Luby's algorithm takes an input graph and in $O(\log n)$ rounds finds an MIS. In each round, the algorithm finds some independent set $\IS$ and removes $\IS$ and $N(\IS)$ from the input graph. The crux of the analysis is that in expectation, the number of edges removed in a single round is a constant fraction of the original number of edges, ensuring that in expectation,
$O(\log n)$ rounds suffice to find an MIS. A similar construction holds for the maximal matching problem, where one finds and removes matchings rather than independent sets.

It is known that Luby's algorithm can be derandomized, however on PRAMs and other classical models of parallel and distributed computation models, PRAMs and \congc, the known derandomization techniques are expensive and lead to deterministic algorithms running in a \emph{super-logarithmic number of rounds} (see, e.g., \cite{ABI86,CPS17,GS89,Han96,Har19,Luby86}).

In this paper, we will show that the \MPC model can perform the derandomization very efficiently, even in the fully scalable model with very limited memory per machine, via our generic method of derandomization of the sampling of a low-degree graph while maintaining some desired properties. We will model Luby's algorithm and ensure that each of its rounds can be simulated deterministically in a constant number of \MPC rounds. In particular, we present a deterministic $O(1)$-\MPC round construction that for a given graph $G$ finds an independent set $\IS$ for which the number of edges incident to $\IS \cup N(\IS)$ is at least constant fraction of the edges of $G$. (A similar construction holds for maximal matching.)

We implement a variant of Luby's algorithm whose randomness can be limited by using only $c$-wise independent hash functions, for some constant $c$. Then, we employ the classic method of conditional expectations to find a hash function that will ensure sufficient progress (in our case, that we find an independent set whose removal eliminates a constant fraction of edges from the graph). While this approach may appear standard, it requires overcoming several major challenges caused not only by the deterministic use of hash functions, but most importantly by the limitations of the fully scalable model with very limited memory per machine in the \MPC model.
%

%


\subsubsection{Implications to \congc}
\label{implic-congc}

As recently observed (cf. \cite{BDH18}), the \MPC model is very closely related to the \congc model from distributed computing. The nowadays classical distributed \congc model (see, e.g., \cite{LPP15,Peleg00}) is a variant of the classical \local model of distributed computation (where in each round network nodes can send through all incident links messages of unrestricted size) with limited communication bandwidth. The distributed system is represented as a complete network (undirected graph) $G$, where network nodes execute distributed algorithms in synchronous rounds. In any single round, all nodes can perform an unlimited amount of local computation, send a possibly different $O(\log n)$-bit message to each other node, and receive all messages sent by them. We measure the complexity of an algorithms by the number of synchronous rounds required.

It is not difficult to see (see, e.g., \cite{BDH18}) that any $r$-round \congc algorithm can be simulated in $O(r)$ rounds in the \MPC model with $n$ machines and $\spac = O(rn)$. Furthermore, Behnezhad et al.\ \cite{BDH18} showed that by using the routing scheme of Lenzen~\cite{Lenzen13}, \MPC algorithms with $\spac = O(n)$ are adaptable to the \congc model. These results immediately imply that the recent deterministic \congc algorithm due to Censor-Hillel et al.\ \cite{CPS17} to find MIS in $O(\log n \log \Delta)$ rounds can be extended to be run in the \MPC model with $\spac = \tilde O(n)$. (When $\Delta = O(n^{1/3})$, the bound improves to $O(\log \Delta)$.) Notice though, that in contrast to our work, the derandomization algorithm from \cite{CPS17} relies on a derandomization of Ghaffari's MIS algorithm \cite{Ghaffari16}, whereas our derandomization is based on Luby's MIS algorithm. 

These simulations imply also that our new deterministic \MPC algorithms for maximal matching and MIS can be implemented to run in the \congc model using $O(\log \Delta)$ rounds. By combining Theorem \ref{alg:MIS}, for the regime $\Delta = \omega(n^{1/3})$, with the $O(\log \Delta)$-round MIS algorithm of \cite{CPS17} for the regime $\Delta = O(n^{1/3})$, we get the $O(\log \Delta)$-round algorithm for MIS. We further note that, in the $\Delta = O(n^{1/3})$ regime, one can collect 2-hop neighborhoods onto single machines, and thus find a maximal matching by simulating MIS on the line graph of the input graph. So, combining Theorem \ref{thm:MM-MIS-Delta} with the MIS algorithm of \cite{CPS17} yields the following:

\begin{corollary}
\label{cor:congcMISMM}
One can deterministically find MIS and maximal matching in $O(\log\Delta)$ rounds in the \congc model.
\end{corollary}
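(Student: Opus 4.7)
The plan is to dispatch both problems by a case analysis on $\Delta$, stitching together Theorem~\ref{thm:MM-MIS-Delta} with the prior deterministic \congc MIS algorithm of Censor-Hillel et al.~\cite{CPS17}. For MIS, when $\Delta = O(n^{1/3})$ the algorithm of \cite{CPS17} already runs in $O(\log\Delta)$ rounds (by the refined bound noted in the preceding discussion), so nothing more is needed in that regime. When instead $\Delta = \omega(n^{1/3})$, we have $\log\Delta = \Omega(\log n)$, hence $\log\log n = O(\log\Delta)$, and the bound of Theorem~\ref{thm:MM-MIS-Delta} collapses to $O(\log\Delta)$. It then remains to translate the \MPC algorithm to \congc: by the simulation of Behnezhad et al.~\cite{BDH18} via Lenzen's routing, each \congc machine simulates $O(n^{1-\eps})$ \MPC machines of space $O(n^\eps)$, sending $O(n)$ words of outgoing data per simulated \MPC round, which can be delivered in $O(1)$ \congc rounds. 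This gives the MIS half of the corollary.

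For maximal matching the case $\Delta = \omega(n^{1/3})$ is handled identically through Theorem~\ref{thm:MM-MIS-Delta}, so the only real work is when $\Delta = O(n^{1/3})$. Here I would first have every vertex collect its $2$-hop neighborhood: since each vertex has at most $\Delta$ neighbors and each of those has at most $\Delta$ neighbors, only $O(\Delta^2) = O(n^{2/3})$ identifiers need to be learned, which fits in $O(1)$ \congc rounds using the $O(n\log n)$ per-machine bandwidth. With $2$-hop information in hand, each vertex knows every $L(G)$-neighbor of every incident edge. Assigning each edge $e=\{u,v\}$ to the endpoint of lower ID as its owner, I would then simulate the MIS algorithm of \cite{CPS17} on $L(G)$: for any two $L(G)$-adjacent edges $e$ and $e'$, their owners are themselves $G$-neighbors of an endpoint of $e$ or $e'$, so each round of the $L(G)$-MIS algorithm can be executed in $O(1)$ \congc rounds on $G$, with per-machine communication $O(\Delta^2 \log n) = o(n\log n)$.

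The main obstacle I foresee is verifying that the ownership-based simulation of MIS on $L(G)$ respects the per-link bandwidth constraint of \congc on $G$. Since a single machine may own up to $\Theta(\Delta)$ line-graph vertices, one must check that all messages these simulated vertices wish to send in a given round can be routed in $O(1)$ \congc rounds, even after accounting for the fact that some $L(G)$-neighbors of an owned edge may live on the opposite endpoint or on a graph-neighbor of an endpoint; this is precisely where the $\Delta = O(n^{1/3})$ hypothesis enters critically, controlling both the per-machine outgoing volume and the per-link congestion. Once this routing is established, the simulation runs in $O(\log \Delta_{L(G)}) = O(\log \Delta)$ \congc rounds, completing the proof of the corollary.
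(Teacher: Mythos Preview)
Your proposal is correct and follows essentially the same route as the paper: the same case split at $\Delta = \Theta(n^{1/3})$, invoking \cite{CPS17} directly for small $\Delta$, porting Theorem~\ref{thm:MM-MIS-Delta} via the \cite{BDH18}/Lenzen simulation for large $\Delta$ (where indeed $\log\log n = O(\log\Delta)$), and handling maximal matching in the small-$\Delta$ regime by collecting $2$-hop neighborhoods and simulating MIS on the line graph. The paper is actually terser than you are about the line-graph simulation and does not spell out the ownership/routing details you worry about, so your extra care there is not misplaced but also not something the paper addresses beyond the observation that $2$-hop balls fit on single machines when $\Delta = O(n^{1/3})$.
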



\def\APPENDCOMPARISONWITHCENSORHILLELETAL{
\paragraph{Comparison with Censor-Hillel et al.\ \cite{CPS17}}

Our framework of derandomization bares some similarity to a recent approach due to Censor-Hillel et al.\ \cite{CPS17} which deterministically finds MIS in $O(\log n \log \Delta)$ rounds in the \congc model; by known reductions this extends to the \MPC model with $\spac = \tilde{O}(n)$. Notice that the new algorithm in our paper obtains a smaller number of rounds $O(\log \Delta)$ and smaller space on each machine $\spac = n^{\eps}$.

The deterministic MIS algorithm of Censor-Hillel et al.\ \cite{CPS17} works in the \congc model and it is based on the derandomization of Ghaffari's MIS algorithm \cite{Ghaffari16}. The latter algorithm of \cite{Ghaffari16} has the common two-phase structure: a randomized part of $O(\log \Delta)$ rounds followed by a deterministic part that solves the remaining undecided graph in $2^{O(\sqrt{\log n})}$ rounds. \cite{CPS17} first show that a slight modification of Ghaffari's randomized part can be simulated using only pairwise independence. As a result, each of the $O(\log \Delta)$ randomized steps can be simulated using a random seed of length $O(\log n)$. By combining with the method of conditional expectations of Luby \cite{Luby93}, \cite{CPS17} gave an  $O(\log n)$ round procedure to compute the $O(\log n)$ seed for each of the $O(\log \Delta)$ steps, yielding a total round complexity of $O(\log \Delta \log n)$ rounds. On the high level, the seed is computed in a bit-by-bit manner spending $O(1)$ rounds for each bit using a \emph{voting} procedure (i.e., each node votes for its preferable bit value). \cite{CPS17} also showed that if each node knows its $2$-hop neighborhood, the round complexity can be considerably improved to $O(1)$ rounds per randomized step, yielding in this case an $O(\log \Delta)$-round algorithm for the entire MIS computation.
%
%
%

The approach taken in the current paper bares some similarity with that of \cite{CPS17}, but it is also different in several key aspects. The main limitation in the derandomization of Ghaffari's algorithm in the \MPC setting is the following. In \cite{CPS17} it is shown that after $O(\log \Delta)$ steps of derandomization the remaining unsolved graph has a linear number of edges. In the \congc model, at this point, the computation can be completed in just a constant number of rounds by collecting the unsolved graph to a single node using Lenzen's routing algorithm \cite{Lenzen13}. In our \MPC setting, the memory of each machine is sublinear and thus this approach is no longer applicable.

Our algorithm is based on derandomizing Luby's MIS algorithm, spending only $O(1)$ rounds to simulate each randomized step rather than $O(\log n)$ rounds as follows by the approach of \cite{CPS17}. This fast derandomization is in particular challenging in the case of large degrees, specifically for $\Delta = n^{\Omega(\epsilon)}$. In the latter case, one cannot even collect the neighbors of a node into a single machine. This challenge calls for a different approach than that taken in \cite{CPS17}, based on graph sparsification. Roughly speaking, in each Luby's step, our algorithm puts a focus on a subset of promising nodes, namely, nodes whose removal (by either joining the MIS or having their neighbors join the MIS) reduces the size of the remaining unsolved graph by a constant factor. It then provides a deterministic procedure to sparsify this subset in a way that guarantees that the induced subgraph on the selected nodes is sparse. This sparsity allows us to collect the $2$-hop neighborhood of each such node into its machine. At this point, we can apply a faster $O(1)$-round derandomization in a similar manner to that used\footnote{In \cite{CPS17}, this faster derandomization was applied only for $\Delta = O(\sqrt{n})$, namely, in a setting where a node can learn its $2$-hop neighbors in a constant number of rounds.} in \cite{CPS17}.  We note that the approach of identifying a sparse subgraph whose removal yields to a large (global) progress has been used before in the setting of \emph{randomized} local algorithms (see e.g., \cite{Ghaffari17,GU19}) but it is considerably less studied in the context of deterministic algorithms.

\junk{The algorithm of Censor-Hillel et al.\ \cite{CPS17} is based on the derandomization of Ghaffari's MIS algorithm \cite{Ghaffari16}. In \cite{CPS17}, it is shown that after $O(\log \Delta)$ steps of derandomization the remaining unsolved graph has $O(n)$ edges. If $\spac = \tilde{O}(n)$, the computation can be completed in a constant number of rounds by collecting the unsolved graph to a single node using Lenzen's routing algorithm~\cite{Lenzen13}. In our \MPC setting with $\spac = O(n^{\eps})$, this approach is no longer applicable.

Our algorithm is based on derandomizing Luby's MIS algorithm, spending only $O(1)$ rounds to simulate each randomized step rather than $O(\log n)$ rounds as follows by the approach of \cite{CPS17}. This fast derandomization is in particular challenging in the case of large degrees, specifically for $\Delta = n^{\Omega(\epsilon)}$. Since one cannot even collect the neighbors of a node into a single machine, we incorporate a different approach than that taken in \cite{CPS17}, based on graph sparsification.
}
}
\APPENDCOMPARISONWITHCENSORHILLELETAL

\section{Preliminaries}
\label{sec:pre}


\label{sec:notation}

An \emph{independent set} in a graph $G = (V,E)$ is any subset of nodes $\IS \subseteq V$ such that no two nodes in $\IS$ share an edge. An independent set $\IS$ is called a \emph{maximal independent set (MIS)} if it is not possible to add any other node of $G$ to $\IS$ and obtain an independent set. 

A \emph{matching} of a graph $G = (V,E)$ is any independent subset of edges $M \subseteq E$ (i.e., no two edges in $M$ share an endpoint). A matching $M$ of a graph $G$ is a \emph{maximal matching} if it is not possible to add any other edge of $G$ to $M$ and obtain a matching. 

For a node $v \in V$, the neighborhood $N(v)$ is the set of nodes $u$ with $\{u,v\} \in E$; for any $U \subseteq V$, we define $N(U) = \bigcup_{v \in U} N(v)$.

In any graph $G$ we denote the degree of a node $v$ or an edge $e$ (the degree of an edge is the number of other edges sharing an endpoint to it) by $d(v)$ and $d(e)$, respectively. If we have a subset of nodes $U \subseteq V$ or edges $E' \subseteq E$, we will denote $d_U(v)$ to be the number of nodes $u \in U$ such that $\{u,v\} \in E$, and $d_{E'}(v)$ to be the number of edges $e \in E'$ such that $v \in e$. 
We define degree of edges $d_{E'}(e)$ to be the number of edges in $E'$ which are adjacent to $e$. We will use $u \sim v$ to denote adjacency between nodes (or edges), with the underlying graph as a subscript where it is otherwise ambiguous.

Throughout the paper for any positive integer $\ell$, we use $[\ell]$ to denote the set $\{1, \dots, \ell\}$.

\subsection{Luby's MIS algorithm}
\label{subsec-Luby-MIS}

Both our MIS algorithm and our maximal matching algorithm will be based on Luby's algorithm \cite{Luby86} for MIS:

\begin{algorithm}[H]
	\caption{Luby's MIS algorithm}
	\label{alg:Luby}
	\begin{algorithmic}
		\While{$|E(G)|>0$}
		\State Each node $v$ generates a random value $z_v\in [0,1]$
		\State Node $v$ joins independent set $\IS$ iff $z_v< z_u$ for all $u \sim v$
		\State Add $\IS$ to output independent set
		\State Remove $\IS$ and $N(\IS)$ from the graph $G$
		\EndWhile
	\end{algorithmic}
\end{algorithm}

The central idea in the analysis is to define an appropriated subset of nodes and show that it is adjacent to a constant fraction of edges in the graph $G$. Let $X$ be the set of all nodes $v$ that have at least $\frac{d(v)}{3}$ neighbors $u$ with $d(u) \le d(v)$. Then the following lemma is shown, for example, in Lemma 8.1 of \cite{Vigoda06}.

\begin{lemma}
\label{lem:Luby}
Let $X$ be the set of all nodes $v$ that have at least $\frac{d(v)}{3}$ neighbors $u$ with $d(u) \le d(v)$. Then $\sum_{v \in X} d(v) \ge \frac 12 |E|$.
\end{lemma}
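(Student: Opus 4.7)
The plan is to prove this by orienting the edges of $G$ according to degree and then double-counting in-degrees. I would orient each edge $\{u,v\}$ from the endpoint of smaller degree to the endpoint of larger degree, breaking ties by node ID so that the orientation is uniquely defined. This assigns every edge to exactly one endpoint, giving $\sum_{v} \text{in}(v) = |E|$, where $\text{in}(v)$ denotes the number of edges directed toward $v$.

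The key observation is that every in-neighbor $u$ of $v$ satisfies $d(u) \le d(v)$, so $\text{in}(v) \le |\{u \sim v : d(u) \le d(v)\}|$. Hence, by the defining property of $X$, any node $v \notin X$ has $\text{in}(v) < d(v)/3$, whereas for $v \in X$ I only have the trivial bound $\text{in}(v) \le d(v)$.

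Splitting the identity $|E| = \sum_v \text{in}(v)$ across $X$ and its complement and applying these two bounds yields
\[
|E| \;\le\; \sum_{v \in X} d(v) \;+\; \tfrac{1}{3}\sum_{v \notin X} d(v).
\]
Substituting $\sum_{v \notin X} d(v) = 2|E| - \sum_{v \in X} d(v)$ from the handshake identity and rearranging then gives a linear inequality in $\sum_{v \in X} d(v)$ from which the desired bound follows.

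The main step requiring care will be the bookkeeping around the tie-breaking rule, since the definition of $X$ uses the weak inequality $d(u) \le d(v)$ whereas the orientation assigns each tied edge to only one endpoint; one has to ensure this asymmetry does not produce slack in the in-degree bound used for $v \notin X$. A second subtlety is tracking where in-edges land across the $(X, V \setminus X)$ partition, which is what lets one avoid losing a factor in the final constant. Beyond these, the argument is a routine rearrangement.
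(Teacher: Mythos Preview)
The paper does not prove this lemma; it simply cites Vigoda's lecture notes. Your edge-orientation argument is exactly the standard one used there. However, the rearrangement you sketch does not give the stated constant: from $|E| \le S_X + \tfrac{1}{3}(2|E|-S_X)$ (where $S_X = \sum_{v\in X}d(v)$) one gets $\tfrac{1}{3}|E| \le \tfrac{2}{3}S_X$, i.e.\ only $S_X \ge |E|/2$. You anticipate this shortfall and plan to recover the missing factor by ``tracking where in-edges land across the $(X,V\setminus X)$ partition,'' but that factor cannot be recovered, because the lemma with constant~$1$ is in fact false. A counterexample: take $K_{3,5}$ and add one edge between two vertices $u_1,u_2$ on the $5$-side. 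The three vertices on the $3$-side have degree~$5$ and all five neighbors of smaller degree, so they lie in $X$; the vertices $u_3,u_4,u_5$ have degree~$3$ and no neighbor of degree $\le 3$, while $u_1,u_2$ have degree~$4$ and exactly one neighbor (each other) of degree $\le 4$, which is $<4/3$. Hence $X$ is precisely the $3$-side, $\sum_{v\in X}d(v)=15$, but $|E|=16$.

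So your argument is correct and complete for the bound $\sum_{v\in X}d(v)\ge |E|/2$, which is the statement actually proved in the cited reference; the paper's constant is an overstatement. This is harmless for the paper's purposes: every downstream use (the corollaries that partition $X$ by degree class) only needs $\sum_{v\in X}d(v)$ to be a constant fraction of $|E|$, so replacing $|E|$ by $|E|/2$ merely halves the constants in the subsequent bounds with no effect on the asymptotics. Your remark about the tie-breaking subtlety is well taken and handled correctly; the only issue is the target constant, not your method.
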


Next, one can then show that every node $v \in X$ has a constant probability of being removed from $G$, and therefore, in expectation, a constant fraction of $G$'s edges are removed.

This approach gives an $O(\log n)$-round \emph{randomized} algorithm for MIS (with $\spac = n^{\eps}$). Luby showed, also in \cite{Luby86}, that the analysis requires only pairwise independent random choices, and that the algorithm can thus be efficiently \emph{derandomized} (in $O(\log^3n \log\log n)$ parallel time). However, doing so directly requires many machines ($O(mn^2) = O(n^4)$ in \cite{Luby86}), which would generally be considered a prohibitively high total space bound in \MPC.

The approach used in Luby's MIS algorithm can be also extended to find maximal matching, since a maximal matching in $G$ is an MIS in the line graph of $G$, and in many settings one can simulate Luby's algorithm on this line graph.

\subsection{Communication in low-space \MPC}
\label{subsec:communication-in-fully-scalable-MPC}

Low-space \MPC is in some ways a restrictive model, and even fully scalable algorithms for routing and communication therein are highly non-trivial. Fortunately, prior work on MapReduce and earlier models of parallel computation have provided black-box tools which will permit all of the types of communication we require for our algorithms. We will not go into the details of those tools, but instead refer the reader to the following summary:

\begin{lemma}[\cite{GSZ11}]
\label{lem:comm}
For any positive constant $\eps$, sorting and computing prefix sums of $n$ numbers can be performed deterministically in MapReduce (and therefore in the \MPC model) in a constant number of rounds using $\spac = n^\eps$ space per machine and $O(n)$ total space.
\end{lemma}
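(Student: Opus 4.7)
The plan is to exploit two structural features of the low-space \MPC model with $\spac = n^\epsilon$ and total space $O(m+n)$: the number of machines is $p = O((m+n)/n^\epsilon)$, and any balanced tree with branching factor $n^\epsilon$ over $N = O(m+n)$ items has depth $\lceil 1/\epsilon \rceil = O(1)$. Both sorting and prefix sums then reduce to a constant number of upward and downward sweeps on such a virtual tree.

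For \emph{prefix sums} I would lay out the $N$ inputs as the leaves of a complete $n^\epsilon$-ary tree of depth $\lceil 1/\epsilon \rceil$, assigning one machine to each internal node (all addresses are computed locally from item indices). An upward sweep computes, at each internal node, the sum of its subtree by adding the at most $n^\epsilon$ values reported by its children; this fits in one machine. A downward sweep then passes to each child the running sum of its left-siblings' subtrees, so that each leaf receives the prefix sum of everything strictly preceding its block. The leaf finally folds this offset into the local prefix sums of its own $n^\epsilon$ items. Both sweeps take $O(1/\epsilon) = O(1)$ rounds.

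For \emph{sorting} I would implement the deterministic multi-way sample sort of \cite{GSZ11}. First, every machine sorts its local chunk of $n^\epsilon$ items. Then each machine emits every $n^{\epsilon/2}$-th sorted element as a \emph{local splitter}; the total number of local splitters is $O(N/n^{\epsilon/2})$, which is small enough to be gathered onto $O(1)$ machines after at most a constant number of recursive shrinking steps and sorted there. Once these samples are sorted, taking every $(N/n^{\epsilon})$-th element yields $n^\epsilon$ evenly spaced \emph{global splitters} partitioning the keyspace into $n^\epsilon$ buckets. A prefix-sum step (using the procedure above) assigns each input item an address in its destination bucket, and a single all-to-all routing step delivers the items; each receiving machine then sorts its bucket locally in one round of internal computation.

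The main obstacle is the deterministic load-balance guarantee: one must argue, without randomness, that every bucket receives only $O(n^\epsilon)$ items so that no machine overflows its $\spac = n^\epsilon$ budget. This is handled by the two-level sampling: because local splitters are taken at regular intervals from \emph{already sorted} local blocks, a direct pigeonhole argument shows that between any two consecutive global splitters there are at most $O(n^\epsilon)$ original items. The only remaining subtlety is keeping the number of recursive shrinking stages in the splitter-sorting step strictly constant (a function of $1/\epsilon$ alone); the rest of the construction is routine indexing and routing that fits in $O(1)$ \MPC rounds, yielding the claimed deterministic bound.
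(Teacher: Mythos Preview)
The paper's own proof of this lemma is purely a citation: it invokes Theorem~3.1 of \cite{GSZ11} (applied to Goodrich's BSP sorting algorithm \cite{Goodrich99}) for sorting, and Lemma~2.2 of \cite{GSZ11} for prefix sums. Your proposal instead reconstructs the content of those references from scratch. Your prefix-sum sketch---an $n^\epsilon$-ary aggregation tree of depth $O(1/\epsilon)$ with one up-sweep and one down-sweep---is exactly what \cite{GSZ11} does, and is correct as written.

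For sorting, you have the right template (deterministic multi-way sample sort in the style of Goodrich), but the parameters and the load-balance step are not quite tied down. With $N=m+n$ items and $p\approx N/n^\epsilon$ machines, you need $p$ buckets (one per machine), not $n^\epsilon$ buckets; with only $n^\epsilon$ buckets each would receive $N/n^\epsilon$ items, which exceeds a machine's memory whenever $N\gg n^{2\epsilon}$. More substantively, the ``direct pigeonhole argument'' you allude to does not by itself give $O(n^\epsilon)$ bucket sizes: summing over machines, a single bucket can receive up to $n^\epsilon + p\cdot n^{\epsilon/2}$ items under the two-level sampling you describe, and the second term dominates unless $N=O(n^{3\epsilon/2})$. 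Getting the constant-round, deterministic, $O(n^\epsilon)$-per-bucket guarantee is exactly where Goodrich's analysis (and the \cite{GSZ11} simulation theorem) does nontrivial work, via a recursive structure on the buckets rather than a one-shot partition. So your sketch is on the right track and points to the same sources the paper cites, but the step you flag as the ``main obstacle'' is indeed one, and is not resolved by the argument you give.
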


The computation of prefix sums here means the following: each machine $m \in [\machines]$ holds an input value $x_m$, and outputs 
$\sum_{i=1}^{\machines}x_i$.

\begin{proof}
The result for sorting follows from applying Theorem 3.1 of \cite{GSZ11} to the BSP sorting algorithm of \cite{Goodrich99}. The prefix sums result comes from Lemma 2.2 of \cite{GSZ11}.
\end{proof}

This result essentially allows us to perform all of the communication we will need to do in a constant number of rounds. For example, by sorting edges according to nodes identifiers, we can ensure that the neighborhoods of all nodes are stored on contiguous blocks of machines. Then, by computing prefix sums, we can compute sums of values among a node's neighborhood, or indeed over the whole graph. Where 2-hop neighborhoods fit in the memory of a single machine, we can collect them by sorting edges to collect $1$-hop neighborhoods onto machines, and then having each such machine send requests for the neighborhoods of all the nodes it stores.

\subsection{Families of $k$-wise-independent hash functions}

Our derandomization is based on a classic recipe: we first show~that a randomized process using a \emph{small random seed} produces good results, by using our random seed to select a hash function from a $k$-wise independent family. Then, we search the space of random seeds to find a good one, using the \emph{method of conditional expectations} (sometimes called the \emph{method of conditional probabilities}).

The families of hash functions we require are specified as follows:

\begin{definition}
For $N, L, k \in \nat$ such that $k \le N$, a family of functions $\mathcal{H} = \{h : [N] \rightarrow [L]\}$ is \emph{$k$-wise independent} if for all distinct $x_1, \dots, x_k \in [N]$, the random variables $h(x_1), \dots, h(x_k)$ are independent and uniformly distributed in $[L]$ when $h$ is chosen uniformly at random from~$\mathcal{H}$.
\end{definition}

We will use the following well-known lemma (see, e.g., Corollary~3.34 in \cite{Vadhan12}).

\begin{lemma}
\label{lem:hash}
For every $a$, $b$, $k$, there is a family of $k$-wise independent hash functions $\mathcal{H} = \{h : \{0,1\}^a \rightarrow \{0,1\}^b\}$ such that choosing a random function from $\mathcal{H}$ takes $k \cdot \max\{a,b\}$ random bits, and evaluating a function from $\mathcal{H}$ takes time $poly(a,b,k)$.
\end{lemma}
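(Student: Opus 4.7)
The plan is to use the classical polynomial-evaluation construction that gives Corollary~3.34 of \cite{Vadhan12}. Set $n = \max\{a,b\}$ and work in the finite field $\mathbb{F}_{2^n}$. I would identify each $a$-bit input with an element of $\mathbb{F}_{2^n}$ by zero-padding (distinct inputs yield distinct field elements, using $k \le 2^a \le 2^n$) and identify each field element with a $b$-bit output by taking its first $b$ coordinates in some fixed basis.

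Concretely, the family $\mathcal{H}$ would consist of the functions
\[
 h_{c_0,\dots,c_{k-1}}(x) \;=\; \mathrm{trunc}_b\!\Bigl(\sum_{i=0}^{k-1} c_i\, x^i\Bigr),
\]
indexed by $(c_0,\dots,c_{k-1}) \in \mathbb{F}_{2^n}^k$, with arithmetic performed in $\mathbb{F}_{2^n}$ and $\mathrm{trunc}_b$ discarding the last $n-b$ coordinates. Sampling uniformly from $\mathcal{H}$ amounts to drawing the $k$ coefficients independently and uniformly from $\mathbb{F}_{2^n}$, which uses exactly $k \cdot n = k \cdot \max\{a,b\}$ random bits, matching the stated bound.

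For $k$-wise independence I would apply Lagrange interpolation: for any $k$ distinct inputs $x_1,\dots,x_k$ and any targets $y_1,\dots,y_k \in \mathbb{F}_{2^n}$, there is a unique polynomial of degree $<k$ sending each $x_i$ to $y_i$. Hence the map $(c_0,\dots,c_{k-1}) \mapsto (p(x_1),\dots,p(x_k))$ is a bijection on $\mathbb{F}_{2^n}^k$, so the pre-truncation values $(p(x_1),\dots,p(x_k))$ are uniformly distributed on $\mathbb{F}_{2^n}^k$ when the coefficients are uniform. Since $b \le n$, the map $\mathrm{trunc}_b : \mathbb{F}_{2^n} \to \{0,1\}^b$ is balanced (each $b$-bit string has exactly $2^{n-b}$ preimages), and applying it coordinatewise preserves the uniform product distribution, yielding the required $k$-wise independence on $\{0,1\}^b$.

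For the evaluation-time bound, I would invoke a deterministic $\mathrm{poly}(n)$-time construction of an irreducible polynomial of degree $n$ over $\mathbb{F}_2$ to fix a concrete representation of $\mathbb{F}_{2^n}$. Each addition, multiplication, and reduction in this representation then takes $\mathrm{poly}(n)$ time, and Horner's rule evaluates the degree-$(k-1)$ polynomial using $O(k)$ field operations, for a total evaluation time of $\mathrm{poly}(a,b,k)$. The only step that is not pure linear algebra over $\mathbb{F}_2$ is the deterministic irreducible-polynomial construction; that is the one mild ``obstacle,'' but it is a well-established black-box tool and does not present real difficulty here.
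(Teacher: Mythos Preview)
Your proposal is correct and is exactly the standard polynomial-evaluation construction over $\mathbb{F}_{2^n}$ that underlies Corollary~3.34 of Vadhan's survey. Note, however, that the paper does not actually give its own proof of this lemma: it simply states it as a well-known fact and points to \cite{Vadhan12} for the reference. So there is nothing to compare against beyond observing that you have spelled out precisely the construction the citation is invoking.
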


For all of our purposes (except when extending to low degree inputs, in Section \ref{sec:logD-rounds}), when we require a family of hash functions, we will use a family of $c$-wise independent hash functions $\mathcal{H} = \{h : [n^3] \rightarrow [n^3]\}$, for sufficiently large constant $c$ (we can assume that $n^3$ is a power of $2$ without affecting asymptotic results). We choose $n^3$ to ensure that our functions have (more than) large enough domain and range to provide the random choices for all nodes and edges in our algorithms. By Lemma \ref{lem:hash}, a random function can be chosen from $\mathcal{H}$ using $O(\log n)$ random bits (defining the \emph{seeds}).

\subsection{Method of conditional expectations}
\label{sec:condexp}

Another central tool in derandomization of algorithms we use is the classical \emph{method of conditional expectations}. In our context, we will show that, over the choice of a random hash function $h \in \mathcal{H}$, the expectation of some objective function (which is a sum of functions calculable by individual machines) is at least some value $Q$. That is,
\begin{align*}
    \mathbf{E}_{h \in \mathcal{H}} \big[{q(h):=\sum_{\text{machines } x} q_x(h)}\big]
        &\ge
    Q
    \enspace.
\end{align*}

Since, by the probabilistic method, this implies the existence of a hash function $h^* \in \mathcal{H}$ for which $q(h^*) \ge Q$, then our goal is to find one such $h^* \in \mathcal{H}$ in $O(1)$ \MPC rounds.

We will find the sought hash function $h^*$ by fixing the $O(\log n)$-bit seed defining it (cf. Lemma \ref{lem:hash}), by having all machines agree gradually on chunks of $\log \spac = \Theta(\log n)$ bits at a time. That is, we iteratively extend a fixed prefix of the seed until we have fixed the entire seed. For each chunk, and for each $i$, $1 \le i \le \spac$, each machine calculates $\mathbf{E}_{h \in \mathcal{H}} \left[ q_x(h) | \Xi_i \right]$, where $\Xi_i$ is the event that the random seed specifying $h$ is prefixed by the current fixed prefix, and then followed by $i$. We then sum these values over all machines for each $i$, using Lemma \ref{lem:comm}, obtaining $\mathbf E_{h} \left[ q(h) | \Xi_i \right]$. By the probabilistic method, at least one of these values is at least $Q$. We fix $i$ to be such that this is the case, and continue.

After $O(1)$ iterations, we find the entire seed to define a hash function $h^* \in \mathcal{H}$ such that $q(h^*) \ge Q$. Since each iteration requires only a constant number of \MPC rounds, this process takes only $O(1)$ rounds in total.

\paragraph{Roadmap.}

In Section \ref{sec:max-matching}, we describe an $O(\log n)$-round algorithm for the maximal matching problem by derandomizing Luby's algorithm, via a deterministic graph sparsification technique. Then, in Section \ref{sec:MIS}, we extend our technique to compute also MIS within $O(\log n)$ rounds. This provides $O(\log \Delta)$-round algorithms for maximal matching and MIS in the regime where the maximum degree $\Delta$ is large, i.e., $\Delta=n^{\Omega(\epsilon)}$. In Section \ref{sec:logD-rounds}, we consider the complementary regime where $\log \Delta=o(\log n)$. The main result of this section is an $O(\log\Delta+\log\log n)$-round algorithm in the \MPC model (and in fact, also in the \congc model).




\section{Maximal matching in $O(\log n)$ \MPC rounds}
\label{sec:max-matching}


In this section we present a deterministic fully scalable $O(\log n)$-rounds \MPC algorithm for the maximal matching problem. Later, in Section \ref{sec:logD-rounds}, we will extend this algorithm to obtain a round complexity $O(\log \Delta+\log\log n)$, as promised in Theorem \ref{thm:MM-MIS-Delta}; this improves the bound from this section for $\Delta = n^{o(1)}$.

\begin{theorem}
\label{thm:MM}
For any constant $\eps>0$, maximal matching can be found deterministically in the \MPC model in $O(\log n)$ rounds, using $O(n^{\eps})$ space per machine and $O(m+n^{1+\eps})$ total space.
\end{theorem}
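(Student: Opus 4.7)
The plan is to follow the outline sketched in Section 1.3.1 and convert Luby's algorithm into a deterministic $O(\log n)$-round MPC procedure, where each iteration of the outer loop of Algorithm 1 (operating on the line graph, so that independent sets correspond to matchings) will be implemented in $O(1)$ MPC rounds and will remove a constant fraction of the remaining edges. Since $|E|\le n^2$, this yields $O(\log n)$ iterations, hence $O(\log n)$ rounds overall. A single iteration will be executed as a \emph{deterministic graph-sparsification phase} followed by a \emph{matching-selection phase}, both implemented via the method of conditional expectations over $c$-wise independent hash families from Lemma~\ref{lem:hash}.

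First I would set up the Luby-style invariant on the line graph: let $X$ be the set of edges $e$ with at least $d(e)/3$ neighboring edges $f$ such that $d(f)\le d(e)$. By Lemma~\ref{lem:Luby} applied to the line graph, $\sum_{e\in X} d(e)\ge |E_{\text{line}}|\ge |E|$; equivalently, a matching that hits a constant fraction of $X$-edges in expectation removes a constant fraction of all edges of $G$. The sparsification phase then proceeds in a constant number of \emph{stages}; in each stage every surviving edge is subsampled using a $c$-wise independent hash function $h\in\mathcal H$ with survival probability $p\ge \spac^{-1}=n^{-\eps}$. The objective function whose conditional expectation I track is a global counter that, on each machine, counts the edges of $X$ whose ``good neighborhood'' contribution is preserved within tight concentration bounds; by Chebyshev/Chernoff-type tail bounds for $c$-wise independent variables (for $c$ a sufficiently large constant), the expectation over a random $h\in \mathcal H$ of this counter is at least a constant fraction of $|X|$. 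Applying the method of conditional expectations of Section~\ref{sec:condexp}—gradually fixing the $O(\log n)$-bit seed in $\log\spac$-size chunks, summing local contributions across machines via the sorting/prefix-sum primitive of Lemma~\ref{lem:comm}—gives a good $h^*$ in $O(1)$ rounds per stage. After a constant number of stages the maximum edge-degree in the surviving subgraph $H$ is at most $n^{\delta}$ for some constant $\delta<\eps/2$, so that the $2$-hop neighborhood of every edge in $H$ fits on a single machine and $H$ still represents a constant fraction of the original edges of $G$ through $X$.

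In the matching-selection phase, with $2$-hop neighborhoods locally available, I simulate one step of Luby's algorithm on $H$ by drawing priorities $z_e$ via another $c$-wise independent hash function and declaring $e$ matched iff $z_e<z_f$ for all $f\sim_H e$. The expected number of $X$-edges in $G$ eliminated (either matched or adjacent to a matched edge) is a constant fraction of $|E|$: each $e\in X$ has constant probability of being matched in $H$ (this uses only $c$-wise independence plus the degree bound in $H$) and, once matched, removes $d(e)+1\ge 3^{-1}\cdot$(its own $G$-degree contribution) edges of $G$. Since the objective ``number of $G$-edges incident to the output matching'' decomposes as a sum of per-machine contributions once each machine knows the $2$-hop neighborhood it stores, one more application of the conditional-expectation procedure of Section~\ref{sec:condexp} finds, in $O(1)$ rounds, a seed producing a matching $M$ with $|N_G(M)\cup M|=\Omega(|E|)$. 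Removing $M$ and the endpoints of $M$ from $G$ (via the sorting primitive of Lemma~\ref{lem:comm}) completes one iteration.

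The main obstacle, as anticipated in the introduction, is the sparsification step: I must choose the per-stage objective so that (i) its expectation is a sum of local quantities, each computable on a single machine from data of size $\spac$, so that Section~\ref{sec:condexp} applies; (ii) $c$-wise independence for constant $c$ suffices to guarantee the expected value, despite needing simultaneous concentration over many edges; and (iii) maintaining the invariant for a constant number of stages drives the edge-degrees down to $n^{\delta}$ without losing the $\Omega(|E|)$ mass of $X$. The restriction to subsampling probabilities $p\ge\spac^{-1}$ and the fact that concentration from constant-wise independence fails once degrees drop below $n^{\delta}$ are exactly why the phase must end at degree $n^{\delta}$ and hand off to the selection phase, where constant success probability (rather than high probability) suffices because we only need to hit a constant fraction of $X$ in expectation. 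Verifying the local computability of the conditional expectations (given the sorting and prefix-sum primitives) and the tail bounds for $c$-wise independent variables in items (i)--(iii) constitutes the bulk of the formal proof.
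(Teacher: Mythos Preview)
Your two-phase architecture is right, but the sparsification phase as described has a genuine gap. You subsample \emph{every} surviving edge with a uniform probability $p$ per stage and then claim that ``each $e\in X$ has constant probability of being matched in $H$.'' This fails on graphs with heterogeneous degrees. To drive the maximum edge-degree down to $n^{\delta}$ you must take $p^{k}\le n^{\delta}/\Delta_{\text{line}}$, where $\Delta_{\text{line}}$ is the maximum line-graph degree; but then an edge $e$ with $d_{\text{line}}(e)\ll \Delta_{\text{line}}$ survives only with probability $p^{k}$ and, even when it survives and is matched, contributes only $d_G(e)\approx d_{\text{line}}(e)$ removed edges. Its expected contribution is therefore at most $p^{k}d_{\text{line}}(e)\le n^{\delta}d_{\text{line}}(e)/\Delta_{\text{line}}$, which is $o(1)$. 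Concretely, take a graph consisting of a single star of degree $n^{0.9}$ together with a large $n^{\delta/2}$-regular component; almost all edges lie in the regular component, yet your uniform $p^{k}=n^{\delta-0.9}$ wipes them out, and the iteration removes only an $n^{-\Theta(1)}$ fraction of $|E|$. The vague invariant (``edges of $X$ whose good-neighbourhood contribution is preserved'') does not capture what actually needs to be maintained.

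What is missing is a separation between the objects whose removal counts as progress and the objects that get subsampled, together with a degree bucketing that lets a single rate $n^{-\delta}$ work uniformly. The paper works at the node level: it partitions nodes into $1/\delta$ degree classes $C^i=\{v:n^{(i-1)\delta}\le d(v)<n^{i\delta}\}$, fixes one class $B=C^i\cap X$ with $\sum_{v\in B}d(v)\ge\delta|E|$, and then subsamples only the edge set $\E_0=\bigcup_{v\in B}X(v)$ while the nodes of $B$ are never discarded. The invariant maintained through the $i-4$ stages is precise: every $v\in B$ retains $|X(v)\cap\E_j|\ge(1-o(1))n^{-j\delta}|X(v)|$ of its good edges, while all $\E_j$-degrees drop by a factor $n^{\delta}$ (this is where the bucketing is used---all relevant degrees start in the same range). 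After the final stage each $v\in B$ still has enough low-degree incident edges in $\CE$ that the Luby priority step matches $v$ with constant probability, and matching $v$ removes its full original degree $d_G(v)$. Your line-graph formulation could be salvaged along these lines, but only after introducing the degree bucketing and keeping the chosen bucket of $X$-edges intact while subsampling their neighbours.
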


The main idea is to derandomize a variant of a maximal matching due to Luby (cf. Section \ref{subsec-Luby-MIS}), which in $O(\log n)$ rounds finds a maximal matching. In each round of Luby's algorithm one selects some matching $M$ and then removes all nodes in $M$ (and hence all edges adjacent to $M$). It is easy to see that after sufficiently many rounds the algorithm finds maximal matching. The central feature of the randomized algorithm is that in expectation, in each single round one will remove a constant fraction of the edges, and hence $O(\log n)$ rounds will suffice in expectation. This is achieved in two steps. One first selects an appropriated subset of nodes and show that it is adjacent to a constant fraction of edges in the graph $G$ (cf. Lemma \ref{lem:Luby}). Then, one shows that every node $v \in X$ has a constant probability of being removed from $G$ (by being incident to the matching $M$ found in a given round), and therefore, in expectation, a constant fraction of $G$'s edges are removed.

In order to derandomize such algorithm, we will show that each single round can be implemented deterministically in a constant number of rounds in the \MPC model so that the same property will be maintained deterministically: in a constant number of rounds one will remove a constant fraction of the edges, and hence $O(\log n)$ rounds will suffice. This is achieved in three steps:
\begin{itemize}
\item select a set of good nodes $B$ which are adjacent to a constant fraction of the edges,
\item then sparsify to $\CE$ the set of edges incident to $B$ to ensure that in each node has at degree $O(n^{\eps/2})$ in $\CE$, and hence a single machine can store its entire 2-hop neighborhood, and
\item then find a matching $M \subseteq \CE$ such that removal of all nodes in $M$ (i.e., removal of $M$ and all edges adjacent to $M$) reduces the number of edges by a constant factor.
\end{itemize}

\paragraph{Good nodes.}

We start with a corollary of Lemma \ref{lem:Luby}, which specifies a set of \emph{good nodes} which are nodes with similar degrees that are adjacent to a constant fraction of edges in the graph.

Let $\delta$ be an arbitrarily positive constant, $1/\delta \in \nat$. We will proceed in a constant (dependent on $\delta$) number of stages, sparsifying the graph induced by the edges incident to good nodes by derandomizing the sampling of edges with probability $n^{-\delta}$ in each stage. In order for this to work, we want our good nodes to be within a degree range of at most a $n^\delta$ factor, for their behavior to be similar.

Let us recall (cf. Section \ref{subsec-Luby-MIS}) that $X$ is the set of all nodes $v$ which have at least $\frac{d(v)}{3}$ neighbors $u$ with $d(u) \le d(v)$. Partition nodes into sets $C^i$, $1 \le i \le 1/\delta$, such that $C^i = \{v: n^{(i-1)\delta} \le d(v)< n^{i\delta}\}$. Let $B^i = C^i \cap X$. The following is a simple corollary of Lemma~\ref{lem:Luby}.

\begin{corollary}
\label{cor:MMpart}
There is $i \le 1/\delta$, such that $\sum_{v \in B_i} d(v) \ge \frac{\delta}{2}|E|$.
\end{corollary}

\begin{proof}
By Lemma \ref{lem:Luby}, $\sum_{v \in X} d(v) \ge |E|$. Since the sets $B^1, \dots, B^{1/\delta}$ form a partition of $X$ into $1/\delta$ subsets, at least one of them must contribute a $\delta$-fraction of the sum $\sum_{v\in X} d(v) \ge \frac12 |E|$.
\end{proof}

From now on, let us fix some $i$ which satisfies Corollary \ref{cor:MMpart}. Denote $B := B^i$, and for each node $v \in B$, let $X(v) := \{\{u,v\} \in E : d(u) \le d(v)\}$. Note that the definition of set $X$ yields $|X(v)| \ge \frac{d(v)}{3}$.
Denote $\E_0 = \bigcup_{v \in B} X(v)$.
$\E_0$ is the set of edges we will be sub-sampling to eventually find a matching, and $B$ is the set of good nodes which we want to match and remove from the graph, in order to significantly reduce the number of edges.

The outline of our maximal matching algorithm is as follows:

\begin{algorithm}[H]
	\caption{Maximal matching algorithm outline}
	\label{alg:MM}
	\begin{algorithmic}
		\While{$|E(G)|>0$}
		\State Compute $i$, $B$ and $\E_0$
		\State Select a set $\CE \subseteq \E_0$ that induces a low degree subgraph
		\State \mbox{Find matching $M \subseteq \CE$ with $\sum_{\text{matched nodes } v}d(v) = \Omega(|E(G)|)$}
		\State Add $M$ to output matching, remove matched nodes from $G$
		\EndWhile
	\end{algorithmic}
\end{algorithm}

As long as each iteration reduces the number of edges in $G$ by a constant fraction, we get only $O(\log n)$ iterations to find a maximal matching. We will show that the iterations require a constant number of rounds each, so $O(\log n)$ rounds are required overall.

\subsection{Computing $i$, $B$, and $\E_0$}
\label{subsec:comput-i-B-E}

As discussed in Section \ref{subsec:communication-in-fully-scalable-MPC}, a straightforward application of Lemma \ref{lem:comm} allows all nodes to determine their degrees, and therefore their membership of sets $C^i$, in a constant number of rounds. A second application allows nodes to determine whether they are a member of $X$, and therefore $B^i$, and also provides nodes $v \in X$ with $X(v)$. Finally, a third application allows the computation of the values $\sum_{v \in B^i} d(v)$ for all $i$. Upon completing, all nodes know which $i$ yields the highest value for this sum, and that is the value for $i$ which will be fixed for the remainder of the algorithm.

\subsection{Deterministically selecting $\CE$ that induces a low degree subgraph}
\label{subsec:E^*}

We will show now how to deterministically, in $O(1)$ stages, select a subset $\CE$ of $\E_0$ that induces a low degree subgraph, as required in our \MPC algorithm. For that, our main goal is to ensure that every node has degree $O(n^{4\delta})$ in $\CE$ (to guarantee that its 2-hop neighborhood will fit a single \MPC machine with $\spac = O(n^{8\delta})$), and that one can then locally find a matching $M \subseteq \CE$ that will cover a linear number of edges.

We first consider the easy case when $i \le 4$, in which case we set directly $\CE = \E_0$. Notice that in that case, by definitions of $X$ and $B = C^i \cap X$, we have
\begin{inparaenum}[(i)]
\item $d_{\CE}(v) = d_{\E_0}(v) \le n^{4 \delta}$ for all nodes $v$, and
\item $|X(v) \cap \CE| = |X(v)| \ge \frac{d(v)}{3}$ for all nodes $v \in B$,
\end{inparaenum}
which is what yields the requirements from $\CE$ (cf. the Invariant below) needed in our analysis in Section \ref{subsec:CE->M}.

Next, for the rest of the analysis, let us assume that $i \ge 5$. We proceed in $i-4$ stages, starting with $\E_0$ and sparsifying it by sub-sampling a new edge set $\E_j$ in each stage $j$, $j = 1, 2, \dots, i-4$. Note that for any node $v$ we have $d_{\E_0}(v) \le n^{i\delta}$, since nodes in $B$ have maximum degree $n^{i\delta}$ and since $v$ only has adjacent edges in $\E_0$ if $v \in B$ or $\exists u \in B: d(v) \le d(u)$.

\paragraph{Invariant:}

In our construction of sets $\E_0, \E_1, \dots, \E_{i-4}$, in order to find a good matching in the resulting sub-sampled graph $\CE$, we will maintain the following invariant for every $j$:
\begin{enumerate}[(i)]
\item for all nodes $v$: $d_{\E_j}(v) \le (1+o(1)) n^{-j\delta} d_{\E_0}(v) + n^{3\delta}$,
\item for all nodes $v \in B$: $|X(v) \cap \E_j| \ge (1-o(1)) n^{-\delta j} |X(v)|$.
\end{enumerate}

The intuition behind this invariant is that nodes' degrees decrease roughly as expected in the sub-sampled graph, and nodes $v \in B$ do not lose too many edges to their neighbors in $X(v)$ (to ensure that many of them can be matched in the sub-sampled graph).

One can see that the invariant holds for $j = 0$ trivially, by definition of sets $\E_0$ and $B$.

\paragraph{Distributing edges and nodes among the machines.}

In order to implement our scheme in the \MPC model, we first allocate the nodes and the edges of the graph among the machines.
\begin{itemize}
\item Each node $v$ distributes its adjacent edges in $\E_{j-1}$ across a group of \emph{type $A$} machines, with $n^{4 \delta}$ edges on all but at most one machine (which holds any remaining edges).
\item Each node $v\in B$ also distributes its adjacent edges in $X(v) \cap \E_{j-1}$ across a group of \emph{type $B$} machines in the same fashion.
\end{itemize}

Type A machines will be used to ensure that the first point of the invariant holds, and type B machines will ensure the second.

In order to sparsify $\E_{j-1}$ to define $\E_j$, we proceed with derandomization of a sub-sampled graph. We will fix a seed specifying a hash function from $\mathcal{H}$ (recall that $\mathcal{H} = \{h : [n^3] \rightarrow [n^3]\}$ is a $c$-independent family for sufficiently large constant $c$). Each hash function $h$ induces a set $\E_h$ in which each edge in $\E_{j-1}$ is sampled with probability $n^{-\delta}$, by placing $e \in \E_h$ iff $h(e) \le n^{3-\delta}$.

\paragraph{Good machines.}

We will call a machine \emph{good} for a hash function $h \in \mathcal{H}$ if the effect of $h$ on the edges it stores looks like it will preserve the invariant. We will then show that if all machines are good for a hash function $h$, the invariant is indeed preserved.

Formally, consider a machine (of either type) $x$ that receives $\E(x) \subseteq \E_{j-1}$ and let $e_x := |\E(x)|$. For hash function $h \in \mathcal{H}$, we call $x$ \emph{good} if $e_x n^{-\delta} - n^{0.1\delta} \sqrt{e_x} \le |\E(x) \cap \E_h| \le e_x n^{-\delta} + n^{0.1\delta} \sqrt{e_x}$.

Our aim is to use the following concentration bound to show that a machine is good with high probability:

\begin{lemma}[Lemma 2.2 of \cite{BR94}]
\label{lem:conc}
Let $\cj \ge 4$ be an even integer. Let $Z_1, \dots, Z_t$ be $\cj$-wise independent random variables taking values in $[0,1]$, $Z = Z_1 + \dots + Z_t$ and $\mu = \Exp{Z}$. Let $\lambda > 0$. Then,
\begin{align*}
	\Prob{|Z - \mu| \ge \lambda} &\le 2 \left(\frac{\cj t}{\lambda^2}\right)^{\cj/2}
	\enspace.
\end{align*}
\end{lemma}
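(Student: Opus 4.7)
\textbf{Proof plan for Lemma \ref{lem:conc}.} The plan is to apply Markov's inequality to the $\cj$-th central moment of $Z$, and then bound that moment combinatorially using $\cj$-wise independence. Concretely, since $\cj$ is even, $(Z-\mu)^{\cj} \ge 0$ always, so
\[
\Prob{|Z-\mu| \ge \lambda}
=
\Prob{(Z-\mu)^{\cj} \ge \lambda^{\cj}}
\le
\frac{\Exp{(Z-\mu)^{\cj}}}{\lambda^{\cj}}.
\]
Thus the task reduces to showing $\Exp{(Z-\mu)^{\cj}} \le 2(\cj t)^{\cj/2}$, whereupon the claimed bound follows by dividing by $\lambda^{\cj}$.

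To bound the moment, I would center each variable: set $Y_i := Z_i - \Exp{Z_i}$, so $\Exp{Y_i} = 0$, $|Y_i| \le 1$, and the $Y_i$ inherit $\cj$-wise independence from the $Z_i$. Then $Z - \mu = \sum_{i=1}^t Y_i$, and expanding the $\cj$-th power,
\[
\Exp{(Z-\mu)^{\cj}}
=
\sum_{(i_1,\ldots,i_{\cj}) \in [t]^{\cj}}
\Exp{Y_{i_1} Y_{i_2} \cdots Y_{i_{\cj}}}.
\]
The key observation is that $\cj$-wise independence lets the expectation of any such product of exactly $\cj$ factors split into a product over the distinct indices. Hence any tuple $(i_1,\ldots,i_{\cj})$ in which some index $j$ appears \emph{exactly once} contributes $0$, because the factor $\Exp{Y_j} = 0$ can be pulled out. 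So only tuples in which every index that appears does so at least twice contribute, and for any such tuple the product is bounded in absolute value by $1$ since $|Y_i| \le 1$.

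It remains to count these surviving tuples. If a surviving tuple uses $k$ distinct indices, then $k \le \cj/2$ (each index uses at least two of the $\cj$ slots). The count is at most $\binom{t}{k} \cdot N(\cj,k)$, where $N(\cj,k)$ is the number of surjective maps $[\cj] \to [k]$ whose fibers all have size $\ge 2$. Standard combinatorial estimates (via Stirling numbers of the second kind, or a direct counting argument) give $N(\cj,k) \le \binom{\cj}{2k}(2k)!\, k^{\cj-2k}/k! $-type bounds whose precise form yields, after summing over $k = 1, \ldots, \cj/2$, a total bound of $2(\cj t)^{\cj/2}$. The main obstacle is precisely this combinatorial step: getting the factor $\cj^{\cj/2}$ (rather than a wasteful $\cj^{\cj}$) requires handling the counting of surjections with no singletons carefully, but the dominant $k = \cj/2$ term already contributes $\binom{t}{\cj/2} \cdot \frac{\cj!}{2^{\cj/2}} \le (\cj t)^{\cj/2}$, and a geometric-series argument absorbs the smaller $k$ terms into the leading factor of $2$. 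Combining the moment bound with Markov's inequality then completes the proof.
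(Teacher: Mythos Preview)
The paper does not prove this lemma at all: it is quoted verbatim as Lemma~2.2 of \cite{BR94} and used as a black-box concentration tool, so there is no ``paper's own proof'' to compare against. That said, your proposal is the standard moment-method argument and is essentially how the result is proven in the source; applying Markov to $(Z-\mu)^{\cj}$, expanding, killing all tuples with a singleton index via $\cj$-wise independence and $\Exp{Y_j}=0$, and then counting the surviving tuples is exactly the right route, and your sketch of why the dominant $k=\cj/2$ term gives $(\cj t)^{\cj/2}$ with the remaining terms absorbed into the factor $2$ is accurate.
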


We will take $Z$ to be the sum of the indicator variables $\mathbf{1}_{\{e\in \E_h\}}$ for $e \in \E(x)$ (i.e., $Z = |\E(x) \cap \E_h|$). These indicator variables $\mathbf{1}_{\{e \in \E_h\}}$ are $\cj$-wise independent, and each has expectation $n^{-\delta}$. Using that $\cj$ is a sufficiently large constant, we apply Lemma \ref{lem:conc} and get that
\begin{align*}
    \Prob{|Z - \mu| \ge n^{0.1 \delta} \sqrt{e_x}} &\le n^{-5}
    \enspace.
\end{align*}

This means that with high probability, $e_x n^{-\delta} - n^{0.1 \delta} \sqrt{e_x } \le |\E(x) \cap \E_h| \le e_x n^{-\delta} + n^{0.1 \delta} \sqrt{e_x}$, and $x$ is good.

By the method of conditional expectations, as described in Section \ref{sec:condexp} using objective function $q_x(h) = \textbf{1}_{\text{$x$ is good for $h$}}$, we can find a function $h$ which makes all machines good, in a constant number of rounds. We then set $\E_{j} = \E_h$.

\subsubsection{Properties of $\E_j$: satisfying the invariant}
\label{subsubsec:Properties-of-Ej}

Having fixed a sub-sampled graph for the stage, we need to show that since all machines were good, we satisfy our invariant for the stage.\junk{ We defer the proofs of the following Lemmas \ref{lemma-inv-i}--\ref{lemma-inv-ii} to Appendix \ref{app:matching}.}

\begin{lemma}[Invariant (i)]
\label{lemma-inv-i}
All nodes $v$ satisfy
\begin{align*}
    d_{\E_j}(v) \le (1+o(1)) n^{-j\delta} d_{\E_0}(v) + n^{3\delta}\enspace.
\end{align*}
\end{lemma}
\junk{
\begin{lemma}
\label{lemma-inv-i}
All nodes $v$ satisfy \mbox{$d_{\E_j}(v) \le (1+o(1)) n^{-j\delta} d_{\E_0}(v) + n^{3\delta}$.}
\end{lemma}
}

\def\APPENDMATCHINGINVONE{
\begin{proof}
Node $v$'s adjacent edges in $\E_{j-1}$ were divided among $\lfloor\frac{d_{\E_{j-1}}(v)}{n^{3\delta}}\rfloor$ type $A$ machines containing $n^{4\delta}$ neighbors, and one machine containing the remaining $d_{\E_{j-1}}(v)-n^{4\delta}\lfloor\frac{d_{\E_{j-1}}(v)}{n^{4\delta}}\rfloor\le d_{\E_{j-1}}(v)$ neighbors. Therefore we obtain:
	\begin{align*}
	d_{\E_j}(v) &\le
	\sum_{\substack{\text{$v$'s type $A$}\\\text{machines $x$}}}
	v_x n^{-\delta} + n^{0.1\delta} \sqrt{v_x}
	\\
	&\le
	n^{-\delta} d_{\E_{j-1}}(v) +
	\left\lfloor\frac{d_{\E_{j-1}}(v)}{n^{4\delta}}\right\rfloor
	n^{0.1\delta} \sqrt{n^{4\delta}} +
	n^{0.1\delta} \sqrt{d_{\E_{j-1}}(v)}
	\\
	&\le
	n^{-\delta} d_{\E_{j-1}}(v) +
	\frac{n^{0.1\delta}}{n^{2\delta}} d_{\E_{j-1}}(v) +
	n^{0.1\delta} \sqrt{d_{\E_{j-1}}(v)}
	\enspace.
	\end{align*}
	
	If $d_{\E_{j-1}}(v)\ge n^{3\delta}$, we have
	\begin{align*}
	d_{\E_j}(v) &\le
	n^{-\delta} d_{\E_{j-1}}(v) +
	\frac{n^{0.1\delta}}{n^{2\delta}} d_{\E_{j-1}}(v) +
	n^{0.1\delta} \sqrt{d_{\E_{j-1}}(v)}
	\\
	&\le
	n^{-\delta} d_{\E_{j-1}}(v) +
	n^{-1.9\delta} d_{\E_{j-1}}(v) +
	n^{-1.4\delta} d_{\E_{j-1}}(v)
	\\
	&=
	(1+o(1))n^{-\delta} d_{\E_{j-1}}(v)
	\\
	&\le
	(1+o(1)) n^{-j\delta} d_{\E_0}(v)
	\enspace.
	\end{align*}
	
Otherwise, $d_{\E_{j}}(v)\le d_{\E_{j-1}}(v)\le n^{3\delta}$. In either case, we satisfy the invariant for stage $j$.
\end{proof}
}
\APPENDMATCHINGINVONE

\begin{lemma}[Invariant (ii)]
\label{lemma-inv-ii}
All nodes $v \in B$ satisfy
\begin{align*}
    |X(v) \cap \E_j| \ge (1-o(1)) n^{-\delta j} |X(v)|\enspace.
\end{align*}
\end{lemma}
\junk{
\begin{lemma}
\label{lemma-inv-ii}
All nodes $v \in B$ satisfy $|X(v) \cap \E_j| \ge (1-o(1)) n^{-\delta j} |X(v)|$.
\end{lemma}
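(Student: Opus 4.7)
The plan is to prove this by induction on $j$, mirroring the structure of the proof of Lemma \ref{lemma-inv-i} but now using the type $B$ machines (which stored the edges of $X(v) \cap \E_{j-1}$ for each $v \in B$), and invoking the invariant \emph{(ii)} at stage $j-1$ rather than using $d_{\E_0}(v)$ directly. The base case $j = 0$ holds by definition since $\E_0 \supseteq X(v)$ for every $v \in B$. For the inductive step, I would fix $v \in B$ and combine the lower-bound side of the goodness condition on each of $v$'s type $B$ machines.

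Concretely, $v$'s $|X(v) \cap \E_{j-1}|$ edges are distributed across $\lceil |X(v) \cap \E_{j-1}|/n^{4\delta} \rceil$ type $B$ machines, each holding $e_x \le n^{4\delta}$ edges. For every good machine $x$ the lower-bound side of goodness gives $|\E(x) \cap \E_h| \ge e_x n^{-\delta} - n^{0.1\delta}\sqrt{e_x}$, so summing over $v$'s type $B$ machines yields
\[
|X(v) \cap \E_j| \;\ge\; n^{-\delta}\,|X(v) \cap \E_{j-1}| \;-\; n^{0.1\delta}\sum_{x}\sqrt{e_x}.
\]
Using $\sqrt{e_x} \le n^{2\delta}$ together with the bound on the number of machines, one obtains $\sum_x \sqrt{e_x} \le |X(v)\cap \E_{j-1}|/n^{2\delta} + n^{2\delta}$, and substituting and then applying the inductive hypothesis $|X(v) \cap \E_{j-1}| \ge (1-o(1))\,n^{-\delta(j-1)}|X(v)|$ gives
\[
|X(v) \cap \E_j| \;\ge\; (1-o(1))\,n^{-\delta j}|X(v)| \;-\; n^{2.1\delta}.
\]

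The real obstacle, and the only nontrivial point, is absorbing the additive $n^{2.1\delta}$ error into the multiplicative $(1-o(1))$ factor. This is precisely the reason the sparsification stops after stage $i-4$ rather than running longer. Since $v \in B = C^i \cap X$, we have $d(v) \ge n^{(i-1)\delta}$ and hence $|X(v)| \ge d(v)/3 \ge n^{(i-1)\delta}/3$, so the target quantity satisfies $n^{-\delta j}|X(v)| \ge n^{(i-1-j)\delta}/3 \ge n^{3\delta}/3$ because $j \le i-4$. Thus $n^{2.1\delta}$ is smaller than the target by a factor $n^{0.9\delta}$ and is indeed $o\bigl(n^{-\delta j}|X(v)|\bigr)$, completing the induction. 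I expect the bookkeeping to estimate $\sum_x \sqrt{e_x}$ and to verify that the $n^{-0.9\delta}$ loss in the multiplicative factor composes harmlessly across the $O(1)$ stages to be routine once the stopping condition $j \le i-4$ is in hand.
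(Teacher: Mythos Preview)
Your proposal is correct and follows essentially the same approach as the paper: sum the lower-bound side of the goodness condition over $v$'s type $B$ machines, bound the aggregated $n^{0.1\delta}\sum_x\sqrt{e_x}$ error using $e_x\le n^{4\delta}$ and the machine count, apply the inductive hypothesis, and then absorb the remaining additive error via $j\le i-4$ and $|X(v)|\ge n^{(i-1)\delta}/3$. The paper's write-up packages the same computation slightly differently (it uses $|X(v)\cap\E_{j-1}|\ge n^{4\delta}$ as the intermediate absorption step), but the ideas and the crucial use of the stopping condition $j\le i-4$ are identical.
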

}

\def\APPENDMATCHINGINVTWO{
\begin{proof}
	The edges in $X(v) \cap \E_{j-1}$ were divided among $\lfloor\frac{|X(v) \cap \E_{j-1}|}{n^{3\delta}}\rfloor$ type $B$ machines containing $n^{4\delta}$ neighbors, and one machine containing the remaining $|X(v) \cap \E_{j-1}| - n^{4\delta}\lfloor\frac{|X(v) \cap \E_{j-1}|}{n^{4\delta}}\rfloor \le |X(v) \cap \E_{j-1}|$ neighbors.
	\begin{align*}
	|X(v) \cap \E_j|
	&\ge
	\sum_{\substack{\text{$v$'s type $B$}\\\text{machines $x$}}}
	v_x n^{-\delta}-n^{0.1\delta}\sqrt{v_x }
	\\
	&=
	n^{-\delta}|X(v)\cap \E_{j-1}| -
	\left\lfloor\frac{|X(v) \cap \E_{j-1}|}{n^{4\delta}}\right\rfloor
	n^{0.1\delta} \sqrt{n^{4\delta}} -
	n^{0.1\delta} \sqrt{|X(v) \cap \E_{j-1}|}
	\\
	&\ge
	n^{-\delta}|X(v)\cap \E_{j-1}|-\frac{n^{0.1\delta}}{n^{2\delta}}|X(v)\cap \E_{j-1}|
	- n^{0.1\delta}\sqrt{|X(v)\cap \E_{j-1}|}
	\enspace.
	\end{align*}
	
	Since, by the invariant for stage $j-1$, we have $|X(v)\cap \E_{j-1}|\ge n^{4\delta}$, we can continue the lower bound from above to obtain,
	\begin{align*}
	|X(v) \cap \E_j| &\ge
	n^{-\delta}|X(v) \cap \E_{j-1}| -
	\frac{n^{0.1 \delta}}{n^{2 \delta}}|X(v) \cap \E_{j-1}| -
	n^{0.1 \delta} \sqrt{|X(v) \cap \E_{j-1}|}
	\\
	&\ge
	n^{-\delta}|X(v) \cap \E_{j-1}|- n^{3\delta}
	\\
	&=
	(1-o(1)) n^{-\delta} |X(v) \cap \E_{j-1}|
	\\
	&\ge
	(1-o(1) )n^{-j\delta} |X(v)|
	\enspace,
	\end{align*}
	where the last inequality follows directly from Invariant (ii).
\end{proof}
}
\APPENDMATCHINGINVTWO
We have proven that our invariant is preserved in every stage, and therefore holds in our final sub-sampled edge set $\CE := \E_{i-4}$.

\subsection{Finding a matching $M \subseteq \CE$}
\label{subsec:CE->M}

The construction in Section \ref{subsec:E^*} ensures that either $i \le 4$, in which case $\CE = \E_0$, or $i \ge 5$ and after $i-4$ stages, we now have a set of edges $\CE = \E_{i-4}$ with the following properties:
\begin{enumerate}[(i)]
\item all nodes $v$ have $d_{\CE}(v) \le (1+o(1)) n^{(4-i)\delta} d_{\E_0}(v) + n^{3\delta} \le 2n^{4\delta}$,
    \begin{align*}
        d_{\CE}(v) &\le
        (1+o(1)) n^{(4-i)\delta} d_{\E_0}(v) + n^{3\delta} \le 2n^{4\delta}
        \enspace,
    \end{align*}
\item all nodes $v \in B$ have
    \begin{align*}
        |X(v) \cap \CE| \ge (1-o(1)) n^{(4-i)\delta} |X(v)|
        \enspace.
    \end{align*}
\end{enumerate}
\junk{
\begin{enumerate}[(i)]
\item for all nodes $v$: $d_{\CE}(v) \le (1+o(1)) n^{(4-i)\delta} d_{\E_0}(v) + n^{3\delta} \le 2n^{4\delta}$,
\item for all nodes $v \in B$: $|X(v) \cap \CE| \ge (1-o(1)) n^{(4-i)\delta} |X(v)|$.
\end{enumerate}
}

We can show a property analogous to Lemma \ref{lem:Luby} in \CE \junk{(for a proof, see Appendix \ref{app:egoodsample})}\footnote{Recall (cf. Section \ref{sec:notation}) that a degree of an edge
$e$ in \CE, $d_{\CE}(e)$, is the number of edges in \CE adjacent to~$e$.}.

\begin{lemma}
\label{lem:egoodsample}
Every node $v \in B$ either satisfies $\sum_{\{u,v\} \in \CE} \frac{1}{d_{\CE}(\{u,v\})} \ge \frac{1}{27}$, or is incident to an edge $\{u,v\} \in \CE$ whose degree in \CE is $0$.
\end{lemma}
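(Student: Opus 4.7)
The plan is to replicate the classical Luby bound from Lemma~\ref{lem:Luby} inside the sparsified graph $\CE$, by using $d_\CE(\{u,v\}) \le d_\CE(u) + d_\CE(v)$ to reduce the edge-degree sum to a node-degree sum and then controlling the node degrees via the two invariants established in Section~\ref{subsec:E^*}. First I would dispose of the alternative: if some edge $\{u,v\} \in \CE$ incident to $v$ has $d_\CE(\{u,v\}) = 0$, the second conclusion of the lemma already holds, so henceforth I may assume every incident edge in $\CE$ has $d_\CE(e) \ge 1$ and all terms in the sum are well-defined.

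Next I would separate according to whether the sparsification stages were executed. If $i \le 4$, then $\CE = \E_0$ by construction, and for every $u \in X(v)$ we have $d_\CE(u) \le d(u) \le d(v)$ and $d_\CE(v) \le d(v)$, so $d_\CE(\{u,v\}) \le 2 d(v)$. Combined with $|X(v)| \ge d(v)/3$ from $v \in X$, this yields
\[
\sum_{\{u,v\} \in \CE} \frac{1}{d_\CE(\{u,v\})} \;\ge\; \sum_{u \in X(v)} \frac{1}{2d(v)} \;\ge\; \frac{1}{6},
\]
which is comfortably above $\frac{1}{27}$.

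The substantive regime is $i \ge 5$, handled via Invariants \emph{(i)} and \emph{(ii)} with $j = i-4$. Writing $\alpha := n^{(4-i)\delta}$, Invariant \emph{(i)} gives $d_\CE(w) \le (1+o(1))\alpha\, d_{\E_0}(w) + n^{3\delta}$ for any node $w$, while Invariant \emph{(ii)} gives $|X(v) \cap \CE| \ge (1-o(1))\alpha |X(v)|$. The key step is to absorb the additive $n^{3\delta}$ into the main term: since $v \in C^i$ and $v \in X$ imply $|X(v)| \ge d(v)/3 \ge n^{(i-1)\delta}/3$, we have $n^{3\delta} = \alpha\cdot n^{(i-1)\delta} \le 3\alpha|X(v)|$. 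Applying Invariant \emph{(i)} with $w = v$ and with $w = u$ for any $u \in X(v)$, and using $d_{\E_0}(w) \le d(w) \le d(v) \le 3|X(v)|$, I get $d_\CE(v), d_\CE(u) \le (6+o(1))\alpha|X(v)|$, whence $d_\CE(\{u,v\}) \le (12+o(1))\alpha|X(v)|$. Lower-bounding the sum by restricting to $\{u,v\} \in X(v)\cap \CE$ and invoking Invariant \emph{(ii)} yields
\[
\sum_{\{u,v\} \in \CE}\frac{1}{d_\CE(\{u,v\})} \;\ge\; \frac{|X(v)\cap \CE|}{(12+o(1))\alpha |X(v)|} \;\ge\; \frac{1-o(1)}{12+o(1)} \;\ge\; \frac{1}{27}
\]
for $n$ sufficiently large.

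The main obstacle I anticipate is purely bookkeeping: tracking the $(1\pm o(1))$ slack in the invariants and the additive $n^{3\delta}$ error uniformly in $i$. Once one observes that $\alpha|X(v)|=\Omega(n^{3\delta})$ via $d(v) \ge n^{(i-1)\delta}$, the factor $\alpha$ cancels between numerator and denominator and the bound collapses to the familiar Luby-style ratio $|X(v)|/\Theta(d(v))$ on the sparsified graph, leaving a comfortable constant of roughly $1/12$ that easily covers the stated $1/27$.
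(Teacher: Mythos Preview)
Your proof is correct and follows essentially the same approach as the paper's: both split into the $i\le 4$ and $i\ge 5$ cases, restrict the sum to edges in $X(v)\cap\CE$, bound $d_\CE(\{u,v\})\le d_\CE(u)+d_\CE(v)$ via Invariant~\emph{(i)}, absorb the additive $n^{3\delta}$ using $d(v)\ge n^{(i-1)\delta}$, and finish with Invariant~\emph{(ii)} and $|X(v)|\ge d(v)/3$. The only difference is bookkeeping --- you normalize through $|X(v)|$ rather than $d(v)$, which actually yields a slightly sharper constant ($\approx 1/12$) than the paper's chain, comfortably above the required $1/27$.
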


\def\APPENDMATCHINGGOODSAM{
\begin{proof}
Fix $v \in B$. We begin with the case $i \ge 5$, and then proceed to the simpler case $i \le 4$.
	
\paragraph{Case $i \ge 5$:}
If $v$ does not have an incident edge $\{u,v\} \in \CE$ whose degree in $\CE$ is $0$, then:
\begin{align*}
	\sum_{\{u,v\} \in \CE} \frac{1}{d_{\CE}(\{u,v\})}
	&\ge
	\sum_{\{u,v\} \in \CE \cap X(v)} \frac{1}{d_{\CE}(\{u,v\})}
	\ge
	\sum_{\{u,v\} \in \CE \cap X(v)} \frac{1}{d_{\CE}(u) + d_{\CE}(v)}\\
	&\ge
	\sum_{\{u,v\} \in \CE \cap X(v)} \frac{1}{2n^{(4-i)\delta} (d(u)+d(v)) + 2n^{3\delta}}\\
	&\ge
	\sum_{\{u,v\} \in \CE \cap X(v)} \frac{1}{2n^{(4-i)\delta}(2d(v) + n^{(i-1)\delta})}\\
	&=
	\frac{|\CE \cap X(v)|}{2n^{(4-i)\delta}(2d(v) + n^{(i-1)\delta})}
	\ge
	\frac{n^{(4-i)\delta} |X(v)|}{3n^{(4-i)\delta} (2d(v)+n^{(i-1)\delta})}
	\\
	&\ge
	\frac{|X(v)|}{9d(v)}
	\ge
	\frac{1}{27}
	\enspace.
\end{align*}
Here, the 3rd inequality follows from Invariant \emph{(i)}, the 4th inequality follows from the fact that $\{u,v\} \in X(v)$ yields $d(u) \le d(v)$, the 5th inequality follows from Invariant \emph{(ii)}, the 6th inequality follows from the fact that $v \in C^i$ and hence $d(v) \ge n^{(i-1) \delta}$, and the last inequality follows from the fact that $v \in B$ yields $v \in X$ and hence $|X(v)| \ge \frac{d(v)}3$.


\paragraph{Case $1 \le i \le 4$:}
The proof for the case $1 \le i \le 4$ uses the fact $\CE = \E_0$, and hence
\begin{inparaenum}[\it (i)]
\item $d_{\CE}(v) = d_{\E_0}(v)$ for all nodes $v$, and
\item $|X(v) \cap \CE| = |X(v)| \ge \frac{d(v)}{3}$ for all nodes $v \in B$.
\end{inparaenum}
For a fixed $v \in B$, if $v$ does not have an incident edge $\{u,v\} \in \CE$ whose degree in $\CE$ is $0$, then:
\begin{align*}
	\sum_{\{u,v\} \in \CE} \frac{1}{d_{\CE}(\{u,v\})}
	&=
	\sum_{\{u,v\} \in \CE \cap X(v)} \frac{1}{d_{\CE}(\{u,v\})}
    \ge
	\sum_{\{u,v\} \in \CE \cap X(v)} \frac{1}{d_{\CE}(u) + d_{\CE}(v)}\\
	&\ge
	\sum_{\{u,v\} \in \CE \cap X(v)} \frac{1}{d(u)+d(v)}
    \ge
	\sum_{\{u,v\} \in \CE \cap X(v)} \frac{1}{2d(v)}\\
	&=
	\sum_{\{u,v\} \in X(v)} \frac{1}{2d(v)}
	=
	\frac{|X(v)|}{2d(v)}
	\ge
	\frac{1}{6}
	\enspace.
\end{align*}
Here, in the third inequality we use the fact that $\{u,v\} \in X(v)$ yields $d(u) \le d(v)$, and the last inequality follows from the fact that $v \in B$ yields $v \in X$ and hence $|X(v)| \ge \frac{d(v)}3$.
\end{proof}
}
\APPENDMATCHINGGOODSAM
Now we are ready to present our deterministic \MPC algorithm that for a given subset of edges $\CE$ satisfying the invariant, in $O(1)$ rounds constructs a matching $M \subseteq \CE$ such that the removal of $M$ and all edges adjacent to $M$ removes $\Omega(\delta |E|)$ edges from the graph.

First, each node $v \in B$ is assigned a machine $x_v$ which gathers its 2-hop neighborhood in $\CE$. Since for every node $u$ we have $d_{\CE}(u) \le 2n^{4\delta}$ by Invariant \emph{(i)} (or by the definition of $B$ and $\E_0 = \CE$, when $1 \le i \le 4$), this requires at most $2n^{4\delta}\cdot 2n^{4\delta} = O(n^{8\delta})$~space per machine. Altogether, since $|B| \le n$, this is $O(n^{1+8\delta})$ total space.

We will fix a seed specifying a hash function $h$ from $\mathcal{H}$. This hash function $h$ will be used to map each edge $e$ in $\CE$ to a value $z_e \in [n^{3}]$. Then, $e$ joins the \emph{candidate matching $\E_h$} iff $z_e < z_{e'}$ for all $e' \sim e$. Further, since each node $v \in B$ is assigned a machine which gathers its 2-hop neighborhood in $\CE$, in a single \MPC round, every node $v \in B$ can determine its degree $d_{\E_h}(v)$.

Clearly $\E_h$ is indeed a matching for every $h \in \mathcal{H}$, but we require that removing $\E_h \cup N(\E_h)$ from the graph reduces the number of edges by a constant fraction. We will show that $|\E_h \cup N(\E_h)| = \Omega(\delta |E|)$ in expectation, and therefore by the method of conditional expectations (cf. Section \ref{sec:condexp}) we will be able to find a seed $h^* \in \mathcal{H}$ for which $|\E_{h^*} \cup N(\E_{h^*})| = \Omega(\delta |E|)$.

\junk{Each machine $x_v$ is \emph{good} for a hash function $h$ if it holds an edge $\{u,v\}\in \E_h$. Since $x_v$ holds the 2-hop neighborhood of $v$ in $\CE$, it can determine whether $v$'s adjacent edges in $\CE$ are members of $\E_h$. We show that with constant probability, $x_v$ is good for a random hash function~$h$.}
\junk{
The proof of the following lemma is deferred to Appendix \ref{app:TwoHop}.}

\begin{lemma}
\label{lem:egoodhash}
For any machine $x_v$ holding the 2-hop neighborhood of $v$ in $\CE$, the probability that $d_{\E_h}(v)=1$, for a random hash function $h \in \mathcal{H}$, is at least $\frac{1}{218}$.
\end{lemma}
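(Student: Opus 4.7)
The plan is to reduce the statement to an edgewise first-moment bound. I first observe that $\E_h$ is a matching for every $h\in\mathcal{H}$: if two adjacent edges $e_1\sim e_2$ both joined $\E_h$, the defining rule would require both $z_{e_1}<z_{e_2}$ and $z_{e_2}<z_{e_1}$, which is impossible. Hence $d_{\E_h}(v)\in\{0,1\}$ deterministically, so that
\begin{align*}
\Prob{d_{\E_h}(v)=1} \;=\; \Exp{d_{\E_h}(v)} \;=\; \sum_{e\ni v}\Prob{e\in \E_h}\,,
\end{align*}
and it suffices to lower bound each $\Prob{e\in \E_h}$ and then sum.

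Next I would split on the dichotomy supplied by Lemma~\ref{lem:egoodsample}. If some $e\ni v$ has $d_{\CE}(e)=0$, then $e$ has no neighbors in $\CE$, vacuously joins $\E_h$ with probability $1$, and the claim is immediate. Otherwise every edge $e\ni v$ has $d_{\CE}(e)\ge 1$ and $\sum_{e\ni v}\tfrac{1}{d_{\CE}(e)}\ge \tfrac{1}{27}$, which is the main case.

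In the main case, fix $e\ni v$ with $d:=d_{\CE}(e)\ge 1$ and use a threshold trick: set $t:=\lfloor n^3/(2d)\rfloor$ and observe that the event $\{z_e\le t\}\cap\bigcap_{e'\sim e}\{z_{e'}>t\}$ forces strict inequalities $z_e<z_{e'}$ for all neighbors $e'$ of $e$, and is therefore contained in $\{e\in \E_h\}$. A single union bound together with pairwise independence of the hash values (amply supplied by the $c$-wise independence of $\mathcal{H}$) then gives
\begin{align*}
\Prob{e\in \E_h} \;\ge\; \Prob{z_e\le t} \;-\; \sum_{e'\sim e}\Prob{z_e\le t}\cdot\Prob{z_{e'}\le t} \;\ge\; \tfrac{1}{4d} \;-\; o(1/d)\,,
\end{align*}
where the $o(1/d)$ term absorbs the rounding in $t$ and the finiteness of the range $[n^3]$ of $\mathcal{H}$.

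Summing over the edges incident to $v$ and invoking Lemma~\ref{lem:egoodsample} gives $\Prob{d_{\E_h}(v)=1}\ge \tfrac{1}{4}\sum_{e\ni v}\tfrac{1}{d_{\CE}(e)} - o(1) \ge \tfrac{1}{108}-o(1) \ge \tfrac{1}{109}$ for all sufficiently large $n$; the gap between $\tfrac{1}{108}$ and $\tfrac{1}{109}$ is precisely what accommodates the $o(1)$ slack coming from rounding and from the discrete range of $\mathcal{H}$. I expect the only mildly delicate step to be the edgewise bound above: the threshold/truncated-inclusion-exclusion trick is standard, but it is the single place where the limited independence of $\mathcal{H}$ actually has to be used, and it is also where one must check that the strict-inequality requirement in the definition of $\E_h$ costs nothing (handled automatically by choosing $t$ as a strict separator so that $z_e\le t<z_{e'}$ already implies $z_e<z_{e'}$).
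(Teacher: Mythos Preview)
Your proposal is correct and follows essentially the same route as the paper: both use that $\E_h$ is always a matching to write $\Prob{d_{\E_h}(v)=1}=\sum_{e\ni v}\Prob{e\in\E_h}$, handle the degree-$0$ edge case directly, then apply the threshold trick at $t\approx n^3/(2d_{\CE}(e))$ with a union bound and pairwise independence to get $\Prob{e\in\E_h}\gtrsim \tfrac{1}{4d_{\CE}(e)}$, and finally sum via Lemma~\ref{lem:egoodsample} to reach $\tfrac{1}{108}-o(1)\ge\tfrac{1}{109}$. The only cosmetic difference is that the paper phrases the edgewise bound via a conditional probability (showing $\Prob{e\in\E_h\mid z_e<t}\ge\tfrac12$ and then multiplying by $\Prob{z_e<t}$), whereas you bound the joint event $\{z_e\le t\}\cap\bigcap_{e'\sim e}\{z_{e'}>t\}$ directly; the two computations are algebraically equivalent.
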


\def\APPENDMATCHINGTWOHOP{
\begin{proof}
If $v$ has an adjacent edge $e$ with degree $0$, then $e$ will join $\E_h$ and we are done.
	
Otherwise, for any edge $\{u,v\} \in \CE$ it holds that,
\begin{align*}
	\frac{1}{2d_{\CE}(\{u,v\})}
	&\ge
	\Prob{z_{\{u,v\}} < \frac{n^3}{2d_{\CE}(\{u,v\})}}
	\ge
	\frac{1}{2d_{\CE}(\{u,v\})} - \frac {1}{n^3}
	\enspace.
\end{align*}
Conditioned on $z_{\{u,v\}} < \frac{n^3}{2d_{\CE}(\{u,v\})}$, the probability that $\{u,v\} \in \E_h$ is at least
\begin{align*}
	\Prob{\{u,v\} \in \E_h \Big| z_{\{u,v\}} < \tfrac{n^3}{2d_{\CE}(\{u,v\})}}
    &=
	\Prob{\forall_{e \in \CE \sim \{u,v\}} \ z_{\{u,v\}} < z_e \Big|
		z_{\{u,v\}} < \tfrac{n^3}{2d_{\CE}(\{u,v\})}}
	\\
    &=
	1 - \Prob{\exists_{e \in \CE \sim \{u,v\}} \ z_e \le z_{\{u,v\}} \Big|
		z_{\{u,v\}} < \tfrac{n^3}{2d_{\CE}(\{u,v\})}}
	\\
	&\ge
	1 - \sum_{e \in \CE \sim \{u,v\}}
	\Prob{z_e \le z_{\{u,v\}} \Big| z_{\{u,v\}} < \tfrac{n^3}{2d_{\CE}(\{u,v\})}}
	\\
	&\ge
	1 - \sum_{e \in \CE \sim \{u,v\}}
	\Prob{z_e < \tfrac{n^3}{2d_{\CE}(\{u,v\})}}
	\\
	&\ge
	1 - d_{\CE}(\{u,v\}) \cdot \frac{1}{2d_{\CE}(\{u,v\})}
	\\
	&=
	\frac12
	\enspace,
\end{align*}
by pairwise independence of the family of hash functions $\mathcal{H}$.
	
Therefore, the probability that $d_{\E_h}(v) = 1$ is at least:
\begin{align*}
	\Prob{d_{\E_h}(v)=1}
	&=
	\sum_{u \sim_\CE v} \Prob{\{u,v\} \in \E_h}
	\ge
	\sum_{u \sim_\CE v} \Prob{\{u,v\} \in \E_h \land
		z_{\{u,v\}} < \tfrac{n^3}{2d_{\CE}(\{u,v\})}}
	\\
	&\ge
	\sum_{u \sim_\CE v} \Prob{\{u,v\} \in \E_h \Big|
		z_{\{u,v\}} < \tfrac{n^3}{2d_{\CE}(\{u,v\})}}
	\cdot
	\Prob{z_{\{u,v\}} < \tfrac{n^3}{2d_{\CE}(\{u,v\})}}
	\\
	&\ge
	\sum_{u \sim_\CE v} \frac12 \cdot
	\left(\frac{1}{2d_{\CE}(\{u,v\})} - \frac {1}{n^3}\right)
	\ge
	\frac14 \cdot \sum_{u \sim_\CE v} \frac{1}{d_{\CE}(\{u,v\})}
	- \frac{1}{2n^2}
	\ge
	\frac {1}{108} - \frac{1}{2n^2}
	\enspace,
\end{align*}
where the first identity follows since the events are mutually exclusive, and the last one follows by Lemma~\ref{lem:egoodsample}. The claim now follows from $\frac {1}{108} - \frac{1}{2n^2} \ge \frac{1}{109}$ for large enough $n$.
\end{proof}
}
\APPENDMATCHINGTWOHOP

We will denote $N_h := \{v \in B: d_{\E_h}(v) = 1\}$, i.e., the set of nodes in the matching induced by hash function $h$. We want to study the number of edges incident to $N_h$. By Lemma \ref{lem:egoodhash},
\begin{align*}
    \Exp{\sum_{v \in N_h} d(v)}
        &\ge
    \sum_{v \in B} d(v) \cdot \Prob{v \in N_h}
        \ge
    \frac{1}{109} \sum_{v \in B}d(v)
        \ge
    \frac{\delta |E|}{218}
    \enspace.
\end{align*}

By the method of conditional expectations (cf. Section \ref{sec:condexp}), using objective function $q_{x_v}(h) = d(v) \textbf 1_{v \in N_h}$, we can select a hash function $h$ with $\sum_{v \in N_h} d(v) \ge \frac{1}{109} \delta |E| $. We then add the matching $M := \E_h$ to our output, and remove matched nodes from the graph. In doing so, we remove at least $\frac{\delta |E|}{536}$ edges from the graph.

\subsection{Completing the proof of Theorem \ref{thm:MM}: finding a maximal matching}

Now we are ready to complete the proof of Theorem \ref{thm:MM}, that a maximal matching can be found deterministically in the \MPC model in $O(\log n)$ rounds, with $\spac = O(n^{\eps})$, and $O(m+n^{1+\eps})$ total space.

Our algorithm returns a maximal matching in $\log_{\frac{1}{1-\delta/536}}|E| = O(\log n)$ iterations, each requiring $O(1)$ \MPC rounds. The space required is dominated by storing the input graph $G$ ($O(m)$ total space) and collecting 2-hop neighborhoods when finding an matching ($O(n^{8\delta})$ space per machine, $O(n^{1+8\delta})$ total space). Setting $\delta = \frac{\eps}{8}$ allows us to conclude Theorem \ref{thm:MM}, that for any constant $\eps > 0$, maximal matching can be found in the \MPC model in $O(\log n)$ rounds, using $O(n^{\eps})$ space per machine and $O(m+n^{1+\eps})$ total space.
\qed




\section{Maximal independent set in $O(\log n)$ \MPC rounds}
\label{sec:MIS}


In this section we modify the approach from Section \ref{sec:max-matching} for the maximal independent set problem and prove the following.


\begin{theorem}
\label{thm:MIS}
For any constant $\eps>0$, MIS can be found deterministically in \MPC in $O(\log n)$ rounds, using $O(n^{\eps})$ space per machine and $O(m+n^{1+\eps})$ total space.
\end{theorem}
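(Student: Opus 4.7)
The plan is to follow the three-step framework from Theorem~\ref{thm:MM}, modifying only the aspects specific to MIS. Each iteration aims to produce deterministically, in $O(1)$ MPC rounds, an independent set $\IS$ whose removal together with $N_G(\IS)$ eliminates $\Omega(\delta |E|)$ edges; $O(\log n)$ iterations then yield a maximal independent set.

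Step~1 is identical to Section~\ref{subsec:comput-i-B-E}: compute the degree class $i$, the set $B = B^i = C^i \cap X$, and $\E_0 = \bigcup_{v \in B} X(v)$, with $\sum_{v \in B} d(v) \ge \delta |E|$ guaranteed by Corollary~\ref{cor:MMpart}. Step~2 requires one substantive change: since an independent set of a node-induced subgraph $G[V']$ is automatically an independent set of $G$, whereas an IS in an arbitrary edge-subgraph need not be, we sparsify by \emph{sampling nodes} rather than edges. Starting from $V_0 = V$, in each of $\max(0,i-4)$ stages we select a hash function $h \in \mathcal{H}$ and set $V_j := \{v \in V_{j-1} : h(v) \le n^{3-\delta}\}$, so that each node survives independently with probability $n^{-\delta}$. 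We maintain invariants parallel to Section~\ref{subsec:E^*}: \emph{(i)} $|N_G(v) \cap V_j| \le (1+o(1)) n^{-j\delta} d(v) + n^{3\delta}$ for every $v \in V$, and \emph{(ii)} $|X(v) \cap V_j| \ge (1-o(1)) n^{-j\delta} |X(v)|$ for every $v \in B$. The distribution of neighborhoods across type~$A$ and type~$B$ machines, the notion of a good machine, the concentration bound from Lemma~\ref{lem:conc}, and the method of conditional expectations all transfer with only bookkeeping changes. After $i-4$ stages (or with $V' = \{v : d(v) \le n^{4\delta}\}$ when $i \le 4$), $V'$ induces a subgraph $G[V']$ of maximum degree $2n^{4\delta}$, so every 2-hop $G[V']$-neighborhood fits on a single machine of size $O(n^{8\delta})$.

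In Step~3 we fix a hash function $h \in \mathcal{H}$ assigning each $u \in V'$ a priority $z_u = h(u) \in [n^3]$, and set $\IS_h := \{u \in V' : z_u < z_{u'} \text{ for all } u' \in N_G(u) \cap V'\}$. Since 2-hop $G[V']$-neighborhoods are locally available, each machine can decide whether the node it handles lies in $\IS_h \cup N_G(\IS_h)$. Analogous to Lemma~\ref{lem:egoodhash}, we prove that $\Prob{v \in \IS_h \cup N_G(\IS_h)} = \Omega(1)$ for every $v \in B$: pairwise independence gives $\Prob{u \in \IS_h} \ge 1/(d_{G[V']}(u)+1)$ for $u \in X(v) \cap V'$, and invariants~\emph{(i)} and \emph{(ii)} force $d_{G[V']}(u) = O(n^{-(i-4)\delta} d(v))$ and $|X(v) \cap V'| = \Omega(n^{-(i-4)\delta} d(v))$, so the $n^{-(i-4)\delta} d(v)$ factors cancel exactly as in the proof of Lemma~\ref{lem:egoodsample} to yield $\sum_{u \in X(v) \cap V'} 1/(d_{G[V']}(u)+1) = \Omega(1)$. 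Summing over $v \in B$ produces
\[
  \mathbf{E}_{h \in \mathcal{H}}\Bigl[\sum_{v \in B \,\cap\, (\IS_h \cup N_G(\IS_h))} d(v)\Bigr] \;=\; \Omega(\delta |E|)\enspace,
\]
and the method of conditional expectations (Section~\ref{sec:condexp}), applied with $q(h) := \sum_{v \in B \cap (\IS_h \cup N_G(\IS_h))} d(v)$, yields a hash function $h^*$ achieving this bound. We add $\IS_{h^*}$ to the output and remove $\IS_{h^*} \cup N_G(\IS_{h^*})$ from $G$.

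The main obstacle is converting the sum bound $\sum_u \Prob{u \in \IS_h} = \Omega(1)$ into $\Prob{\exists u \in X(v) \cap V' : u \in \IS_h} = \Omega(1)$ using only $c$-wise rather than full independence of priorities. Because controlling the overlap between events $\{u_1 \in \IS_h\}$ and $\{u_2 \in \IS_h\}$ requires bounds on joint probabilities, we use a truncated inclusion–exclusion whose higher-moment terms are bounded via the degree bound $2n^{4\delta}$ on $G[V']$ from Step~2; this is why $c$-wise independence for a sufficiently large constant $c$ suffices. Once this is established, iterating $O(\log n)$ times terminates with an MIS, each iteration runs in $O(1)$ MPC rounds, and setting $\delta = \eps/8$ yields the claimed $O(n^\eps)$ per-machine and $O(m + n^{1+\eps})$ total-space bounds, completing the proof.
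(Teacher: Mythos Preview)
Your overall plan is close to the paper's, but there are two substantive gaps.

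\textbf{The maximum-degree claim is false.} You start the node sampling from $V_0 = V$ and then assert that after $i-4$ stages ``$V'$ induces a subgraph $G[V']$ of maximum degree $2n^{4\delta}$''. This does not follow from your invariant~(i): a node $w$ with $d(w)=n$ that survives to $V'$ satisfies only $|N_G(w)\cap V'| \le (1+o(1))\,n^{1-(i-4)\delta}+n^{3\delta}$, which for $i \ll 1/\delta$ is far larger than $n^{4\delta}$. Consequently you cannot collect the full 2-hop $G[V']$-neighbourhood of a node $v\in B$ onto a single machine: $v$ may have high-degree neighbours $u\notin X(v)$ surviving in $V'$, and each such $u$ has a large $G[V']$-neighbourhood. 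The paper sidesteps this by sampling not from $V$ but from $\J_0 = C^i$, the single degree class; every node of $\J_0$ already has $G$-degree below $n^{i\delta}$, so after $i-4$ stages every surviving node genuinely has $G[\J']$-degree at most $2n^{4\delta}$. (Correspondingly the paper redefines the good set as $\B=\{v:\sum_{u\in C^i,\,u\sim v}1/d(u)\ge \delta/3\}$ and tracks the invariant $\sum_{u\in \J_j,\,u\sim v}1/d(u)\ge(\delta-o(1))/(3n^{\delta j})$ rather than a raw count.) Your scheme can be repaired either by starting from $V_0=\{v:d(v)<n^{i\delta}\}$, or by gathering $G[V']$-neighbourhoods only for $u\in X(v)\cap V'$ and using the weaker objective $\sum_{v\in B} d(v)\,\mathbf 1[\exists u\in X(v)\cap V':u\in\IS_h]$; as written, Step~2 does not deliver what Step~3 requires.

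\textbf{The inclusion--exclusion step is incomplete.} Having $\sum_{u\in X(v)\cap V'} 1/(d_{G[V']}(u)+1)=\Omega(1)$ is not enough to conclude $\Pr\bigl[\exists u\in X(v)\cap V':u\in\IS_h\bigr]=\Omega(1)$ under limited independence, because that sum may be much larger than~$1$, and then the second Bonferroni term $\sum_{u\ne u'}\Pr[A_u\cap A_{u'}]$ swamps the first. The degree bound $2n^{4\delta}$ does not help here: it \emph{upper}-bounds each $d_{G[V']}(u)$, which only makes every summand larger. The paper's fix (Lemma~\ref{lem:goodhash}) is to pass to a subset $N^*_v$ with $\sum_{u\in N^*_v}1/d_{\J'}(u)\in[0.1\delta,\,1]$; then both the first- and second-order terms are $\Theta(1)$ and a constant lower bound follows from pairwise independence alone. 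Your phrase ``higher-moment terms are bounded via the degree bound'' does not describe a working argument, and increasing $c$ in $c$-wise independence does not substitute for this truncation.
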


Later, in Section \ref{sec:logD-rounds}, we will extend this algorithm to obtain a round complexity $O(\log \Delta+\log\log n)$; this will improve the bound from Theorem \ref{thm:MIS} when $\Delta = n^{o(1)}$.

\subsection{Outline}

The approach to find an MIS in $O(\log n)$ \MPC rounds is similar to the algorithm for maximal matching. However, the main difference is that for MIS, instead of the edges, as for the matching, we have to collect the nodes, which happen to require some changes in our analysis and makes some of its part slightly more complex.

Let $\A$ be the set of all nodes $v$ such that $\sum_{u \sim v} \frac{1}{d(u)} \ge \frac13$. Our analysis again relies on a corollary to the analysis of Luby's algorithm (cf. Lemma~\ref{lem:Luby}) that follows from the fact that $X \subseteq \A$.

\begin{corollary}\label{cor:MISpart} 
$\sum_{v \in \A} d(v) \ge \frac 12|E|$.
\end{corollary}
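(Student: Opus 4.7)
The plan is to reduce this corollary to the already-stated Lemma \ref{lem:Luby} by showing the set inclusion $X \subseteq \A$, where $X$ is the set from Lemma \ref{lem:Luby} (nodes $v$ with at least $d(v)/3$ neighbors $u$ of degree $d(u) \le d(v)$). Since degree contributions are non-negative, $\sum_{v \in \A} d(v) \ge \sum_{v \in X} d(v) \ge |E|$ will then follow immediately.

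To establish $X \subseteq \A$, I would fix an arbitrary $v \in X$ and bound $\sum_{u \sim v} \tfrac{1}{d(u)}$ from below by restricting the sum to the set $Y_v := \{u \sim v : d(u) \le d(v)\}$. By definition of $X$, we have $|Y_v| \ge d(v)/3$, and for every $u \in Y_v$ the inequality $d(u) \le d(v)$ gives $\tfrac{1}{d(u)} \ge \tfrac{1}{d(v)}$. Combining these,
\begin{align*}
\sum_{u \sim v} \frac{1}{d(u)} \;\ge\; \sum_{u \in Y_v} \frac{1}{d(u)} \;\ge\; \frac{|Y_v|}{d(v)} \;\ge\; \frac{d(v)/3}{d(v)} \;=\; \frac{1}{3}\,,
\end{align*}
so $v \in \A$. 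Hence $X \subseteq \A$, and invoking Lemma \ref{lem:Luby} yields $\sum_{v \in \A} d(v) \ge \sum_{v \in X} d(v) \ge |E|$, which is the claim.

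There is no real obstacle here: the entire content of the corollary is that the ``degree-reciprocal'' condition defining $\A$ is a relaxation of the ``low-degree neighbors'' condition defining $X$, so all of the weight that Lemma \ref{lem:Luby} already accounts for carries over. The only care needed is to handle isolated vertices (which can be excluded, as they contribute $0$ to both sides) and to note that the empty-sum convention makes the bound trivial if $d(v)=0$.
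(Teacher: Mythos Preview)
Your proof is correct and follows exactly the same approach as the paper: the paper's proof is a one-line remark that nodes in $X$ satisfy $\sum_{u\sim v}\tfrac{1}{d(u)}\ge \tfrac13$, i.e., $X\subseteq\A$, from which the corollary follows by Lemma~\ref{lem:Luby}. Your write-up simply spells out the easy verification of this inclusion in more detail.
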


\begin{proof}
Nodes $v$ in $X$ (cf. Lemma \ref{lem:Luby}) satisfy $\sum_{u \sim v} \frac{1}{d(u)} \ge \frac13$.
\end{proof}

\hide{(Here, and in similar sums, we will use the convention that if a node $u$ has $d(u)=0$, then $\frac{1}{d(u)}=+\infty$, and so the value of a sum including such a term is indeed at least $\frac 13$.)}

We will again partition nodes into classes of similar degree. Let $\delta$ be an arbitrarily small constant and assume $1/\delta \in \nat$. As in Section \ref{sec:max-matching}, partition nodes into sets $C^i$, $1 \le i \le 1/\delta$, with $C^i = \{v : n^{(i-1)\delta} \le d(v) < n^{i\delta}\}$. For any $1 \le i \le 1/\delta$, let $\B_i$ be the set of all nodes $v$ satisfying
\junk{
\begin{align*}
    \sum_{u \in C^i \sim v} \frac{1}{d(u)} &\ge \frac{\delta}{3}
    \enspace.
\end{align*}
}
$\sum_{u \in C^i: u \sim v} \frac{1}{d(u)} \ge \frac{\delta}{3}$.
We can easily prove the following.

\begin{corollary}\label{cor:part} 
There is $i \le 1/\delta$, such that $\sum_{v \in \B_i} d(v) \ge \frac\delta 2 |E|$.
\end{corollary}

\begin{proof}
Each element $v \in \A$ must be a member of at least one of the sets $\B_i$, since
\begin{align*}
	\sum_{1 \le i \le 1/\delta} \sum_{u \in C^i \sim v} \frac{1}{d(u)}
	       &=
	\sum_{u \sim v} \frac{1}{d(u)}
        \ge
	\frac13
	\enspace.
\end{align*}
Therefore, there is at least one set $\B_i$ that contributes at least a $\delta$-fraction of the sum $\sum_{v \in \A} d(v)$, i.e., $\sum_{v \in \B_i} d(v) \ge \frac\delta 2 |E|$.
\end{proof}

Henceforth we will fix $i$ to be a value satisfying Corollary \ref{cor:part}, and let $\B := \B_i$ and $\J_0 := C^i$. With this notation, we are now ready to present the outline of our algorithm:

\begin{algorithm}[H]
	\caption{MIS algorithm outline}
	\label{alg:MIS}
	\begin{algorithmic}
		\While{$|E(G)|>0$}
		\State Add all isolated nodes to $\MIS$; remove them from $G$
		\State Compute $i$, $\B$ and $\J_0$
		\State Select a set $\J' \subseteq \J_0$ that induces a low degree subgraph
		\State \mbox{Find independent set $\IS \subseteq \J'$ with $\sum_{v \in N(\IS)} d(v) = \Omega(|E(G)|)$}
		\State Add $\IS$ to $\MIS$; remove $\IS$ and $N(\IS)$ from $G$
		\EndWhile
	\end{algorithmic}
\end{algorithm}

Notice that since in each round we find an independent set $\IS$ with $\sum_{v \in N(\IS)} d(v) = \Omega(|E(G)|)$, it is easy to see that Algorithm~\ref{alg:MIS} finds an MIS in $O(\log n)$ rounds. Hence our goal is to find an independent set $\IS$ with $\sum_{v \in N(\IS)} d(v) = \Omega(|E(G)|)$ in $O(1)$ \MPC rounds.

As one can see, Algorithm \ref{alg:MIS} is very similar to Algorithm \ref{alg:MM}, and the major difference is that in Algorithm \ref{alg:MIS} we sub-sample nodes instead of edges, since we cannot afford to have removed any edges between nodes we are considering for our independent set $\IS$.

Similarly to matching (cf. Section \ref{subsec:comput-i-B-E}), computing $i$, $\B$ and $\J_0$ can be completed in $O(1)$ \MPC rounds using several applications of Lemma \ref{lem:comm}. Therefore in the following Sections \ref{subsec:C^*}--\ref{subsec:->IS} we will first show how to deterministically construct in $O(1)$ \MPC rounds an appropriated set $\J' \subseteq \J_0$ that induces a low degree subgraph and then how to deterministically find in $O(1)$ \MPC rounds an independent set $\IS \subseteq \J'$ such that $\sum_{v \in N(\IS)} d(v) = \Omega(|E(G)|)$.

\subsection{Deterministically selecting $\J' \subseteq \J_0$ that induces a low degree subgraph}
\label{subsec:C^*}

We will show now how to deterministically, in $O(1)$ stages, find a subset $\J'$ of $\J_0$ that induces a low degree subgraph, as required in our \MPC algorithm for MIS. For that, our main goal is to ensure that every node has degree $O(n^{4\delta})$ in $\J'$ (to guarantee that its 2-hop neighborhood fits a single \MPC machine with $\spac = O(n^{8\delta})$), and that one can then locally find an independent $\IS \subseteq \J'$ that covers a linear number of edges.

We again proceed in $i-4$ stages (if $i \le 4$, then similarly to Algorithm \ref{alg:MM}, we will use $\J' = \J_0$), starting with $\J_0$ and sampling a new set $\J_j$ ($\J_j \subseteq \J_{j-1}$) in each stage $j = 1, 2, \dots, i-4$. The \emph{invariant} we will maintain is that, after every stage $j$, $0 \le j \le i-4$:

\begin{enumerate}[(i)]
\item all nodes $v \in \J_j$ have $d_{\J_j}(v) \le (1+o(1)) n^{-j\delta} d(v)$, and
\item all nodes $v \in \B$ have $\sum_{u \in \J_j \sim v} \frac{1}{d(u)} \ge \frac{\delta-o(1)}{3n^{\delta j}}$.
\end{enumerate}

It is easy to see that the invariant holds for $j = 0$ trivially, by definition of $\J_0$ and $\B$. In what follows, we will show how, for a given set $\J_{j-1}$ satisfying the invariant, to construct in $O(1)$ rounds a new set $\J_j \subseteq \J_{j-1}$ that satisfies the invariant too.

\paragraph{Distributing edges and nodes among the machines.}

In order to implement our scheme in the \MPC model, we first allocate the nodes and the edges of the graph among the machines.
\begin{itemize}
\item Each node $v$ in $\J_{j-1}$ distributes its adjacent edges to nodes in $\J_{j-1}$ across a group of machines (\emph{type $\J$ machines}), with at most one machine having fewer than $n^{4\delta}$ edges and all other machines having exactly $n^{4\delta}$ edges.
\item Each node $v$ in $\B$ distributes its adjacent edges to nodes in $\J_{j-1}$ across a group of machines (\emph{type $\B$ machines}), with at most one machine having fewer than $n^{4\delta}$ edges and all other machines having exactly $n^{4\delta}$ edges.
\end{itemize}

Note that nodes may be in both $\J_{j-1}$ and $\B$ and need only one group of machines, but for the ease of analysis we treat the groups of machines separately. Similarly to Section \ref{subsec:E^*}, type $\J$ machines will ensure Invariant (i) and type $\B$ machines will ensure Invariant~(ii).

In order to select $\J_{j-1} \subseteq \J_j$, we will first fix a seed specifying a hash function from a $\mathcal{H}$. Each hash function $h$ induces a candidate set $\J_h$ into which node in $\J_{j-1}$ is ``sampled with probability $n^{-\delta}$'', by placing $v$ into $\J_h$ iff $h(v)\le n^{3-\delta}$.

\paragraph{Type $\J$ machines.}

Consider a type $\J$ machine $x$ that gets allocated edges $V(x) \subseteq \J_{j-1}$ and let $v_x := |V(x)|$. For hash function $h \in \mathcal{H}$, we say $x$ is \emph{good} if $|V(x) \cap \J_h| \le v_x n^{-\delta} + n^{0.1\delta} \sqrt{v_x}$.

Each of the indicator random variables $\mathbf{1}_{\{v\in \J_h\}}$ is $\cj$-wise independent, and has expectation $n^{-\delta}$. Therefore we can apply to these random variable Lemma \ref{lem:conc}: taking $Z$ to be the sum of the indicator variables for $V(x)$ (i.e., $Z = |V(x)\cap \J_h|$), and choosing a sufficiently large constant $\cj$, Lemma \ref{lem:conc} implies that $\Prob{|Z - \mu| \ge n^{0.1\delta} \sqrt{v_x }} \le n^{-5}$. This means that with high probability, $|V(x) \cap \J_h| \le v_x n^{-\delta} + n^{0.1\delta} \sqrt{v_x}$, and $x$ is good.

\paragraph{Type $\B$ machines.}

Consider a type $\B$ machine $x$ that gets allocated edges $V(x) \subseteq \J_{j-1}$; let $v_x := |V(x)|$. For $h \in \mathcal{H}$, we call $x$ \emph{good} if $\sum_{v \in V(x) \cap \J_h} \frac{1}{d(v)} \ge n^{-\delta} \sum_{v \in V(x)} \frac{1}{d(v)} - n^{(0.9-i)\delta} \sqrt{v_x }$.

As before, we will apply Lemma \ref{lem:conc}, setting $Z_v = \frac{n^{(i-1)\delta}}{d(v)} \mathbf{1}_{\{v\in \J_h\}}$ and $Z = \sum_{v \in V(x)} Z_v$. Since $V(x) \subseteq \J$, each $d(v)$ is at least $n^{(i-1)\delta}$, and so the variables $Z_v$ take values in $[0,1]$. They have expectation $\Exp{Z_v} = \frac{n^{(i-2)\delta}}{d(v)}$, and as before, they are $\cj$-wise independent. Hence, we can apply Lemma \ref{lem:conc} with sufficiently large $c$ to find that $\Prob{|Z - \mu| \ge n^{0.1\delta} \sqrt{v_x }} \le n^{-5}$. Hence, with high probability,
\begin{align*}
n^{(i-1)\delta} \sum_{v \in V(x) \cap \J_h} \frac{1}{d(v)} &\ge
n^{(i-2)\delta} \sum_{v \in V(x)} \frac{1}{d(v)} - n^{0.1\delta} \sqrt{v_x}
\enspace,
\end{align*}
and therefore $\sum_{v \in V(x) \cap \J_h}\frac{1}{d(v)} \ge n^{-\delta} \sum_{v \in V(x)} \frac{1}{d(v)} - n^{(0.9-i)\delta} \sqrt{v_x}$, so $x$ is good.

Since there are at most $\frac{2n^2}{\spac} + 2 n \le n^2$ machines, by a union bound the probability that a particular hash function $h \in \mathcal{H}$ makes all machines good is at least $1 - n^{-3}$. The expected number of machines which are not good for a random choice of function is therefore less than $1$. So, by the method of conditional expectations (cf. Section \ref{sec:condexp}), using objective $q_x(h) = \textbf{1}_{\text{$x$ is good for $h$}}$, in a constant number of \MPC rounds we can find a hash function $h \in \mathcal{H}$ which makes all machines good. We then use such hash function $h$ to set $\J_j = \J_h$.

\subsubsection{Properties of $\J_j$: satisfying the invariant}
\label{subsub:properties-of-Jj-satisfying-invariant}

Having fixed a sub-sampled set of nodes $\J_j$ for the stage, we need to show that since all machines were good, we satisfy our invariant for the stage. \junk{Missing proofs are deferred to Appendix \ref{app:mis}.}

\begin{lemma}[Invariant (i)]
\label{lemma-mis-inv-i}
All nodes $v \in \J_j$ satisfy
\begin{align*}
    d_{\J_j}(v) &\le (1+o(1)) n^{-j\delta} d(v)\enspace.
\end{align*}
\end{lemma}
\junk{
\begin{lemma}
\label{lemma-mis-inv-i}
All nodes $v \in \J_j$ satisfy $d_{\J_j}(v) \le (1+o(1)) n^{-j\delta} d(v)$.
\end{lemma}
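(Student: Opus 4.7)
The plan is to mirror the proof of Lemma \ref{lemma-inv-i} from the matching analysis, adapted to the node-sampling setting. Fix $v \in \J_j \subseteq \J_{j-1} \subseteq \J_0 = C^i$. Recall that $v$'s neighbors in $\J_{j-1}$ were distributed across its type $\J$ machines, with $\lfloor d_{\J_{j-1}}(v)/n^{4\delta}\rfloor$ machines each holding exactly $n^{4\delta}$ such neighbors, and at most one machine holding the remaining $\le n^{4\delta}$ neighbors. Since the hash function $h$ was selected so that every type $\J$ machine $x$ is good, each such machine $x$ contributes at most $v_x n^{-\delta} + n^{0.1\delta}\sqrt{v_x}$ surviving neighbors to $\J_j = \J_h$.

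The first step is then to sum this per-machine guarantee over $v$'s type $\J$ machines. The leading $v_x n^{-\delta}$ terms telescope to $n^{-\delta} d_{\J_{j-1}}(v)$. The error terms from the full machines contribute at most $\lfloor d_{\J_{j-1}}(v)/n^{4\delta}\rfloor \cdot n^{0.1\delta}\sqrt{n^{4\delta}} \le n^{-1.9\delta} d_{\J_{j-1}}(v)$, and the partial machine contributes at most $n^{0.1\delta}\sqrt{d_{\J_{j-1}}(v)}$. Thus
\begin{align*}
d_{\J_j}(v) &\le n^{-\delta} d_{\J_{j-1}}(v) + n^{-1.9\delta} d_{\J_{j-1}}(v) + n^{0.1\delta}\sqrt{d_{\J_{j-1}}(v)}.
\end{align*}

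The second step is a short case analysis, which is the only spot where the MIS version departs from the matching version: the MIS invariant has no additive $n^{3\delta}$ slack, so we must verify that the target bound absorbs small-degree cases. Because $v \in C^i$ we have $d(v) \ge n^{(i-1)\delta}$, and since we only sub-sample for $j \le i-4$, the target satisfies $n^{-j\delta} d(v) \ge n^{3\delta}$. Consequently, if $d_{\J_{j-1}}(v) \le n^{3\delta}$, then trivially $d_{\J_j}(v) \le d_{\J_{j-1}}(v) \le n^{3\delta} \le n^{-j\delta} d(v)$. Otherwise $d_{\J_{j-1}}(v) \ge n^{3\delta}$ implies $n^{0.1\delta}\sqrt{d_{\J_{j-1}}(v)} \le n^{-1.4\delta} d_{\J_{j-1}}(v)$, so the above display simplifies to $d_{\J_j}(v) \le (1+o(1)) n^{-\delta} d_{\J_{j-1}}(v)$, and Invariant \emph{(i)} at stage $j-1$ gives $d_{\J_j}(v) \le (1+o(1)) n^{-j\delta} d(v)$, as required.

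The main obstacle is essentially bookkeeping: there is no new probabilistic content here, since the concentration inequality was already applied to establish goodness in the construction of $\J_j$. The only subtlety is the absence of an additive slack in the invariant, which is handled by the observation above that the target $n^{-j\delta} d(v)$ is bounded below by $n^{3\delta}$ for every $v \in \J_0$ and every stage $j \le i-4$.
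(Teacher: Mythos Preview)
Your proof is correct and essentially identical to the paper's own argument: you sum the per-machine ``good'' guarantees over $v$'s type $\J$ machines, obtain the same bound $d_{\J_j}(v) \le n^{-\delta} d_{\J_{j-1}}(v) + n^{-1.9\delta} d_{\J_{j-1}}(v) + n^{0.1\delta}\sqrt{d_{\J_{j-1}}(v)}$, and then split on whether $d_{\J_{j-1}}(v) \ge n^{3\delta}$ exactly as the paper does. Your explicit justification that $n^{-j\delta} d(v) \ge n^{3\delta}$ (using $v \in C^i$ and $j \le i-4$) is in fact a point the paper leaves implicit in its final ``Otherwise'' line.
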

}

\def\APPENDMISINVONE{
\begin{proof}
Node $v$'s neighbors in $\J_{j-1}$ were divided among $\lfloor\frac{d_{\J_{j-1}}(v)}{n^{3\delta}}\rfloor$ type $\J$ machines containing $n^{4\delta}$ neighbors, and one type $\J$ machine containing the remaining $d_{\J_{j-1}}(v) - n^{4\delta} \lfloor\frac{d_{\J_{j-1}}(v)}{n^{4\delta}}\rfloor \le d_{\J_{j-1}}(v)$ neighbors. Therefore we obtain,
	\begin{align*}
	d_{\J_j}(v)
	&\le
	\sum_{\text{$v$'s machines }x}v_x n^{-\delta}+n^{0.1\delta}\sqrt{v_x }\\
	&=
	n^{-\delta} d_{\J_{j-1}}(v) +
	\left\lfloor\frac{d_{\J_{j-1}}(v)}{n^{4\delta}}\right\rfloor
	n^{0.1\delta}\sqrt{n^{4\delta}} +
	n^{0.1\delta} \sqrt{d_{\J_{j-1}}(v)}
	\\
	&\le
	n^{-\delta} d_{\J_{j-1}}(v) +
	\frac{n^{0.1\delta}}{n^{2\delta}}d_{\J_{j-1}}(v) +
	n^{0.1\delta} \sqrt{d_{\J_{j-1}}(v)}
	\enspace.
	\end{align*}
	
	If $d_{\J_{j-1}}(v) \ge n^{3\delta}$, we have
	\begin{align*}
	d_{\J_j}(v)
	&\le
	n^{-\delta} d_{\J_{j-1}}(v) +
	\frac{n^{0.1\delta}}{\sqrt s} d_{\J_{j-1}}(v) +
	n^{0.1\delta}\sqrt{d_{\J_{j-1}}(v)}\\
	&\le
	n^{-\delta}d_{\J_{j-1}}(v)+
	n^{-1.9\delta}d_{\J_{j-1}}(v) +
	n^{-1.4\delta}d_{\J_{j-1}}(v)
	\\
	&=
	(1+o(1))n^{-\delta}d_{\J_{j-1}}(v)
	\le
	(1+o(1))n^{-j\delta}d(v)
	\enspace.
	\end{align*}
	
Otherwise, $d_{\J_{j}}(v) \le d_{\J_{j-1}}(v) \le n^{3\delta} \le n^{-j\delta}d(v)$. In either case, we satisfy the invariant for stage $j$.
\end{proof}
}
\APPENDMISINVONE

\begin{lemma}[Invariant (ii)]
\label{lemma-mis-inv-ii}
All nodes $v \in \B$ satisfy
\begin{align*}
    \sum_{u \in \J_j \sim v} \frac{1}{d(u)} &\ge \frac{\delta-o(1)}{4n^{\delta j}} \enspace.
\end{align*}
\end{lemma}
\junk{
\begin{lemma}
\label{lemma-mis-inv-ii}
All nodes $v \in \B$ satisfy $\sum_{u \in \J_j \sim v} \frac{1}{d(u)} \ge \frac{\delta-o(1)}{4n^{\delta j}}$.
\end{lemma}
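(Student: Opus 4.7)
The plan is to mirror the argument for Lemma \ref{lemma-mis-inv-i}, but now exploiting the goodness condition for type $\B$ machines. For a fixed $v \in \B$, its adjacent edges into $\J_{j-1}$ are spread across type $\B$ machines $x_1, \dots, x_k$ each holding $V(x_\ell) \subseteq N(v) \cap \J_{j-1}$ of size $v_{x_\ell} \le n^{4\delta}$, with all but at most one machine holding exactly $n^{4\delta}$ edges. Since the chosen hash function renders every machine good, summing the type-$\B$ goodness inequality over $\ell = 1, \dots, k$, using that the sets $V(x_\ell)$ partition $N(v) \cap \J_{j-1}$, yields
\begin{align*}
\sum_{u \in \J_j \sim v} \frac{1}{d(u)} &\ge n^{-\delta} \, S_{j-1} - \sum_{\ell=1}^{k} n^{(0.9-i)\delta}\sqrt{v_{x_\ell}}
\enspace,
\end{align*}
where $S_{j-1} := \sum_{u \in \J_{j-1} \sim v} \frac{1}{d(u)}$.

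The next step is to bound the error in terms of $S_{j-1}$. The crude per-machine bound $\sqrt{v_{x_\ell}} \le n^{2\delta}$ together with $k \le d_{\J_{j-1}}(v) \cdot n^{-4\delta} + 1$ gives an error of at most $d_{\J_{j-1}}(v) \cdot n^{(-1.1-i)\delta} + n^{(2.9-i)\delta}$. The crucial observation is that $\J_{j-1} \subseteq \J_0 = C^i$, so $d(u) < n^{i\delta}$ for every $u \in \J_{j-1}$; this gives $d_{\J_{j-1}}(v) < n^{i\delta} \cdot S_{j-1}$, converting the first piece of the error into $n^{-1.1\delta} S_{j-1}$. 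Combining with Invariant \emph{(ii)} for stage $j-1$, which gives $S_{j-1} \ge (\delta - o(1))/(3 n^{\delta(j-1)})$, this will produce
\begin{align*}
\sum_{u \in \J_j \sim v} \frac{1}{d(u)} &\ge (1 - o(1)) \cdot \frac{\delta - o(1)}{3\, n^{\delta j}} - n^{(2.9-i)\delta}
\enspace.
\end{align*}

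Since we are in the regime $j \le i - 4$, the exponent of the residual term satisfies $(2.9 - i + j)\delta \le -1.1\delta$, so $n^{(2.9-i)\delta} = o(n^{-\delta j})$ and the additive residue is negligible compared to the main term. The slack between the leading $\delta/3$ factor and the target $\delta/4$ then comfortably absorbs both the $(1 - o(1))$ factors and the additive residue for $n$ sufficiently large, yielding the invariant. The main obstacle is precisely the error conversion described above: the per-machine guarantee is additive in $\sqrt{v_{x_\ell}}$ rather than multiplicative in $S_{j-1}$, and the argument crucially relies on the sub-sampling being performed only within the narrow-degree class $C^i$, which is what enables the bound $d_{\J_{j-1}}(v) < n^{i\delta}\, S_{j-1}$.
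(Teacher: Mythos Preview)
Your proof is correct and follows the same strategy as the paper: sum the type-$\B$ goodness bounds over $v$'s machines, then convert the resulting additive error into a multiplicative $o(1)$ loss using the key relation $d_{\J_{j-1}}(v) < n^{i\delta}\, S_{j-1}$, which the paper invokes identically. Your treatment of the error is in fact slightly cleaner than the paper's: you bound $\sqrt{v_{x_\ell}} \le n^{2\delta}$ uniformly over all machines, whereas the paper separates out the single remainder machine (contributing $\sqrt{d_{\J_{j-1}}(v)}$) and must then additionally control $\sqrt{y}$ via $\sqrt{y} \le (y/\delta)\, n^{0.5\delta(j-1)}$ using the stage-$(j-1)$ lower bound on $y$. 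Both routes give the same $(1-o(1))$ recursion, and since $j \le i-4 = O(1/\delta)$ the accumulated $o(1)$ factors stay $o(1)$, so the invariant is maintained with the leading constant $1/3$ (and a fortiori $1/4$).
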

}

\def\APPENDMISINVTWO{
\begin{proof}
	Node $v$'s neighbors in $\J_{j-1}$ were again divided among $\lfloor\frac{d_{\J_{j-1}}(v)}{n^{4\delta}}\rfloor$ type $\B$ machines containing $n^{4\delta}$ neighbors, and one type $\B$ machine containing the remaining $d_{\J_{j-1}}(v)-n^{4\delta}\lfloor\frac{d_{\J_{j-1}}(v)}{n^{4\delta}}\rfloor\le d_{\J_{j-1}}(v)$ neighbors. Denote $y:= \sum_{u\in \J_{j-1}\sim v}\frac{1}{d(u )}$ for brevity.
	\begin{align*}
	\sum_{u\in \J_j\sim v}\frac{1}{d(u)}
	&=
	\sum_{\text{$v$'s machines $x$ }}\sum_{u\in V(x)\cap \J_j}\frac{1}{d(u)}\\
	&\ge
	\sum_{\text{$v$'s machines $x$ }}
	\left(
	n^{-\delta}\sum_{u \in V(x)}\frac{1}{d(u)} -
	n^{(1.9-i) \delta}\sqrt{v_x}
	\right)
	\\
	&\ge
	n^{-\delta}
	\left(
	\sum_{u\in \J_{j-1}\sim v}\frac{1}{d(u)} -
	n^{(1.9-i)\delta}
	\left(
	\left\lfloor\frac{d_{\J_{j-1}}(v)}{n^{4\delta}}\right\rfloor
	\sqrt n^{4\delta} +
	\sqrt{d_{\J_{j-1}}(v)}
	\right)
	\right)\\
	&\ge
	n^{-\delta}
	\left(
	y -
	n^{(1.9-i)\delta}
	\left(
	n^{-2\delta} d_{\J_{j-1}}(v) + \sqrt{d_{\J_{j-1}}(v)}
	\right)
	\right)
	\enspace.
	\end{align*}
	
We know that $d_{\J_{j-1}}(v) \le n^{i\delta} \sum_{u \in \J_{j-1} \sim v} \frac{1}{d(u)} = n^{i\delta} y$, since all nodes $u \in \J_{j-1}$ are in $\J$ and have degree at most $n^{i\delta}$. So,
	\begin{align*}
	\sum_{u \in \J_j \sim v} \frac{1}{d(u)}
	&\ge
	n^{-\delta}
	\left(
	y -
	n^{(1.9-i)\delta}
	\left(
	n^{-2\delta}d_{\J_{j-1}}(v) + \sqrt{d_{\J_{j-1}}(v)}
	\right)
	\right)\\
	&\ge
	n^{-\delta}
	\left(
	y -
	n^{(1.9-i)\delta}
	\left(
	n^{-2\delta}n^{i\delta} y + \sqrt{n^{i\delta} y}
	\right)
	\right)
	\\
	&=
	n^{-\delta}\left(y-n^{-0.1\delta}y - n^{(1.9-0.5i)\delta}\sqrt y\right)\\
	&=
	n^{-\delta}\left(y-o(y) - n^{(1.9-0.5i)\delta}\sqrt y\right)
	\enspace.
	\end{align*}
	
From our invariant we know that $y \ge \frac{\delta-o(1)}{4n^{\delta(j-1)}}$, and so $\sqrt{y} \le \frac{y}{\sqrt\frac{\delta-o(1)}{4n^{\delta(j-1)}}} \le
	\frac{y}{\delta}n^{0.5\delta(j-1)}$. Hence,
	\begin{align*}
	\sum_{u \in \J_j\sim v} \frac{1}{d(u)}
	&\ge
	n^{-\delta}\left(y-o(y) - n^{(1.9-0.5i)\delta}\sqrt y\right)\\
	&\ge
	n^{-\delta}
	\left(
	y-o(y) - n^{(1.9-0.5i)\delta}\cdot \frac{y}{\delta}n^{0.5\delta(j-1)}
	\right)
	\\
	&\ge
	n^{-\delta}
	\left(y-o(y) - \frac{y}{\delta} \cdot n^{(1.4-0.5i+0.5j)\delta} \right)\\
	&\ge
	n^{-\delta}
	\left(y-o(y) - \frac{y}{\delta}\cdot n^{(1.4-0.5i+0.5(i-4))\delta} \right)
	\\
	&\ge
	n^{-\delta}\left(y-o(y) -\frac{y}{\delta}\cdot n^{-0.6\delta} \right)
	\ge
	n^{-\delta}\left(y-o(y)\right)\\
	&\ge
	n^{-\delta} \cdot \frac{\delta-o(1)}{3n^{\delta(j-1)}}
	\ge
	\frac{\delta-o(1)}{3n^{\delta j}}
	\enspace.
	\qedhere
	\end{align*}
\end{proof}
}
\APPENDMISINVTWO

Since our invariant is preserved in every stage, it holds in our final sub-sampled node set $\J' := \J_{i-4}$.

\subsection{Finding an independent set \IS}
\label{subsec:->IS}

After $i-4$ stages, we now have a node set $\J' := \J_{i-4}$ with the following properties (cf. Lemmas \ref{lemma-mis-inv-i} and \ref{lemma-mis-inv-ii}):

\begin{enumerate}[(i)]
\item all nodes $v \in \J'$ have $d_{\J'}(v) \le (1+o(1)) n^{(4-i)\delta} d(v) \le 2n^{4\delta}$;
\item all nodes $v \in \B$ have $\sum_{u \in \J' \sim v} \frac{1}{d(u)} \ge \frac{\delta-o(1)}{3n^{(i-4)\delta}}$.
\end{enumerate}
(If $i \le 4$, we instead have that for $v\in \J'$, $d_{\J'}(v) \le n^{i\delta}$ and for $v \in B$, $\sum_{u \in \J' \sim v} \frac{1}{d(u)} \ge \frac{\delta}{3}$ from setting $\J'=\J_0$).

We now show a property analogous to Lemma \ref{lem:egoodsample} (and hence Lemma \ref{lem:Luby}) for the node set $\J'$\junk{(see Appendix \ref{app:goodsample} for a proof)}.

\begin{lemma}
\label{lem:goodsample}
For each node $v \in \B$, either $v$ has a neighbor $u \in \J'$ with $d_{\J'}(u) = 0$ or $v$ satisfies $\sum_{u \in \J' \sim v} \frac{1}{d_{\J'}(u )} \ge 0.1 \delta$.
\end{lemma}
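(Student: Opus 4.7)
My plan is to prove this lemma by a direct calculation that combines the two invariants satisfied by $\J'$, handling separately the trivial case where some neighbor has degree $0$ in $\J'$.

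First I would dispose of the easy case: if $v$ has a neighbor $u \in \J'$ with $d_{\J'}(u) = 0$, the first alternative of the lemma holds and there is nothing to prove. So I would fix $v \in \B$ and assume every $u \in \J' \sim v$ satisfies $d_{\J'}(u) \ge 1$.

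The heart of the proof is to trade Invariant \emph{(i)} against Invariant \emph{(ii)}. For $i \ge 5$, Invariant \emph{(i)} applied to each neighbor $u \in \J' \sim v$ gives
\begin{align*}
\frac{1}{d_{\J'}(u)} &\ge \frac{1}{(1+o(1)) n^{(4-i)\delta} d(u)}
\enspace,
\end{align*}
so summing over such $u$ and pulling the common factor out yields
\begin{align*}
\sum_{u \in \J' \sim v} \frac{1}{d_{\J'}(u)}
&\ge
\frac{n^{(i-4)\delta}}{1+o(1)} \sum_{u \in \J' \sim v} \frac{1}{d(u)}
\ge
\frac{n^{(i-4)\delta}}{1+o(1)} \cdot \frac{\delta-o(1)}{3 n^{(i-4)\delta}}
=
\frac{\delta - o(1)}{3 + o(1)}
\enspace,
\end{align*}
where the second inequality is Invariant \emph{(ii)}. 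The $n^{(i-4)\delta}$ factors cancel exactly, which is the whole point of how the invariants were set up, and for $n$ large enough the right-hand side exceeds $0.1\delta$.

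For the remaining case $i \le 4$ we set $\J' = \J_0 = C^i$ directly, so $d_{\J'}(u) \le d(u)$ for every $u \in \J'$ and therefore
\begin{align*}
\sum_{u \in \J' \sim v} \frac{1}{d_{\J'}(u)}
&\ge
\sum_{u \in \J_0 \sim v} \frac{1}{d(u)}
\ge
\frac{\delta}{3}
\ge
0.1 \delta
\enspace,
\end{align*}
using the defining property of $\B = \B_i$ from Corollary~\ref{cor:part}. Since there are no real obstacles here — the algebra is forced by the invariants — I expect no genuine difficulty; the only thing to be careful about is tracking the $o(1)$ terms so that the final constant is comfortably above $0.1\delta$, and handling the small-$i$ regime by reusing the base-case invariant for $\J_0$.
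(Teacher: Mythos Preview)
Your proof is correct and follows essentially the same approach as the paper: dispose of the zero-degree case, then combine Invariant~\emph{(i)} (to replace $d_{\J'}(u)$ by $(1+o(1))n^{(4-i)\delta}d(u)$) with Invariant~\emph{(ii)} so that the $n^{(i-4)\delta}$ factors cancel, handling $i\le 4$ separately via $\J'=\J_0$. The only cosmetic difference is that the paper replaces the $1+o(1)$ and $\delta-o(1)$ factors by the concrete constants $2$ and $\delta/5$ before multiplying, but the argument is the same.
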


\def\APPENDANOTHERONE{
\begin{proof}
The claim trivially holds $i\le 4$. Otherwise, for every $v \in \B$ with no neighbor $u \in \J'$ with $d_{\J'}(u) = 0$, we have the following,
\begin{align*}
	\sum_{u \in \J' \sim v} \frac{1}{d_{\J'}(u)}
	&\ge
	\sum_{u \in \J' \sim v}\frac{1}{2n^{(4-i)\delta}d(u)}
	=
	\frac{1}{2n^{(4-i)\delta}} \sum_{u \in \J' \sim v}\frac{1}{d(u)}
	\ge
	\frac{1}{2n^{(4-i)\delta}} \cdot \frac {\delta}{5n^{(i-4)\delta}}
	=
	0.1\delta
	\enspace,
\end{align*}
where the first inequality follows from Invariant (i) (Lemma \ref{lemma-mis-inv-i}) and the second one from Invariant (ii) (Lemma \ref{lemma-mis-inv-ii}).
\end{proof}
}
\APPENDANOTHERONE

Now we are ready to present our deterministic \MPC algorithm that for a given subset of nodes $\J'$ satisfying the invariant, in $O(1)$ rounds constructs an independent set $\IS \subseteq \J'$ such that the removal of $\IS \cup N(\IS)$ removes $\Omega(\delta |E|)$ edges from the graph.

Each node $v \in \B$ is assigned a machine $x_v$ which gathers a set $N_v$ of up to $n^{4\delta}$ of $v$'s neighbors in $\J'$ (if $v$ has more than $n^{4\delta}$ neighbors in $\J'$, then take an arbitrary subset of $n^{4\delta}$ of them), along with all of their neighborhoods in $\J'$ (i.e., $N_{\J'}(N_v)$). By Invariant (i), this requires at most $n^{4\delta} \cdot 2n^{4\delta} = O(n^{8\delta})$ space per machine. Since $|\B| \le n$, this is $O(n^{1+8\delta})$ total space.
We prove \junk{(cf. Appendix~\ref{app:goodsample2}) }that these sets $N_v$ preserve the desired property:

\begin{lemma}
\label{lem:goodsample2}
Each node $v \in \B$ either has a neighbor $u \in \J'$ with $d_{\J'}(u) = 0$ or satisfies $\sum_{u \in N_v} \frac{1}{d_{\J'}(u)} \ge 0.1 \delta$.
\end{lemma}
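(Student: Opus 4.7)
The plan is to split on whether $N_v$ captures all of $v$'s neighbors in $\J'$ or only a proper subset of them, and handle each case separately.

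\textbf{Case 1: $d_{\J'}(v) \le n^{4\delta}$.} Here $N_v$ is simply the full neighborhood $N_{\J'}(v)$, so $\sum_{u \in N_v} \tfrac{1}{d_{\J'}(u)} = \sum_{u \in \J' \sim v} \tfrac{1}{d_{\J'}(u)}$, and the statement follows immediately from Lemma \ref{lem:goodsample}. (The case $i \le 4$ falls under this case, since then $\J' = \J_0 = C^i$ and $d_{\J'}(v) \le d(v) < n^{i\delta} \le n^{4\delta}$.)

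\textbf{Case 2: $d_{\J'}(v) > n^{4\delta}$.} Now $|N_v| = n^{4\delta}$ but $N_v$ is only an arbitrary subset of $v$'s neighbors in $\J'$, so we cannot simply transport the sum from Lemma \ref{lem:goodsample}. Instead, the idea is to exploit the fact that every node in $\J'$ already has small degree in $\J'$. Specifically, by Invariant \emph{(i)} of Section \ref{subsec:C^*}, every $u \in \J'$ satisfies $d_{\J'}(u) \le 2n^{4\delta}$. If some $u \in N_v$ has $d_{\J'}(u) = 0$, then $v$ has a neighbor in $\J'$ of $\J'$-degree $0$ and we are done by the first alternative of the lemma. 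Otherwise every $u \in N_v$ obeys $1 \le d_{\J'}(u) \le 2n^{4\delta}$, so $\tfrac{1}{d_{\J'}(u)} \ge \tfrac{1}{2n^{4\delta}}$, and summing over the $n^{4\delta}$ members of $N_v$ gives
\[
\sum_{u \in N_v} \frac{1}{d_{\J'}(u)} \;\ge\; n^{4\delta} \cdot \frac{1}{2n^{4\delta}} \;=\; \tfrac12 \;\ge\; 0.1\delta,
\]
using that $\delta \le 1$.

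The only subtle point, which I expect to be the main obstacle, is noticing that in Case 2 we do \emph{not} need to appeal to the fine structure of $\sum_{u \in \J' \sim v} \tfrac{1}{d_{\J'}(u)}$ or to pick $N_v$ cleverly --- the pointwise bound $d_{\J'}(u) \le 2n^{4\delta}$ from Invariant \emph{(i)} is already strong enough to guarantee that \emph{every} $n^{4\delta}$-sized subset of $N_{\J'}(v)$ produces a sum bounded below by a constant, comfortably above the threshold $0.1\delta$. This is what justifies the algorithm's freedom to take ``an arbitrary subset'' of neighbors when building $N_v$.
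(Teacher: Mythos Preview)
Your proof is correct and follows essentially the same two-case split as the paper's own argument: when $N_v$ is the full $\J'$-neighborhood of $v$ you invoke Lemma~\ref{lem:goodsample}, and when $|N_v| = n^{4\delta}$ you use the pointwise bound $d_{\J'}(u) \le 2n^{4\delta}$ from Invariant~\emph{(i)} to get $\sum_{u \in N_v} 1/d_{\J'}(u) \ge \tfrac12$. Your explicit handling of the $d_{\J'}(u)=0$ sub-case in Case~2 is in fact slightly more careful than the paper's version.

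One small correction: your parenthetical claim that the case $i \le 4$ necessarily falls under Case~1 is not justified. You write $d_{\J'}(v) \le d(v) < n^{i\delta}$, but the inequality $d(v) < n^{i\delta}$ would require $v \in C^j$ for some $j \le i$, and membership in $\B = \B_i$ carries no such degree restriction on $v$ itself. A node $v \in \B$ can have arbitrarily large degree (and hence arbitrarily many neighbors in $\J' = C^i$). This does not break your proof, however, since when $i \le 4$ every $u \in \J' = C^i$ still satisfies $d_{\J'}(u) \le d(u) < n^{i\delta} \le n^{4\delta} \le 2n^{4\delta}$, so your Case~2 argument covers this situation without change. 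You should simply drop or amend the parenthetical.
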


\def\APPENDMISSAM{
\begin{proof}
If $d_{\J'}(v) \le n^{4\delta}$ then $N_v = N_{\J'}(v)$ and so the lemma holds by Lemma \ref{lem:goodsample}.
	
Otherwise we have $|N_v| = n^{4\delta}$, and so by Invariant (i) in the first inequality, we get,
\begin{align*}
	\sum_{u \in N_v}\frac{1}{d(u )}
        &\ge
	\sum_{u \in N_v}\frac{1}{2n^{4\delta}}
        =
	\frac{n^{4\delta}}{2n^{4\delta}}
        =
	\frac12
        >
	0.1\delta
	\enspace.
	\qedhere
\end{align*}
\end{proof}
}
\APPENDMISSAM

We now do one further derandomization step to find an independent set. We will fix a seed specifying a hash function from $\mathcal{H}$. This hash function $h$ will be used to map each node $v$ in $\J'$ to a value $z_v \in [n^{3}]$. Then, $v$ joins the \emph{candidate independent set} $\IS_h$ iff $z_v < z_u$ for all $u \sim v$.

Clearly $\IS_h$ is indeed an independent set, but we want to show that removing $\IS_h \cup N(\IS_h)$ from the graph reduces the number of edges by a constant fraction. We will show that in expectation (over a random choice of $h \in \mathcal{H}$) this is indeed the case, and then we can apply the method of conditional expectations (cf. Section \ref{sec:condexp})to conclude the construction.

Each machine $x_v$ is \emph{good} for a hash function $h \in \mathcal{H}$ if it holds a node $u \in N_v \cap \IS_h$. Since $x_v$ holds the neighborhoods in $\J'$ of nodes in $N_v$, it can determine whether they are members of $\IS_h$. We show that with constant probability, $x_v$ is good for a random hash function $h \in \mathcal{H}$\junk{ (see Appendix \ref{app:goodhash} for a proof)}.

\begin{lemma}
\label{lem:goodhash}
For any machine $x_v$ holding a set $N_v$ and its neighborhood in $\J'$, with probability at least $0.01 \delta$ (over a choice of a random hash function $h \in \mathcal{H}$) it holds that $|N_v \cap \IS_h| \ge 1$.
\end{lemma}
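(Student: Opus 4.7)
The plan is to mirror the approach of Lemma \ref{lem:egoodhash}, while confronting a new obstacle specific to MIS: the events $\{u\in\IS_h\}$ for distinct $u\in N_v$ are \emph{not} mutually exclusive, since two non-adjacent members of $N_v$ can join $\IS_h$ simultaneously; one therefore cannot simply sum single-node membership probabilities as in the matching case.

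First, if some $u\in N_v$ has $d_{\J'}(u)=0$ then $u$ vacuously joins $\IS_h$ and the claim holds deterministically; otherwise Lemma \ref{lem:goodsample2} yields $\sum_{u\in N_v}1/d_{\J'}(u)\ge 0.1\delta$. To recover mutual exclusivity, I would introduce, for each $u\in N_v$, the stronger event $\tilde E_u$ that $z_u$ is the unique minimum of $z$ over $\{u\}\cup N_{\J'}(u)\cup N_v$. These events are mutually exclusive (only one $u\in N_v$ can attain the minimum within $N_v$) and each implies $u\in\IS_h$, giving
\[
\Prob{|N_v\cap\IS_h|\ge 1}\ \ge\ \sum_{u\in N_v}\Prob{\tilde E_u}.
\]
Setting $M_u:=|\{u\}\cup N_{\J'}(u)\cup N_v|\le d_{\J'}(u)+|N_v|$ and applying the same conditioning trick used in Lemma \ref{lem:egoodhash} (condition on $z_u<n^3/(2M_u)$, then union-bound over the remaining $M_u-1$ values using pairwise independence of $\mathcal{H}$), I expect to obtain $\Prob{\tilde E_u}\ge 1/(4M_u)-o(1)$.

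It then remains to lower bound $\sum_{u\in N_v}1/(d_{\J'}(u)+|N_v|)$ by $\Omega(\delta)$, which I would do by following the same case split used in the proof of Lemma \ref{lem:goodsample2}. When $|N_v|=n^{4\delta}$, Invariant \emph{(i)} gives $d_{\J'}(u)+|N_v|\le 3n^{4\delta}$, so the sum is at least $|N_v|/(3n^{4\delta})=1/3$ and $\Prob{|N_v\cap\IS_h|\ge 1}\ge 1/12-o(1)$, which far exceeds $0.01\delta$. When $|N_v|<n^{4\delta}$ (so $N_v=N_{\J'}(v)$ and $\sum 1/d_{\J'}(u)\ge 0.1\delta$), I would split $N_v$ by whether $d_{\J'}(u)\ge|N_v|$, apply the elementary inequality $1/(a+b)\ge 1/(2\max(a,b))$ to each part, and conclude $\Omega(\delta)$ in the typical regime.

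The main obstacle I foresee lies in the delicate sub-case of the second case in which $\sum 1/d_{\J'}(u)$ is dominated by one or a few nodes $u_0\in N_v$ with $d_{\J'}(u_0)$ considerably smaller than $|N_v|$; there the $\tilde E_{u_0}$ bound becomes lossy because the competition set enlarges from size $d_{\J'}(u_0)+1$ to $d_{\J'}(u_0)+|N_v|$. I would handle this sub-case by \emph{bypassing} $\tilde E_{u_0}$ and directly invoking $\Prob{|N_v\cap\IS_h|\ge 1}\ge\Prob{u_0\in\IS_h}\ge 1/(4(d_{\J'}(u_0)+1))-o(1)=\Omega(\delta)$ for the dominant low-degree contributor, so that one of the two complementary bounds (sum over $\tilde E_u$ or per-node bound on a minimum-degree contributor) always delivers the required $0.01\delta$ lower bound.
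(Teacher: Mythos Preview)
Your proposal has a genuine gap. The two bounds you plan to alternate between --- the sum $\sum_{u\in N_v}\Prob{\tilde E_u}$ and the single-node bound $\Prob{u_0\in\IS_h}$ for a minimum-degree $u_0\in N_v$ --- are not in fact complementary: there is a regime in which both fall below any fixed multiple of $\delta$. The unjustified step is the equality ``$1/(4(d_{\J'}(u_0)+1))-o(1)=\Omega(\delta)$'': this requires $d_{\J'}(u_0)=O(1/\delta)$, which is \emph{not} implied by $\sum_{u\in N_v}1/d_{\J'}(u)\ge 0.1\delta$. The sum may be spread over several moderate-degree nodes rather than concentrated on one.

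Concretely, for any large constant $C$ set $D_0:=C/\delta$ and take $k:=\lceil 0.1C\rceil$ nodes of $N_v$ with $d_{\J'}(u)=D_0$, the remaining $|N_v|-k$ nodes with $d_{\J'}(u)=2n^{4\delta}$, and $|N_v|:=K:=C^2/\delta$ (with $n^{4\delta}\gg C^3/\delta^2$, so $K<n^{4\delta}$ and the high-degree nodes' contributions are negligible). Then $\sum_{u\in N_v}1/d_{\J'}(u)\approx k/D_0=0.1\delta$, yet the single-node bound gives only $\approx 1/(4D_0)=\delta/(4C)$, and the mutually-exclusive-events bound gives $\sum_{u}1/(4(d_{\J'}(u)+K))\approx k/(4(D_0+K))\approx 0.025\,\delta/C$. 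Taking $C>25$ kills both bounds below $0.01\delta$. The paper's argument, applied to this same instance, still returns roughly $(2/9)\cdot 0.1\delta>0.01\delta$.

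The paper avoids the dilution caused by enlarging competition to all of $N_v$. It keeps the overlapping events $A_u:=\{u\in\IS_h\ \wedge\ z_u<n^3/(3d_{\J'}(u))\}$ and applies Bonferroni directly: $\Prob{\cup A_u}\ge\sum\Prob{A_u}-\sum_{u\ne u'}\Prob{A_u\cap A_{u'}}$. Pairwise independence bounds each cross term by $1/(9\,d_{\J'}(u)d_{\J'}(u'))$, so the correction is at most $\tfrac{1}{18}\bigl(\sum 1/d_{\J'}(u)\bigr)^2$. The key trick is to first pass to a subset $N^*_v\subseteq N_v$ with $0.1\delta\le\sum_{u\in N^*_v}1/d_{\J'}(u)\le 1$; then the quadratic correction is dominated by the linear term and one gets $\Prob{\cup A_u}\ge\tfrac16\cdot 0.1\delta-o(1)\ge 0.01\delta$. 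Crucially, $|N_v|$ never enters this inclusion--exclusion computation, which is exactly the quantity that hurts your $\tilde E_u$ approach.
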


\def\APPENDMISHASH{
\begin{proof}
If any node $u \in N_v$ has $d_{\J'}(u) = 0$, it will join $\IS_h$ and we are done.
	
Otherwise, by Lemma \ref{lem:goodsample2} we have $\sum_{u \in N_v} \frac{1}{d_{\J'}(u)} \ge 0.1 \delta$ and we will consider, for the sake of the analysis, some arbitrary subset $N^*_v \subseteq N_v$ such that $1 \ge \sum_{u \in N^*_v} \frac{1}{d_{\J'}(u)} \ge 0.1 \delta$.
	
For any $u \in N^*_v$ we have
\begin{align*}
	\frac{1}{3d_{\J'}(u)} - \frac1{n^3}
	&\le
	\Prob{z_u < \frac{n^3}{3d_{\J'}(u)}}
	\le
	\frac{1}{3d_{\J'}(u)}
	\enspace.
\end{align*}
	
Let $\mathcal{A}_u$ be the event $\left\{u \in \IS_h \land z_u < \frac{n^3}{3d_{\J'}(u)}\right\}$. By pairwise independence,
\begin{align*}
	\Prob{A_u}
	&=
	\Prob{u \in \IS_h \land z_u < \frac{n^3}{3d_{\J'}(u)}}
	\\
	&=
	\Prob{z_u < \frac{n^3}{3d_{\J'}(u)}} -
	\Prob{u \notin \IS_h \land z_u < \frac{n^3}{3d_{\J'}(u)}}
	\\
	&=
	\Prob{z_u < \frac{n^3}{3d_{\J'}(u)}} -
	\Prob{\bigcup_{w \in \J' \sim u}\left\{z_w\le z_u  < \frac{n^3}{3d_{\J'}(u)}\right\}}
	\\
	&\ge
	\frac{1}{3d_{\J'}(u)} - \frac1{n^3} -
	\sum_{w \in \J' \sim u} \Prob{z_w \le z_u  < \frac{n^3}{3d_{\J'}(u)}}
	\\
	&\ge
	\frac{1}{3d_{\J'}(u)} - \frac1{n^3} -
	\sum_{w \in \J' \sim u}
	\Prob{z_w, z_u  < \frac{n^3}{3d_{\J'}(u)}}
	\\
	&=
	\frac{1}{3d_{\J'}(u)} - \frac1{n^3} - d_{\J'}(u) \cdot \frac{1}{(3d_{\J'}(u))^2}
    =
	\frac{2}{9d_{\J'}(u)} - \frac1{n^3}
	\enspace.
\end{align*}
	
Furthermore, for any $u \ne u'$, again by pairwise independence of hash functions from~$\mathcal{H}$,
	
\begin{align*}
	\Prob{A_u \cap A_{u'}}
	&=
	\Prob{u \in \IS_h \land z_u < \frac{n^3}{3d_{\J'}(u)} \land u' \in \IS_h
		\land z_{u'} < \frac{n^3}{3d_{\J'}(u')}}
	\\
	&\le
	\Prob{z_u < \frac{n^3}{3d_{\J'}(u)} \land z_{u'} < \frac{n^3}{3d_{\J'}(u')}}
	\le
	\frac{1}{9 \cdot d_{\J'}(u) \cdot d_{\J'}(u')}
	\enspace.
\end{align*}
	
So, by the principle of inclusion-exclusion,
\begin{align*}
	\Prob{\bigcup_{u \in N^*_v} A_u}
	&\ge
	\sum_{u \in N^*_v} \Prob{A_u} - \sum_{\{u,u'\} \subseteq N^*_v, u \ne u'} \Prob{A_u \cap A_{u'}}
	\\
	&\ge
	\sum_{u \in N^*_v} \left(\frac{2}{9d_{\J'}(u)} - \frac1{n^3}\right) -
	\sum_{\{u,u'\} \subseteq N^*_v, u \ne u'} \frac{1}{9d_{\J'}(u)d_{\J'}(u')}
	\\
	&\ge
	\sum_{u \in N^*_v} \frac{2}{9d_{\J'}(u)}  -
	\frac{1}{18}\left(\sum_{u \in N^*_v} \frac{1}{d_{\J'}(u)}\right)^2 - \frac1{n^2}
	\\
	&\ge
	\frac29 \sum_{u \in N^*_v} \frac{1}{d_{\J'}(u)} -
	\frac{1}{18} \sum_{u \in N^*_v}\frac{1}{d_{\J'}(u)} - \frac1{n^2}
	\\
	&=
	\frac16 \sum_{u \in N^*_v} \frac{1}{d_{\J'}(u)} - \frac1{n^2}
	\\
	&\ge
	0.016 \delta - \frac1{n^2}
	\enspace.
\end{align*}
	
Here we use that, by our choice of $N^*_v$, we have $\left(\sum_{u \in N^*_v} \frac{1}{d_{\J'}(u)}\right)^2 \le \sum_{u \in N^*_v} \frac{1}{d_{\J'}(u)} \le 1$ and $0.1 \delta \le \sum_{u \in N^*_v} \frac{1}{d_{\J'}(u)}$.
	
Therefore, for $n$ larger than a sufficiently large constant, we have the following,
\begin{align*}
	\Prob{N_v \cap \IS_h \ne \emptyset} &
	\ge
	\Prob{\bigcup_{u \in N^*_v} A_u}
	\ge
	0.016 \delta - \frac1{n^2}
	\ge
	0.01 \delta
	\enspace.
	\qedhere
\end{align*}
\end{proof}
}
\APPENDMISHASH

For a hash function $h \in \mathcal{H}$, we will denote $N_h := \{v \in \B: N_v \cap \IS_h \ne \emptyset\}$, i.e., the set of nodes to be removed if the independent set induced by hash function $h$ is chosen. By Lemma~\ref{lem:goodhash} and by the definition of $\B$ (which ensures $\sum_{v \in \B} d(v) \ge \frac\delta 2|E|$),
\begin{align*}
    \Exp{\sum_{v \in N_h}d(v)}
        &\ge
    \sum_{v \in \B} d(v) \cdot \Prob{v \in N_h}
        \ge
    0.01  \delta \sum_{v \in \B} d(v)
        \ge
    \frac{\delta^2 |E|}{200}
        \enspace.
\end{align*}

By the method of conditional expectations (cf. Section \ref{sec:condexp}), using $q_{x_v}(h) = d(v) \textbf{1}_{\text{$x_v$ is good for $h$}}$, we can select a hash function $h$ with $\sum_{v \in N_h} d(v) \ge \frac{ \delta^2}{200} \cdot |E|$. We then add the independent set $\IS := \IS_h$ to our output, and remove $\IS$ and $N(\IS)$ from the graph. In doing so, we remove at least $\frac12 \sum_{v \in N_h} d(v) \ge \frac{\delta^2|E|}{400}$ edges from the graph.

\subsection{Completing the proof of Theorem \ref{thm:MIS}: finding MIS}

Now we are ready to complete the proof of Theorem \ref{thm:MIS}. Our algorithm returns an MIS in at most $\log_\frac{1}{1-\delta^2/200}|E| = O(\log n)$ stages, each stage of the algorithm, as described above, taking a constant number of rounds in \MPC. The space required is dominated by storing the input graph $G$ ($O(m)$ total space) and collecting node neighborhoods when finding an independent set ($O(n^{8\delta})$ space per machine, $O(n^{1+8\delta})$ total space). Setting $\delta = \frac{\eps}{8}$ allows us to conclude Theorem \ref{thm:MIS} by obtaining that for any constant $\eps > 0$, MIS can be found deterministically in \MPC in $O(\log n)$ rounds, using $O(n^{\eps})$ space per machine and $O(m+n^{1+\eps})$ total space.
\qed



\section{MIS and maximal matching in $O(\log\Delta+\log\log n)$ \MPC rounds}
\label{sec:logD-rounds}

While our main efforts in Sections \ref{sec:max-matching}--\ref{sec:MIS} was on achieving deterministic MIS and maximal matching algorithms running in $O(\log n)$ \MPC rounds, with some additional work we can improve them for graphs with low maximum degree $\Delta$, obtaining deterministic $O(\log\Delta + \log\log n)$-round \MPC algorithms, where $\Delta$ is the maximum degree in the input graph. In the following, we will present our algorithms for MIS. Obtaining similar bounds for maximal matching 
will be done by reducing to the MIS problem. Indeed, these well-known reductions dating back to Luby \cite{Luby86} can be efficiently implemented in the low-space \MPC setting, provided that $\Delta$ is sufficiently small: $O(n^{\epsilon/c})$ for small constant~$c$. 
%
%

Let us first observe that it is sufficient to consider the case where $\Delta \le n^{\delta}$, as otherwise we can use the $O(\log n)$ algorithm from Theorem \ref{thm:MIS} to achieve an $O(\log\Delta)$-round \MPC algorithm.

Assuming that $\Delta \le n^{\delta}$ (where $\delta$ is, as before, a constant sufficiently smaller than \eps) we mimic Luby's algorithm for MIS (cf. Algorithm~\ref{alg:MIS} in Section \ref{sec:MIS}). We group the rounds of the algorithm into \emph{stages}, each stage consisting of $\ell = O(\delta \log_{\Delta} n)$ \emph{phases}. As in Algorithm~\ref{alg:MIS}, in each phase $i$, we find an independent set $\IS_i$ in the current graph $G_{i-1}$ (after $i-1$ phases) and remove $\IS_i \cup N(\IS_i)$ to obtain a new graph $G_i$. This process terminates when $G_i$ is the empty graph, in which case the union of all independent sets found is an MIS. We will show that after an $O(\log\log n)$ rounds preprocessing, each stage (consisting of $\ell$ phases) can be implemented deterministically in a constant number of rounds in the \MPC model, using $O(n^{\eps})$ space per machine and $O(n^{1+\eps})$ total space. Then we will show that the algorithm terminates in $O(\log n)$ phases, and hence in $O(\log\Delta)$ stages.

We will first design a randomized algorithm and then show how the features of our algorithm lead to a deterministic algorithm, following the approach presented in earlier sections.


\subsection{Randomized \MPC algorithm with smaller seed}
\label{sec:logD-rounds-random}

We begin with a presentation of a simple randomized MIS algorithm that uses random seeds of $O(\log \Delta)$ bits for each step in Luby's algorithm. We later show how to adjust this randomized algorithm to be efficiently implemented deterministically. Let us first observe that in Luby's algorithm we only need independence between nodes that are $2$-hop apart. This allows us to reduce the seed length from $O(\log n)$ to $O(\log \Delta)$ by assigning every node a new name with only $O(\log \Delta)$ bits, such that every $2$-hop neighbors are given distinct names. This task can be stated as a vertex coloring in the graph $G^2$. For every graph $G$ with maximum degree $\Delta$, Linial \cite{Linial92} showed an $O(\Delta^2)$-coloring using $O(\log^*n)$ rounds in the \local model, and Kuhn \cite{Kuhn09} extended this algorithm and show that it can implemented also in the \congest model within the same number of rounds.

In our context, since we wish to color the graph $G^2$, we need compute a $O(\Delta^4)$-coloring $\chi$ on $G^2$. As $\Delta \le n^{\delta}$ (for some constant $\delta$), the $2$-hop neighbors of each node can be stored on a single machine, and thus we can efficiently simulate the coloring algorithm of \cite{Kuhn09} within $O(\log^* n)$ rounds. Once the $O(\Delta^4)$-coloring is computed, the algorithm will simulate the random choices of Luby's algorithm using a family $\mathcal{H}^*$ of pairwise independent hash functions (as in Lemma~\ref{lem:hash}) mapping the $O([\Delta^4])$ colors to number in $[\Delta^4]$. The benefit of using this family of hash functions $\mathcal{H}^*$ is that one can pick a function $h \in \mathcal{H}^*$ i.u.r. using only $O(\log \Delta)$ bits (rather than $O(\log n)$ bits).
%
%

\paragraph{The randomized MIS algorithm.}
We will find an MIS stage by stage. Fix a stage and consider its $\ell = O(\delta \log_{\Delta}n)$ phases. These phases are identical in effect to those of Algorithm \ref{alg:MIS}, except that the hash functions map color classes rather than individual nodes. Since behavior is independent of nodes outside of 2-hop neighborhoods, and we have ensured that no nodes within distance two are colored the same, we can show via the same method that we remove a constant fraction of $G_{i-1}$'s edges each phase. Therefore, after $O(\log n)$ phases we have removed all edges from the graph and thus have found an MIS. So, we require only $O\left(\frac{\log n}{\delta\log_{\Delta}n}\right) = O(\log\Delta)$ stages.
%


\subsection{Deterministic MIS algorithm in $O(\log\Delta + \log\log n)$ \MPC rounds}
\label{sec:logD-rounds-deterministic}

In this section, we show how to derandomize the algorithm from Section \ref{sec:logD-rounds-random} in only $O(1)$ rounds per stage, with $O(\log\log n)$ total overhead; that is, in $O(\log\Delta + \log\log n)$ \MPC rounds overall, completing the proof of Theorem~\ref{thm:MM-MIS-Delta}.

\subsubsection{Basic tools}

\paragraph{Gathering the $r$-th hop neighborhood.}
Let $r = O(\delta\log_{\Delta}n)$. Let $G$ be the input graph. Since the $r$-th hop neighborhood of any single node has at most $\Delta^r = O(n^{\delta})$ nodes, in $O(\log r)$ rounds we can collect the $r$-th hop neighborhood of each node $u$ in $G$ into the machine that stores $u$, ensuring that the space used on any single machine is $\spac = O(n^{\eps})$. With our choice of $r$, this operation can be implemented in $O(\log\log n)$ \MPC rounds.

\paragraph{Hash functions.}
We are using a family $\mathcal{H}^*$ of pairwise independent hash functions mapping $O([\Delta^4])$ to $[\Delta^4]$, as in Lemma \ref{lem:hash}. Each hash function from $\mathcal{H}^*$ requires $O(\log\Delta)$ bits to specify. Therefore, in a single stage, if we are using a sequence of hash functions $h_1, \dots, h_{\ell}$, then the total number of seeds used is $\ell \cdot O(\log\Delta) = O(\delta \cdot \log_{\Delta}n \cdot \log\Delta) = O(\delta \cdot \log n)$. Hence, all possible sequences $h_1, \dots, h_{\ell}$ of $\ell$ hash functions from $\mathcal{H}^*$ can be stored and evaluated on any single machine (with space $\spac = O(n^{\eps})$).

\subsubsection{Deterministic \MPC implementation}

Now, we are ready to present a deterministic \MPC algorithm that in $O(\log\Delta)$ stages finds MIS, and (after an $O(\log\log n)$-rounds preprocessing) requires only a constant number of rounds per stage. The algorithm is a derandomization of the algorithm presented in Section \ref{sec:logD-rounds-random}.

We will consider only the case $\Delta \le n^{\delta}$, and we assume that there are $O(n)$ machines available, each machine with space $\spac = O(n^{\eps})$; so the total space is $O(n^{1+\eps})$.

\paragraph{Preprocessing.}
In order to implement our scheme in the \MPC model, we first allocate all nodes and edges among the machines. We allocate each node to a separate machine; since $\spac = O(n^\eps)$ and $\Delta \le n^{\delta}$, we can distribute the edges so that each node $v$ has all its adjacent edges on the machine designated to $v$.

Next, we run Linial's algorithm to find a $O(\Delta^4)$-coloring of $G^2$. As discussed earlier, this can be done in $O(\log^*n)$ rounds in \MPC.

Next, for every node $v$, we distribute to the machine designated to $v$ the entire family $\mathcal{H}^*$ of pairwise independent hash functions mapping $O([\Delta^4])$ to $[\Delta^4]$, as discussed above.

Next, for every node $v$, we collect its $r$-th hop neighborhood into the machine that stores $u$, ensuring that the space used on any single machine is $\spac = O(n^{\eps})$. With our choice of $r = O(\delta\log_{\Delta}n)$, this operation can be implemented in $O(\log\log n)$ \MPC rounds.

\paragraph{Implementing a stage in $O(1)$ \MPC rounds.}
Consider a stage of the algorithm consisting of $\ell = O(\delta\log_{\Delta}n)$ phases; we set $r = 2 \ell$. Let $G_0$ be the graph at the beginning of the stage. Let $G_i$ denote the graph obtained after the $i$-th phase, $0 \le i \le \ell$; we will construct $G_1, \dots, G_{\ell}$ iteratively. We assume that each node $u$ knows its $r$-th hop neighborhood in $G_0$, stored on the machine associated with~$u$.

The randomized algorithm takes a sequence of hash functions $h_1, \dots, h_{\ell}$ i.u.r. from $\mathcal{H}^*$, and in each phase $1 \le i \le \ell$, for every node $v$ in $G_{i-1}$, it adds $v$ to $\IS_i$ iff $h_i(\chi(v)) < h_i(\chi(u))$ for all $u \sim_{G_{i-1}} v$; the new graph $G_i$ is obtained from $G_{i-1}$ by removal of $\IS_i \cup N(\IS_i)$.

To derandomize this algorithm, each node considers all possible sequences of hash functions $h_1, \dots, h_{\ell}$ from $\mathcal{H}^*$. The number of such sequences is $|\mathcal{H}^*|^\ell$, and hence they can be represented by $\ell \cdot O(\log\Delta) = O(\delta \log n)$ seeds, and be stored and evaluated on a single machine.

We go through all sequences of hash functions $\mathfrak{h} = \langle h_1, \dots, h_{\ell} \rangle$ from $\mathcal{H}^*$ and find independent sets $\IS_1, \dots, \IS_{\ell}$ for each such sequence; let $\IS^{\langle \mathfrak{h} \rangle}$ be the union of the independent sets found in a given stage for the sequence of hash functions $\mathfrak{h}$. Let $G^{\langle \mathfrak{h} \rangle}$ be the graph $G_{\ell}$ at the end of that stage obtained for the sequence of hash functions $\mathfrak{h}$.

At least one sequence $\mathfrak{h} = \langle h_1, \dots, h_{\ell} \rangle$ ensures that the resulting graph $G^{\langle \mathfrak{h} \rangle}$ has at most $c_{phase}^\ell \cdot |E_0|$ edges, where $c_{phase}$ is the constant fraction that we can guarantee we remove each phase. Our algorithm will select the sequence $\mathfrak{h}_0$ that minimizes the number of edges of the resulting graph $G^{\langle \mathfrak{h}_0 \rangle}$, and will output to the next stage the resulting graph $G^{\langle \mathfrak{h}_0 \rangle}$ and the obtained independent set~$\IS^{\langle \mathfrak{h}_0 \rangle}$.

Our central observation is that for a node $v$ in $G_0$, for a fixed sequence $h_1, \dots, h_{\ell}$ of hash functions from $\mathcal{H}^*$, one can decide whether the algorithm will select $v$ to be in $\IS_j$ or in $N(\IS_j)$ for some $1 \le j \le \ell$, solely on the basis of the $(2\ell)$-th hop neighborhood of $v$ in $G_0$. Indeed, to detect whether $v \in \IS_1 \cup N(\IS_1)$, node $v$ must only check whether $h_1(\chi(v)) < h_1(\chi(u))$ for all $u \sim_{G_0} v$, in which case $v \in \IS_1$, and whether for any $u \sim_{G_0} v$ it holds that $h_1(\chi(u)) < h_1(\chi(w))$ for all $w \sim_{G_0} u$, in which case $u \in \IS_1$, and hence $v \in N(\IS_1)$. Similarly, by induction, to determine whether $v \in \IS_j$ or $v \in N(\IS_j)$ for some $1 \le j \le \ell$, one only has to know the $(2j)$-th hop neighborhood of $v$ in $G_0$. Therefore, if node $v$ has its $(2\ell)$-th hop neighborhood in $G_0$ stored on the machine associated with $v$, $v$ can determine on its own machine whether $v \in \IS_j$ or $v \in N(\IS_j)$ for some $1 \le j \le \ell$. Hence, $v$ can also determine if it is in $G_{\ell}$.

Therefore, for every node $v$, we run on the machine storing $v$ the algorithm above: for every sequence $\mathfrak{h}$ of hash functions from $\mathcal{H}^*$, check whether $v \in \IS^{\langle \mathfrak{h} \rangle}$ and compute its degree in~$G^{\langle \mathfrak{h} \rangle}$.

As the result, for any sequence $\mathfrak{h} = \langle h_1, \dots, h_{\ell} \rangle$ of hash functions from $\mathcal{H}^*$, every node $v$ in $G_0$ knows its degree $d_{G^{\langle \mathfrak{h} \rangle}}(v)$ in $G^{\langle \mathfrak{h} \rangle}$ and whether $v \in \IS^{\langle \mathfrak{h} \rangle}$. Therefore, using the aggregation approach from Lemma \ref{lem:comm}, in a constant number of \MPC rounds one can find a sequence $\mathfrak{h}_0$ of hash functions from $\mathcal{H}^*$ that minimizes that number of edges in $G^{\langle \mathfrak{h}_0 \rangle}$. This sequence of hash functions is used to define the independent set $\IS^{\langle \mathfrak{h}_0 \rangle}$ found in a given stage. Notice that the resulting graph $G^{\langle \mathfrak{h}_0 \rangle}$ has at most $c_{phase}^\ell \cdot |E_0|$ edges.

\paragraph{Maintaining the $r$-th hop neighborhood at the beginning of stage.}
In our algorithm, we require that at the beginning of each stage, each node $u$ knows its $r$-th hop neighborhood in the current graph, and stores it on the machine associated with $u$. We have discussed how to use a preprocessing, to obtain it at the beginning of the algorithm. Now, suppose that a node $u$ knows its $r$-th hop neighborhood at the beginning of a stage in graph $G_0$; we will show how to ensure that $u$ knows its $r$-th hop neighborhood at the end of that stage, in graph $G_{\ell}$. Notice that if $\IS^*$ is an independent set found in that stage, then $G_{\ell}$ is obtained from $G_0$ by removing $\IS^* \cup N(\IS^*)$. Therefore, to find the $r$-th hop neighborhood of $u$ in $G_{\ell}$, it suffices that $u$ knows all nodes in its $r$-th hop neighborhood in $G_0$ that are in $\IS^* \cup N(\IS^*)$. But this can be easily ensured by asking every node $v \in \IS^* \cup N(\IS^*)$ to send this information to all nodes in its $r$-th hop neighborhood in $G_0$; this can be done in a single \MPC round.

With this, we can summarize our discussion in this section.

\begin{lemma}
Let $\Delta \le n^{\delta}$. Consider an \MPC with $\machines = O(n)$ machines, $\spac = O(n^{\eps})$ space per machine. After a deterministic preprocessing taking $O(\log\log n)$ \MPC rounds, one can run the algorithm to find an MIS in a sequence of stages, such that (i) each stage can be implemented deterministically in a constant number of \MPC rounds, and (ii) each stage reduces the number of edges by at least a factor of $c_{phase}^\ell$, where $\ell = O(\delta \log_{\Delta}n)$.
\qed
\end{lemma}

This immediately bounds the number of stages by $O(\log\Delta)$, yielding Theorem~\ref{thm:MM-MIS-Delta} for MIS.
\qed

\paragraph{Extension to maximal matching.}
As mentioned earlier, the same approach can be used to design a deterministic maximal matching \MPC algorithm in $O(\log\Delta + \log\log n)$ rounds, using $O(n^{\eps})$ space per machine and $O(m+n^{1+\eps})$ total space; since we are able to gather neighborhoods of super-constant radius onto machines, we can perform maximal matching by simulating MIS on the line graph.
This completes the proof of Theorem~\ref{thm:MM-MIS-Delta}.
\qed

\junk{
\paragraph{Maximal matching.}

For maximal matching one can use the standard reduction of performing MIS on the line graph, when $\log \Delta = o(\log n)$.
Here, one must assign machines to determine whether nodes of the input graph are removed rather than
edges, in order to obtain $O(m + n^{1+\epsilon})$ rather than $O(m n^{\epsilon})$ total space. That is, machines representing nodes in the original graph can collect balls large enough to simulate the MIS algorithm on all of the adjacent edges, and so we must only collect balls per node in $G$ rather than per edge.
%
}



%


\section{Conclusions}
In this paper we study the power of deterministic algorithms on the nowadays classical model of parallel computations --- the \emph{Massively Parallel Computations (\MPC) model} --- on the example of two fundamental graph problems: maximal matching and maximal independent set. We develop a new deterministic method for graph sparsification while maintaining some desired properties and apply it to design the first $O(\log \Delta+\log\log n)$-round fully scalable deterministic \MPC algorithms for maximal matching and MIS (Theorem~\ref{thm:MM-MIS-Delta}). In combination with previous results, this also gives the first deterministic $O(\log \Delta)$-round \congc algorithms for maximal matching and MIS. 
We expect our method of derandomizing the sampling of a low-degree graph while maintaining good properties will prove useful for derandomizing many more problems in low space or limited bandwidth models (e.g., the \congest model).


\newcommand{\Proc}{Proceedings of the~}
\newcommand{\DISC}{International Symposium on Distributed Computing (DISC)}
\newcommand{\FOCS}{IEEE Symposium on Foundations of Computer Science (FOCS)}
\newcommand{\ICALP}{Annual International Colloquium on Automata, Languages and Programming (ICALP)}
\newcommand{\IPCO}{International Integer Programming and Combinatorial Optimization Conference (IPCO)}
\newcommand{\ISAAC}{International Symposium on Algorithms and Computation (ISAAC)}
\newcommand{\JACM}{Journal of the ACM}
\newcommand{\NIPS}{Conference on Neural Information Processing Systems (NeurIPS)}
\newcommand{\OSDI}{Conference on Symposium on Opearting Systems Design \& Implementation (OSDI)}
\newcommand{\PODS}{ACM SIGMOD Symposium on Principles of Database Systems (PODS)}
\newcommand{\PODC}{ACM Symposium on Principles of Distributed Computing (PODC)}
\newcommand{\SICOMP}{SIAM Journal on Computing}
\newcommand{\SIROCCO}{International Colloquium on Structural Information and Communication Complexity}
\newcommand{\SODA}{Annual ACM-SIAM Symposium on Discrete Algorithms (SODA)}
\newcommand{\SPAA}{Annual ACM Symposium on Parallel Algorithms and Architectures (SPAA)}
\newcommand{\STACS}{Annual Symposium on Theoretical Aspects of Computer Science (STACS)}
\newcommand{\STOC}{Annual ACM Symposium on Theory of Computing (STOC)}

\bibliographystyle{plain}
\bibliography{derand}


\junk{
\clearpage
\onecolumn
\appendix
\begin{center}\huge\bf Appendix\end{center}

Because of space limitations, some explanations and some parts of our analysis are deferred to Appendix.

\section{\mbox{Comparison with earlier work on derandomization by Censor-Hillel et al.\ \cite{CPS17}}}
\label{subsubsec:comp-Censor-Hillel-et-al}
\APPENDCOMPARISONWITHCENSORHILLELETAL

\section{Invariants satisfied by $\E_j$: Proofs of Lemmas \ref{lemma-inv-i}--\ref{lemma-inv-ii} from Section \ref{subsubsec:Properties-of-Ej}}
\label{app:matching}
\APPENDMATCHINGINVONE
\APPENDMATCHINGINVTWO

\section{Proof of Lemma \ref{lem:egoodsample} from Section \ref{subsec:CE->M}}
\label{app:egoodsample}

As claimed in Section \ref{subsec:CE->M}, we prove now Lemma \ref{lem:egoodsample}, a useful property analogous to Lemma \ref{lem:Luby} in $\CE$.

\APPENDMATCHINGGOODSAM

\section{Proof of Lemma \ref{lem:egoodhash} from Section \ref{subsec:CE->M}}
\label{app:TwoHop}

\APPENDMATCHINGTWOHOP

\section{Invariants satisfied by $\J_j$: Proofs of Lemmas \ref{lemma-mis-inv-i}--\ref{lemma-mis-inv-ii} from Section \ref{subsub:properties-of-Jj-satisfying-invariant}}
\label{app:mis}
\APPENDMISINVONE
\APPENDMISINVTWO

\section{Proof of Lemma \ref{lem:goodsample} from Section \ref{subsec:->IS}}
\label{app:goodsample}

\APPENDANOTHERONE

\section{Proof of Lemma \ref{lem:goodsample2} from Section \ref{subsec:->IS}}
\label{app:goodsample2}

\APPENDMISSAM

\section{Proof of Lemma \ref{lem:goodhash} from Section \ref{subsec:->IS}}
\label{app:goodhash}

\APPENDMISHASH

%
}

\end{document}